\renewcommand{\vec}[1]{\bm{#1}}
\begin{document}

\title*{Grain Growth and the Effect of Different Time Scales}
\author{Katayun Barmak, Anastasia Dunca, Yekaterina Epshteyn, Chun
  Liu and Masashi Mizuno}
\authorrunning{K. Barmak, A. Dunca, Y. Epshteyn, C. Liu and M. Mizuno} 
\institute{ Katayun Barmak
\at Columbia University, New York, NY 10027, USA,
\email{kb2612@columbia.edu} \and Anastasia Dunca\at West High School,
Salt Lake City, UT 84103, USA,
  \email{anastasia.d960@slcstudents.org}
\and Yekaterina Epshteyn \at University of Utah, Salt Lake City, UT
84112, USA,
\email{epshteyn@math.utah.edu} \and Chun Liu \at Illinois Institute of
Technology,  Chicago, IL 60616, USA,
\email{cliu124@iit.edu} \and Masashi Mizuno \at Nihon University,
Tokyo 101-8308 JAPAN, \email{mizuno@math.cst.nihon-u.ac.jp} }
%
%
\maketitle

\abstract*{}
{\bf Abstract:} Many technologically useful materials are polycrystals
composed of a myriad of small monocrystalline grains
separated by grain boundaries. Dynamics of grain boundaries play a crucial
role in determining the grain structure and defining the materials properties across multiple
scales. In this work, we consider two models for the motion of grain
boundaries with the dynamic lattice misorientations and  the triple junctions drag, and we conduct extensive numerical study of the
models, as well as present relevant experimental results of grain
growth in thin films.


\begin{keywords}
 Grain growth, grain boundary network, texture development,
 lattice misorientation, triple junction drag, energetic variational
 approach, geometric evolution equations, thin films, Grain Boundary
 Character Distribution (GBCD)\\
{\bf AMS:} 74N15; 35R37; 53C44; 49Q20
\end{keywords}


\section{Introduction}

Many technologically useful materials are polycrystals composed of a myriad of small monocrystalline grains
separated by grain boundaries, see Figures~\ref{fig:lattice1}-\ref{fig:lattice2}.  Dynamics of grain boundaries play a crucial role in determining the grain structure and defining materials properties across multiple
scales. Experimental and computational studies give useful insight
into the geometric features and the crystallography of the grain
boundary network in polycrystalline
microstructures.
\par In this work, we consider two models for the motion of grain
boundaries in a planar network with dynamic lattice misorientations
and with drag of triple junctions. A
classical model for the motion of  grain
boundaries in polycrystalline materials is growth by curvature, as a local
evolution law for the grain boundaries due to Mullins and Herring~\cite{doi:10.1007-978-3-642-59938-5_2,
doi:10.1063-1.1722511,doi:10.1063-1.1722742}, and see work on mean curvature flow, e.g., \cite{MR1100211, MR2024995,MR1100206,MR2815949,MR3556529}.  In
addition, to have a well-posed model for the evolution of the grain
boundary network, one has to impose a separate condition at the triple
junctions where three grain boundaries meet \cite{MR1833000}.
A conventional choice is the
Herring condition which is the natural boundary condition at the
triple points for the
grain boundary network at equilibrium, \cite{MR0485012,MR1240580,MR1833000,MR3612327}, and reference
therein. There are several studies
about grain boundary motion by mean curvature with the Herring
condition at the triple junctions, see for instance
\cite{MR1833000,MR2075985,
  DK:BEEEKT,DK:gbphysrev, MR2772123, MR3729587, BobKohn,
  barmak_grain_2013, MR3316603, MR2561958, MR2272185,
  doi:10.1023-A:1008781611991}.
\par A standard assumption in the theory and simulations of grain
growth is to address only the evolution of the grain boundaries/interfaces themselves and not the
dynamics of the triple junctions. However, recent experimental work
indicates that the motion of the triple junctions together with the anisotropy of
the grain interfaces can have a significant effect on the resulting grain
growth
\cite{barmak_grain_2013}, see work on molecular dynamics simulation
\cite{doi:10.1023-A:1008781611991,doi:10.1016-S1359-6454(01)00446-3}, a recent work on
dynamics of line
defects \cite{Zhang2017PhysRevLett, Zhang2018JMPS,Thomas2019PNAS} and
a relevant work on numerical analysis of a vertex model \cite{doi:10.1137/140999232}.
The current work is a continuation of our previous work
\cite{Katya-Chun-Mzn1, Katya-Chun-Mzn2}, where we proposed a new model for the
evolution of planar grain boundaries, which takes into account dynamic
lattice misorientations (evolving anisotropy of grain boundaries or
``grains rotations'') and
the mobility of the triple
junctions. In \cite{Katya-Chun-Mzn1, Katya-Chun-Mzn2}, using the
energetic variational approach, we derived a system of geometric differential equations to describe
the motion of such grain boundaries, and we established a local well-posedness result, as well as large
time asymptotic behavior for the model. In addition, in
\cite{Katya-Chun-Mzn2}, similar to our
previous work on Grain Boundary Character Distribution, e.g. \cite{DK:gbphysrev,MR3729587} we
conducted some numerical experiments for the 2D grain boundary network in order to 
illustrate the effect of time scales, e.g. of the mobility of triple junctions and of the
dynamics of misorientations on how the grain boundary system decays energy and coarsens with
time (note, in \cite{Katya-Chun-Mzn2}, we studied numerically only the
model with curved grain boundaries). 
Our current goal  is to conduct extensive numerical studies of two
models, a model with curved grain boundaries and a model without
curvature/''vertex model'' of planar grain
boundaries network with the dynamic lattice misorientations and with
the drag of
triple junctions \cite{Katya-Chun-Mzn1, Katya-Chun-Mzn2} and to
further understand the effect of relaxation time scales, e.g.  of the
curvature of grain boundaries, mobility of triple junctions and dynamics of misorientations on how the grain boundary system decays energy and coarsens with
time. We also present and discuss relevant experimental results of
grain growth in thin films.
\par The paper is organized as follows. In
Sections~\ref{sec:1}-\ref{sec:2}, we discuss and review important details and
properties of the two models for grain boundary motion. In
Sections~\ref{sec:3a}, we present and discuss relevant experimental findings
of grain growth in thin films, and in Section~\ref{sec:3b} we conduct extensive numerical studies of the
grain growth models.

\section{Review of the Models with Single Triple Junction}
\label{sec:1}
In this paper we use recently developed models for the evolution of the planar grain
boundary network with dynamic lattice misorientations and
triple junction drag \cite{Katya-Chun-Mzn1, Katya-Chun-Mzn2} to study
the effect of time scales of curvature of grain boundaries, dynamics
of the triple junctions and dynamics of the misorientations on grain growth. Thus, in this section for the reader's convenience,
we
first review the models which were originally
developed in \cite{Katya-Chun-Mzn1, Katya-Chun-Mzn2}.
\begin{figure}
\centerline{\includegraphics[width=2.0in, angle=0]{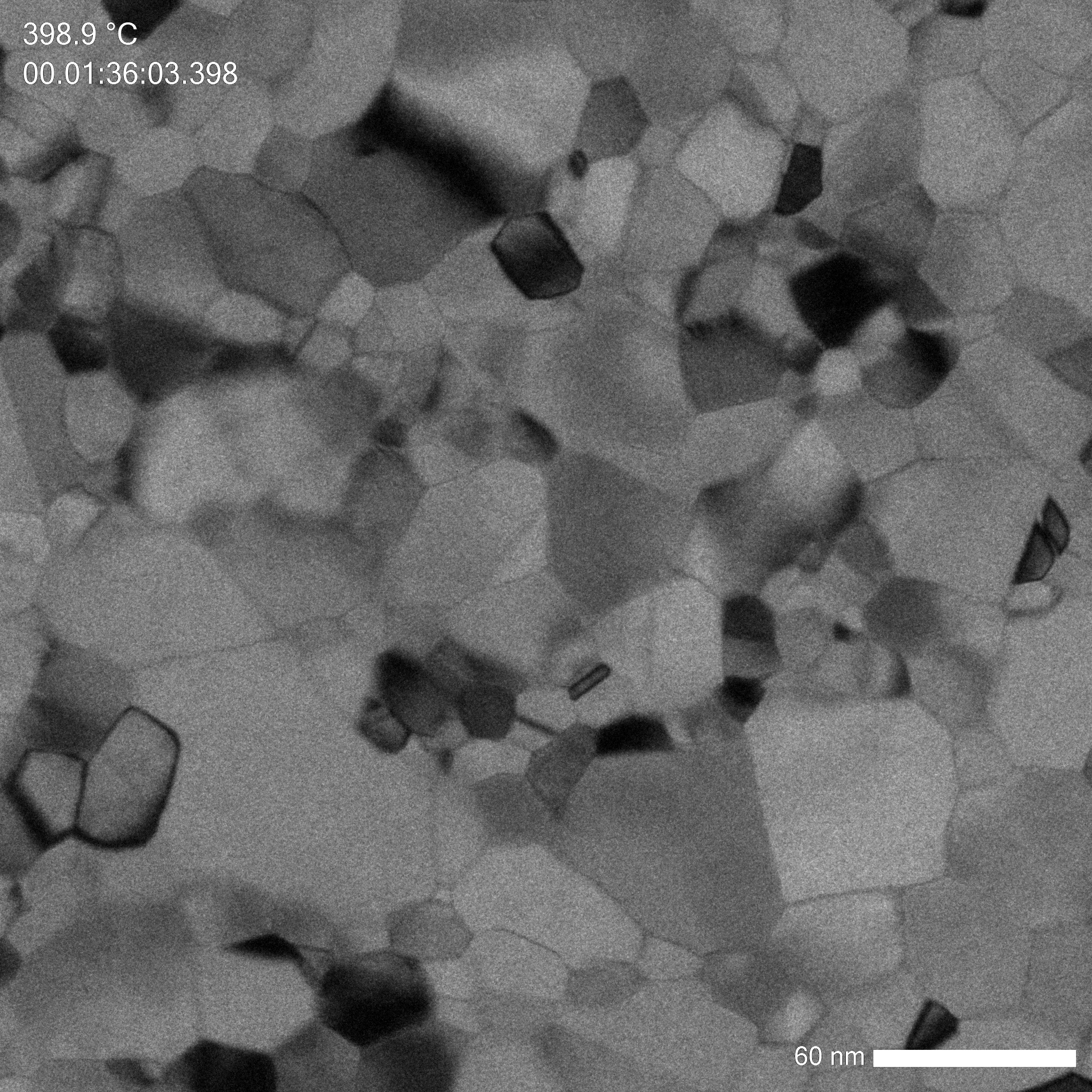}}
\caption{Experimental microstructure: drift-corrected bright-field image of a 50 nm-thick Pt film from an instance of the in-situ grain growth experiment in the transmission electron microscope.}
\label{fig:lattice1}
\end{figure}
\begin{figure}
\vspace{-2.0cm}
\centerline{\includegraphics[width=2.7in,
  angle=0]{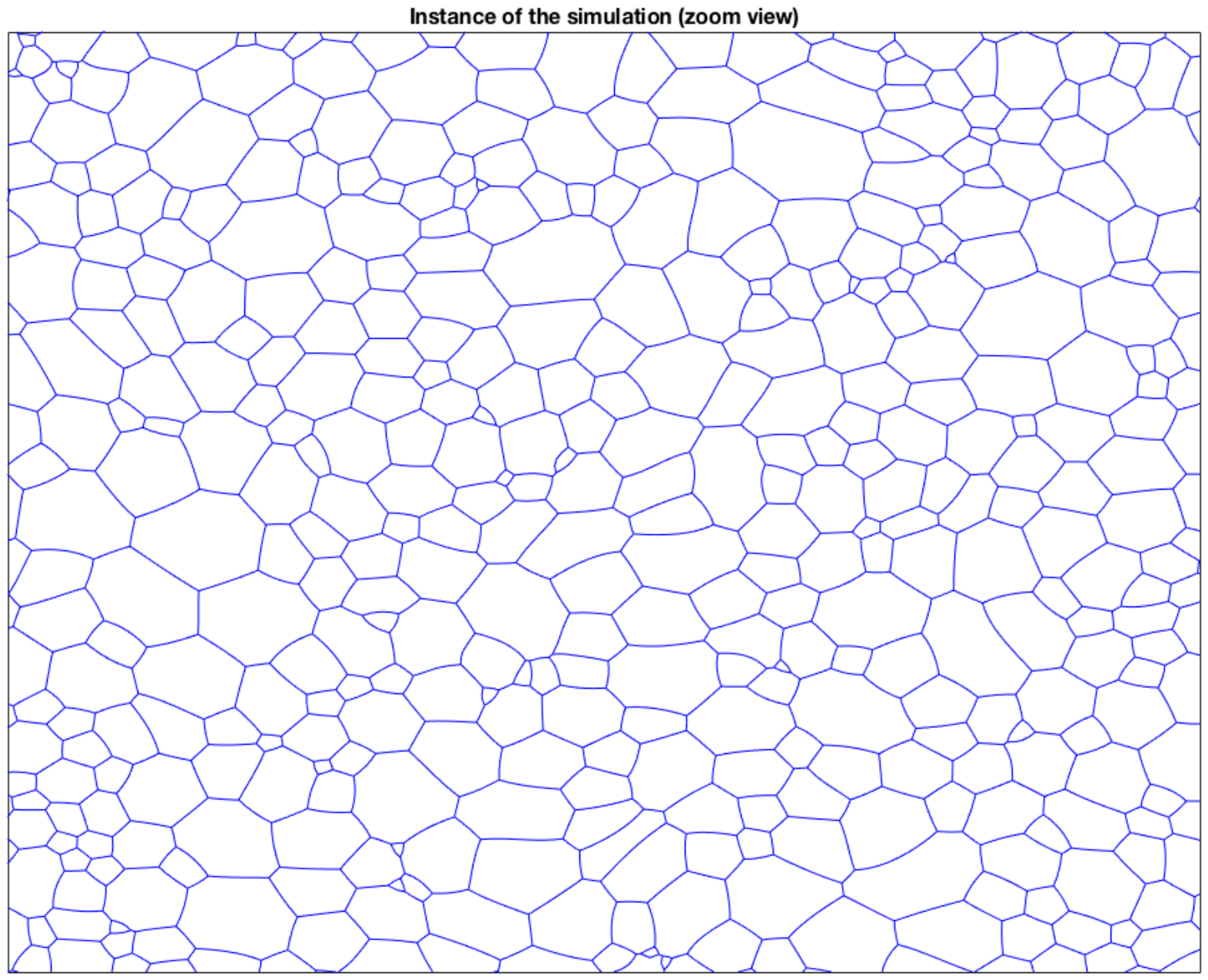}\includegraphics[width=2.7in,
  angle=0]{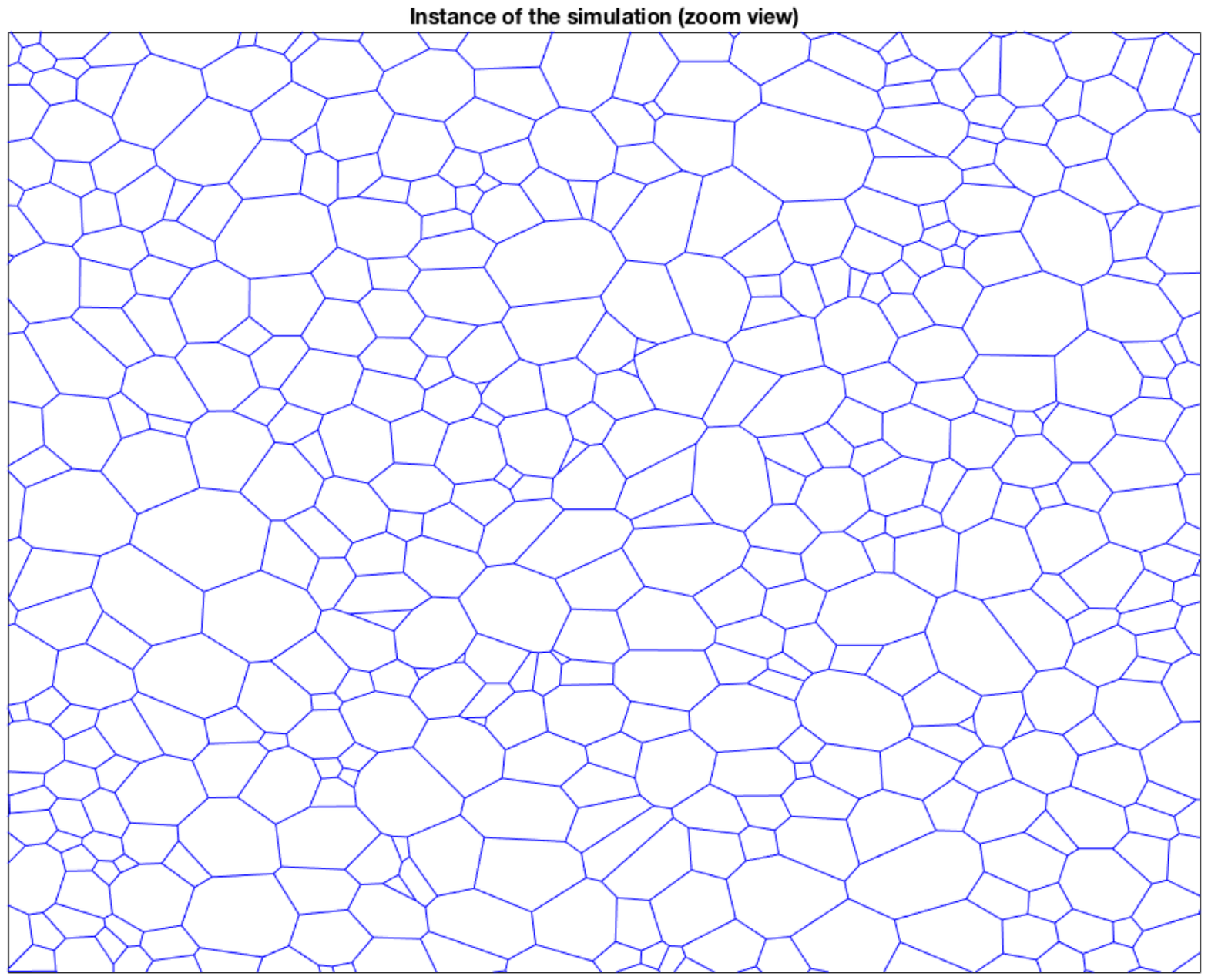}}
\vspace{-2.0cm}
\caption{Left figure - microstructure from simulation, model with
  curvature and finite mobility of the triple junctions
  \eqref{eq:2.20}: example of a time instance during the simulated
  evolution of a cellular network (zoom view). 
Right figure - microstructure from simulation, model without
  curvature \eqref{eq:1.1}: example of a time  instance
  during the simulated evolution of a cellular network (zoom view).}
\label{fig:lattice2}
\end{figure}

 Let us first recall the system for a single triple junction which was
 derived in
 \cite{Katya-Chun-Mzn1}. The total grain boundary energy for such model is
 \begin{equation}
  \label{eq:2.3}
   \sum_{j=1}^3
   \sigma(\Delta^{(j)}\alpha)|\Gamma_t^{(j)}|.
 \end{equation}
 Here, $\sigma:\mathbb{R}\rightarrow\mathbb{R}$ is a given surface tension,
 $\alpha^{(j)}=\alpha^{(j)}(t):[0,\infty)\rightarrow\mathbb{R}$ is
 time-dependent orientations of the grains,
 $\theta=\Delta^{(j)}\alpha:=\alpha^{(j-1)}-\alpha^{(j)}$ is a lattice
 misorientation of the grain boundary $\Gamma^{(j)}_t$ (difference in the orientation between two neighboring grains that share the grain boundary), and
 $|\Gamma_t^{(j)}|$ is the length of $\Gamma_t^{(j)}$.  As a result of
 applying the maximal dissipation principle, in \cite{Katya-Chun-Mzn1},
 the following model was derived,
\begin{equation}
 \label{eq:2.20}
  \left\{
  \begin{aligned}
   v_n^{(j)}
   &=
   \mu
   \sigma(\Delta^{(j)}\alpha)
   \kappa^{(j)},\quad\text{on}\ \Gamma_t^{(j)},\ t>0,\quad j=1,2,3, \\
   \frac{d\alpha^{(j)}}{dt}
   &=
   -\gamma
   \Bigl(
   \sigma_\theta(\Delta^{(j+1)}\alpha)
   |\Gamma_t^{(j+1)}|
   -
   \sigma_\theta(\Delta^{(j)}\alpha)
   |\Gamma_t^{(j)}|
   \Bigr)
   ,\quad
   j=1,2,3,
   \\
   \frac{d\vec{a}}{dt}(t)
   &=
   \eta\sum_{k=1}^3
   \sigma(\Delta^{(k)}\alpha)
   \frac{\vec{b}^{(k)}(0,t)}{|\vec{b}^{(k)}(0,t)|},
   \quad t>0, \\
   \Gamma_t^{(j)}
   &:
   \vec{\xi}^{(j)}(s,t),\quad
   0\leq s\leq 1,\quad
   t>0,\quad
   j=1,2,3, \\
   \vec{a}(t)
   &=
   \vec{\xi}^{(1)}(0,t)
   =
   \vec{\xi}^{(2)}(0,t)
   =
   \vec{\xi}^{(3)}(0,t),
   \quad
   \text{and}
   \quad
   \vec{\xi}^{(j)}(1,t)=\vec{x}^{(j)},\quad
   j=1,2,3.
  \end{aligned}
  \right.
\end{equation}
In \eqref{eq:2.20}, $v_n^{(j)}$, $\kappa^{(j)}$, and
$\vec{b}^{(j)}=\vec{\xi}_s^{(j)}$ are a normal velocity, a curvature and
a tangent vector of the grain boundary $\Gamma_t^{(j)}$, respectively. Note that $s$
is not an arc length parameter of $\Gamma_t^{(j)}$, namely, $\vec{b}^{(j)}$
is \emph{not} necessarily a unit tangent
vector. The vector $\vec{a}=\vec{a}(t):[0,\infty)\rightarrow\mathbb{R}^2$ defines a position of
the triple junction (triple junctions are where three grain boundaries meet), $\vec{x}^{(j)}$ is a position of the end point of
the grain boundary. The three independent relaxation time scales
$\mu,\gamma,\eta>0$  (curvature, misorientation and 
triple junction dynamics) are regarded as
positive constants. Further, we assume in \eqref{eq:2.20}, $\alpha^{(0)}=\alpha^{(3)}$,
$\alpha^{(4)}=\alpha^{(1)}$ and $\vec{b}^{(4)}=\vec{b}^{(1)}$, for
simplicity. We also use notation $|\cdot|$ for a standard
euclidean vector norm. The complete details about model
\eqref{eq:2.20} can be found in the earlier work
\cite[Section 2]{Katya-Chun-Mzn1}.
Next, in \cite{Katya-Chun-Mzn1},  the curvature effect was relaxed,  by taking the limit
$\mu\rightarrow\infty$, and the reduced model without
curvature was derived,
\begin{equation}
 \label{eq:1.1}
 \left\{
  \begin{aligned}
   \frac{d\alpha^{(j)}}{dt}
   &=
   -
   \gamma{
   \Bigl(
   \sigma_\theta(\Delta^{(j+1)}\alpha)|\vec{b}^{(j+1)}|
   -
   \sigma_\theta(\Delta^{(j)}\alpha)|\vec{b}^{(j)}|
   \Bigr)
   }
   ,
   \quad
   j=1,2,3, \\
   \frac{d\vec{a}}{dt}(t)
   &=
   \eta
   \sum_{j=1}^3
   \sigma(\Delta^{(j)}\alpha)
   \frac{\vec{b}^{(j)}}{|\vec{b}^{(j)}|},
   \quad t>0, \\
   \vec{a}(t)+\vec{b}^{(j)}(t)
   &=
   \vec{x}^{(j)},
   \quad
   j=1,2,3.
  \end{aligned}
 \right.
\end{equation}

 In (\ref{eq:1.1}), we consider $\vec{b}^{(j)}(t)$ as a grain
 boundary. Note that, similar to (\ref{eq:2.20}),  the system of equations \eqref{eq:1.1} can also be
 derived from the energetic variational principle for the total grain boundary
 energy (\ref{eq:2.3}) (with $|\Gamma_t^{(j)}|$ replaced by
 $|\vec{b}^{(j)}|$). 
\begin{remark}
\par  {\bf a)} As was discussed in \cite{Katya-Chun-Mzn1},  the reduced
model without curvature effect (\ref{eq:1.1}) is not a
  standard ODE system. This is the ODE system where each variable is
  locally constrained. Moreover, local well-posedness result  (e.g. local
  existence result) for the
  original model (\ref{eq:2.20}) will not imply local well-posedness
  result for the reduced system  (\ref{eq:1.1}). It is not known
  if the reduced model  (\ref{eq:1.1}) is  a small perturbation of
  (\ref{eq:2.20}).
\par {\bf b)}   The reduced model (\ref{eq:1.1}) captures the dynamics of the
   orientations\\/misorientations and the triple junctions. At the same
   time, it was more accessible for the mathematical analysis than the model
   (\ref{eq:2.20}). In addition, the system (\ref{eq:1.1}) is
   a generalization to higher dimension and dynamic misorientations
   of the model from \cite{DK:gbphysrev,MR2772123}. In this paper, we will
   compare and contrast through extensive numerical studies the model
   with the curvature effect
   \eqref{eq:2.20} and the reduced model \eqref{eq:1.1}.
\end{remark}
To establish local well-posedness result for
 model \eqref{eq:1.1} in \cite{Katya-Chun-Mzn1}, the surface tension
 $\sigma$ was assumed to be  $C^3$, positive,
and minimized at $0$, namely,
\begin{equation}
 \label{eq:2.Assumption1}
  \sigma(\theta)\geq \sigma(0)>0,
\end{equation}
for $\theta\in\mathbb{R}$. In addition, it was assumed convexity of $\sigma(\theta)$, for all
$\theta\in\mathbb{R},$
\begin{equation}
 \label{eq:2.Assumption2}
  \sigma_\theta(\theta)\theta\geq0,\qquad
  \text{and}
  \quad
  \sigma_{\theta\theta}(0)>0,
\end{equation}
and
\begin{equation}
 \label{eq:2.Assumption3}
  \sigma_\theta(\theta)=0
  \ \text{if and only if}\
  \theta=0.
\end{equation}

Let us review some of the important theoretical results
established for \eqref{eq:1.1} in previous work
\cite{Katya-Chun-Mzn1,Katya-Chun-Mzn2}. First,  consider the
equilibrium state of the system \eqref{eq:1.1}, namely,
\begin{equation}
 \label{eq:1.3}
  \left\{
   \begin{aligned}
   0
   &
   =
    -
   \left(
   \sigma_\theta(\Delta^{(j+1)}\alpha_{\infty})
   |\vec{b}^{(j+1)}_{\infty}|
   -
   \sigma_\theta(\Delta^{(j)}\alpha_{\infty})
   |\vec{b}^{(j)}_{\infty}|
   \right)
   , \quad j=1, 2, 3,
   \\
   \vec{0}
   &=
   \sum_{j=1}^3
    \sigma(\Delta^{(j)}\alpha_{\infty})
   \frac{\vec{b}^{(j)}_{\infty}}{|\vec{b}^{(j)}_{\infty}|}, \\
   \vec{a}_{\infty}
   &=
   \vec{x}^{(1)}-\vec{b}^{(1)}_{\infty}
   =
   \vec{x}^{(2)}-\vec{b}^{(2)}_{\infty}
   =
   \vec{x}^{(3)}-\vec{b}^{(3)}_{\infty}.
   \end{aligned}
  \right.
\end{equation}
As in \cite{Katya-Chun-Mzn1,Katya-Chun-Mzn2},  assume,  for each $i=1,2,3$,
\begin{equation}
 \label{eq:1.4}
  \left|
   \sum_{j=1, j\neq i}^3
   \frac{\vec{x}^{(j)}-\vec{x}^{(i)}}{|\vec{x}^{(j)}-\vec{x}^{(i)}|}
  \right|
  >1.
\end{equation}
The assumption \eqref{eq:1.4} implies that fixed points $\vec{x}^{(1)}, \vec{x}^{(2)}$ and $\vec{x}^{(3)}$ can not belong to the single line. Furthermore,
\eqref{eq:1.4} is equivalent to the condition that in the triangle with vertices
$\vec{x}^{(1)}\vec{x}^{(2)}\vec{x}^{(3)}$, all three angles are less than
$\frac{2\pi}{3}$. Next, from the assumptions \eqref{eq:1.4},
\eqref{eq:2.Assumption2}-\eqref{eq:2.Assumption3},
associated
equilibrium system \eqref{eq:1.3} becomes,
\begin{equation}
 \label{eq:1.5}
  \left\{
   \begin{aligned}
    \sum_{j=1}^3
    \frac{\vec{b}_\infty^{(j)}}{|\vec{b}_\infty^{(j)}|}&=\vec{0},
    \\
    \vec{a}_\infty+\vec{b}^{(j)}_\infty
    &=
    \vec{x}^{(j)},
    \quad
    j=1,2,3.
   \end{aligned}
  \right.
\end{equation}
In \cite{Katya-Chun-Mzn1}, it was shown that the assumptions \eqref{eq:2.Assumption2}-\eqref{eq:2.Assumption3} imply
$\alpha^{(1)}_\infty=\alpha^{(3)}_\infty=\alpha^{(3)}_\infty$, hence
$\Delta^{(j)}\alpha_\infty=0$ for $j=1,2,3$ for the equilibrium system
\eqref{eq:1.3} (note, that in this case,  for the purpose of
  mathematical modeling, one can still assume a ``fictitious'' grain
  boundary with the same orientation on each side of the grain
  boundary. In addition, in this work we study the grain boundary system
  before it reaches a state of constant orientations, see Section \ref{sec:3}.)
\par We also have energy dissipation principle for the system \eqref{eq:1.1},
\begin{proposition}
 [Energy dissipation{\cite[Proposition 5.1]{Katya-Chun-Mzn1}}]
 \label{prop:3.4}
 Let $(\vec{\alpha},\vec{a})$ be a solution of \eqref{eq:1.1} on $0\leq
 t\leq T$, and let $E(t)$, given by \eqref{eq:2.3}, be the total grain
 boundary energy of the system. Then, for all $0<t\leq T$,
 \begin{equation}
  \label{eq:3.6}
   E(t)
   +
   \frac1\gamma
   \int_0^t
   \left|
   \frac{d\vec{\alpha}}{dt}(\tau)
   \right|^2
   \,d\tau
   +
   \frac1\eta
   \int_0^t
   \left|
    \frac{d\vec{a}}{dt}(\tau)
   \right|^2
   \,d\tau \\
  =
   E(0).
 \end{equation}
\end{proposition}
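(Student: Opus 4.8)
The plan is to recognize that the claimed identity \eqref{eq:3.6} is just the integrated form of the pointwise energy balance $\frac{d}{dt}E(t) = -\frac{1}{\gamma}\left|\frac{d\vec{\alpha}}{dt}\right|^2 - \frac{1}{\eta}\left|\frac{d\vec{a}}{dt}\right|^2$, so the whole task reduces to differentiating $E(t) = \sum_{j=1}^3 \sigma(\Delta^{(j)}\alpha)|\vec{b}^{(j)}|$ along a solution of \eqref{eq:1.1} and substituting the two evolution laws. Applying the product and chain rules, I would split $\frac{dE}{dt}$ into a \emph{misorientation} contribution $S_1 = \sum_{j=1}^3 \sigma_\theta(\Delta^{(j)}\alpha)|\vec{b}^{(j)}|\,\frac{d}{dt}(\Delta^{(j)}\alpha)$ and a \emph{length} contribution $S_2 = \sum_{j=1}^3 \sigma(\Delta^{(j)}\alpha)\,\frac{d}{dt}|\vec{b}^{(j)}|$, then evaluate each separately.

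For $S_1$, I would write $\frac{d}{dt}(\Delta^{(j)}\alpha) = \frac{d\alpha^{(j-1)}}{dt} - \frac{d\alpha^{(j)}}{dt}$ and collect the coefficient of each $\frac{d\alpha^{(j)}}{dt}$. Abbreviating $c^{(j)} := \sigma_\theta(\Delta^{(j)}\alpha)|\vec{b}^{(j)}|$ and using the cyclic conventions $\alpha^{(0)}=\alpha^{(3)}$, $\vec{b}^{(4)}=\vec{b}^{(1)}$ (so that $c^{(4)}=c^{(1)}$), this reindexing gives $S_1 = \sum_{j=1}^3 (c^{(j+1)} - c^{(j)})\frac{d\alpha^{(j)}}{dt}$. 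The first equation of \eqref{eq:1.1} is exactly $\frac{d\alpha^{(j)}}{dt} = -\gamma(c^{(j+1)} - c^{(j)})$, equivalently $c^{(j+1)} - c^{(j)} = -\frac{1}{\gamma}\frac{d\alpha^{(j)}}{dt}$, so substitution yields $S_1 = -\frac{1}{\gamma}\sum_{j=1}^3\left(\frac{d\alpha^{(j)}}{dt}\right)^2 = -\frac{1}{\gamma}\left|\frac{d\vec{\alpha}}{dt}\right|^2$.

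For $S_2$, the key is the local constraint $\vec{a}(t)+\vec{b}^{(j)}(t)=\vec{x}^{(j)}$ with $\vec{x}^{(j)}$ fixed, which upon differentiation gives $\frac{d\vec{b}^{(j)}}{dt} = -\frac{d\vec{a}}{dt}$ for every $j$. Hence $\frac{d}{dt}|\vec{b}^{(j)}| = \frac{\vec{b}^{(j)}}{|\vec{b}^{(j)}|}\cdot\frac{d\vec{b}^{(j)}}{dt} = -\frac{\vec{b}^{(j)}}{|\vec{b}^{(j)}|}\cdot\frac{d\vec{a}}{dt}$, and factoring the common $\frac{d\vec{a}}{dt}$ out of the sum gives $S_2 = -\left(\sum_{j=1}^3 \sigma(\Delta^{(j)}\alpha)\frac{\vec{b}^{(j)}}{|\vec{b}^{(j)}|}\right)\cdot\frac{d\vec{a}}{dt}$. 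By the second equation of \eqref{eq:1.1} the bracketed vector is precisely $\frac{1}{\eta}\frac{d\vec{a}}{dt}$, so $S_2 = -\frac{1}{\eta}\left|\frac{d\vec{a}}{dt}\right|^2$. Adding $S_1+S_2$ produces the pointwise balance, and integrating over $[0,t]$ (legitimate since the solution is $C^1$ on $[0,T]$) gives \eqref{eq:3.6}.

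I do not anticipate a genuine obstacle, since this is essentially an energetic-variational consistency check built into the derivation of \eqref{eq:1.1}. The only step demanding care is the index bookkeeping in $S_1$: the cyclic identifications must be applied correctly so that the reindexed coefficient of $\frac{d\alpha^{(j)}}{dt}$ matches the right-hand side of the misorientation law, and this is exactly where a sign slip or off-by-one in the labeling would most easily enter. A minor technical point is that differentiability of $|\vec{b}^{(j)}|$ requires $\vec{b}^{(j)}\neq\vec{0}$ along the solution (non-degeneracy of the grain boundaries), which is part of the well-posedness setting inherited from \cite{Katya-Chun-Mzn1}.
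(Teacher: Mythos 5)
Your proof is correct and is essentially the same argument as the one the paper cites from \cite[Proposition 5.1]{Katya-Chun-Mzn1}: differentiate $E(t)$ along the flow, use the constraint $\vec{a}+\vec{b}^{(j)}=\vec{x}^{(j)}$ to reduce $\frac{d}{dt}|\vec{b}^{(j)}|$ to $-\frac{\vec{b}^{(j)}}{|\vec{b}^{(j)}|}\cdot\frac{d\vec{a}}{dt}$, reindex the misorientation sum cyclically, and substitute the two evolution laws to get the pointwise dissipation identity before integrating. Your index bookkeeping and the non-degeneracy caveat $\vec{b}^{(j)}\neq\vec{0}$ are both handled correctly.
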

Next, define, constant as in \cite{Katya-Chun-Mzn2},
\begin{equation}
 \label{eq:4.1}
  \Cl{const:4.1}
  :=
  \inf
  \left\{
   \sum_{j=1}^3|\vec{x}^{(j)}-\vec{a}|
   :
   \text{There exists}\
   j=1,2,3\
    \text{such that}\
    |\vec{a}-\vec{a}_\infty|\geq\frac12|\vec{b}_\infty^{(j)}|
  \right\}.
\end{equation}
Assume also that an initial data $(\vec{\alpha}_0,\vec{a}_0)$ satisfies,
\begin{equation}
  \label{eq:4.4}
  E(0)= \sum_{j=1}^3
  \sigma(\Delta^{(j)}\alpha_0)
   |\vec{a}_0-\vec{x}^{(j)}|
   <
   \sigma(0)
   \Cr{const:4.1}.
 \end{equation}
Then, one can establish the global existence result for the model \eqref{eq:1.1},
\begin{theorem}
 [Global existence{\cite[Theorem 4.1]{Katya-Chun-Mzn2}}]
 \label{thm:4.4}
 Let $\vec{x}^{(1)}$, $\vec{x}^{(2)}$, $\vec{x}^{(3)}\in \mathbb{R}^2$,
 $\vec{a}_0\in\mathbb{R}^2$, and $\vec{\alpha}_0\in\mathbb{R}^3$ be the initial data
 for the system \eqref{eq:1.1}. Assume
 \eqref{eq:1.4},  and let $\vec{a}_\infty$ be a unique
 solution of the equilibrium system \eqref{eq:1.5}. Further, assume
 condition \eqref{eq:4.4}.
 Then there exists a unique global in time solution $(\vec{\alpha},
 \vec{a})$ of \eqref{eq:1.1}.
\end{theorem}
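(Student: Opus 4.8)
The plan is to combine a standard local existence/continuation argument with the energy identity of Proposition~\ref{prop:3.4} to rule out the only possible finite-time singularity, namely a collision of the triple junction $\vec a$ with one of the fixed endpoints $\vec x^{(j)}$. Via the constraints $\vec a(t)+\vec b^{(j)}(t)=\vec x^{(j)}$, the system \eqref{eq:1.1} is an autonomous ODE for $(\vec\alpha,\vec a)\in\mathbb{R}^3\times\mathbb{R}^2$ whose right-hand side is $C^1$ (as smooth as $\sigma\in C^3$ allows) on the open set where $|\vec b^{(j)}|=|\vec x^{(j)}-\vec a|>0$ for all $j$. On this set the local well-posedness established in \cite{Katya-Chun-Mzn1} (or simply the Picard--Lindel\"of theorem) yields a unique solution on a maximal interval $[0,T_{\max})$. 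By the standard continuation principle, if $T_{\max}<\infty$ then the solution must leave every compact subset of the domain, i.e., as $t\uparrow T_{\max}$ either $|\vec\alpha(t)|\to\infty$, or $|\vec a(t)|\to\infty$, or $\min_j|\vec x^{(j)}-\vec a(t)|\to0$. The first alternative is excluded at once: the energy identity \eqref{eq:3.6} gives $\int_0^t|d\vec\alpha/dt|^2\,d\tau\le\gamma E(0)$, so by Cauchy--Schwarz $|\vec\alpha(t)-\vec\alpha_0|\le\sqrt{\gamma E(0)\,t}$ stays finite on every bounded interval. It therefore remains to control $\vec a$.

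This is the heart of the argument, and where I expect the main work to lie. The idea is that condition \eqref{eq:4.4} traps the triple junction in a neighborhood of the equilibrium point $\vec a_\infty$. Concretely, \eqref{eq:3.6} yields $E(t)\le E(0)$, while the positivity bound \eqref{eq:2.Assumption1} gives the elementary estimate $E(t)=\sum_{j}\sigma(\Delta^{(j)}\alpha)|\vec a(t)-\vec x^{(j)}|\ge\sigma(0)\sum_j|\vec a(t)-\vec x^{(j)}|$. Combining these with \eqref{eq:4.4} shows $\sum_j|\vec x^{(j)}-\vec a(t)|<\Cr{const:4.1}$ for all $t\in[0,T_{\max})$. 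By the very definition \eqref{eq:4.1} of $\Cr{const:4.1}$ as an infimum, this strict inequality forces $\vec a(t)$ to lie outside the set over which that infimum is taken: were $|\vec a(t)-\vec a_\infty|\ge\frac12|\vec b_\infty^{(j)}|$ for some $j$, we would have $\sum_j|\vec x^{(j)}-\vec a(t)|\ge\Cr{const:4.1}$, a contradiction. Hence $|\vec a(t)-\vec a_\infty|<\frac12|\vec b_\infty^{(j)}|$ for every $j=1,2,3$, which in particular keeps $\vec a(t)$ in a bounded ball about $\vec a_\infty$ and so excludes the second continuation alternative.

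Finally I would convert this confinement into the desired lower bound on the boundary lengths. Writing $\vec b^{(j)}(t)=\vec x^{(j)}-\vec a(t)=\vec b_\infty^{(j)}-(\vec a(t)-\vec a_\infty)$, using the equilibrium relation $\vec a_\infty+\vec b_\infty^{(j)}=\vec x^{(j)}$ from \eqref{eq:1.5}, the reverse triangle inequality gives $|\vec b^{(j)}(t)|\ge|\vec b_\infty^{(j)}|-|\vec a(t)-\vec a_\infty|>\frac12|\vec b_\infty^{(j)}|>0$, uniformly in $t$. This rules out the third continuation alternative as well, so neither blow-up nor collision can occur, forcing $T_{\max}=\infty$; global uniqueness is then inherited from the local theory. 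The one point demanding care is the geometric input behind this step: one must ensure that $\vec a_\infty$ is the unique (interior) solution of \eqref{eq:1.5} with each $|\vec b_\infty^{(j)}|>0$, which is exactly what assumption \eqref{eq:1.4} guarantees by making every angle of the triangle $\vec x^{(1)}\vec x^{(2)}\vec x^{(3)}$ less than $\frac{2\pi}{3}$, so that the trapping neighborhood is genuinely nonempty and bounded away from the vertices.
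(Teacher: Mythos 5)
Your argument is correct and is essentially the intended one: the theorem is quoted here from \cite{Katya-Chun-Mzn2} without a reproduced proof, but the surrounding machinery---the dissipation identity \eqref{eq:3.6}, the constant \eqref{eq:4.1}, and the smallness condition \eqref{eq:4.4}---exists precisely to run the trapping argument you describe, namely $E(t)\leq E(0)<\sigma(0)\Cr{const:4.1}$ together with $E(t)\geq\sigma(0)\sum_j|\vec{a}(t)-\vec{x}^{(j)}|$ forces $|\vec{a}(t)-\vec{a}_\infty|<\tfrac12|\vec{b}_\infty^{(j)}|$ for all $j$, keeping $|\vec{b}^{(j)}(t)|$ uniformly bounded away from zero so the local solution continues for all time. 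Your handling of the remaining continuation alternatives (boundedness of $\vec{\alpha}$ via Cauchy--Schwarz on the dissipation term, and positivity of $|\vec{b}_\infty^{(j)}|$ from \eqref{eq:1.4}) matches the paper's setup, so nothing essential is missing.
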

We also have the following large time asymptotic behavior results for the
solution of system \eqref{eq:1.1},
\begin{proposition}
[Large Time Asymptotic{\cite[Proposition 5.1]{Katya-Chun-Mzn2}}]
 \label{prop:5.1}
 Let $\vec{x}^{(1)}$, $\vec{x}^{(2)}$, $\vec{x}^{(3)}\in \mathbb{R}^2$,
 $\vec{a}_0\in\mathbb{R}^2$, and $\vec{\alpha}_0\in\mathbb{R}^3$ be the
 initial data for the system \eqref{eq:1.1}. We assume that the initial
 data satisfy \eqref{eq:4.4}, and we also impose the same assumptions as
 in Theorem \ref{thm:4.4}. Define $\alpha_\infty$ as,
 \begin{equation}
  \label{eq:5.1}
  \alpha_\infty
   :=
   \frac{\alpha_0^{(1)}+\alpha_0^{(2)}+\alpha_0^{(3)}}{3}.
 \end{equation}
Let $\vec{a}_\infty$ be a solution of the equilibrium system
 \eqref{eq:1.5} and $(\vec{\alpha}, \vec{a})$ be a time global solution of
 \eqref{eq:1.1}. Then,
 \begin{equation}
  \label{eq:5.2}
   \vec{\alpha}(t)\rightarrow\alpha_\infty(1,1,1),\quad
   \vec{a}(t)\rightarrow\vec{a}_\infty,
 \end{equation}
 as $t\rightarrow\infty$.
\end{proposition}
\begin{theorem}
[Large Time Asymptotic{\cite[Theorem 5.1]{Katya-Chun-Mzn2}}]
 \label{thm:5.1}
 There is a small constant $\Cl[eps]{eps:5.2}>0$ such that,  if
 $|\vec{\alpha}_0-\vec{\alpha}_\infty|+|\vec{a}_0-\vec{a}_\infty|<\Cr{eps:5.2}$, then the associated
 global solution $(\vec{\alpha},\vec{a})$ of the system \eqref{eq:1.1} satisfies,
 \begin{equation}
  \label{eq:5.24}
  |\vec{\alpha}(t)-\vec{\alpha}_\infty|
   +
   |\vec{a}(t)-\vec{a}_\infty|
   \leq \Cr{const:5.5}e^{-\lambda^{\star} t},
 \end{equation}
 for some positive constants $\Cl{const:5.5}, \lambda^{\star}>0$.
\end{theorem}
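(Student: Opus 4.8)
The plan is to exploit the gradient-flow structure of \eqref{eq:1.1} together with the energy dissipation identity \eqref{eq:3.6}, and to quantify convergence through a linearization of the system about the equilibrium $(\vec{\alpha}_\infty,\vec{a}_\infty)$. First I would observe that, writing $\vec{b}^{(j)}=\vec{x}^{(j)}-\vec{a}$, the system \eqref{eq:1.1} is a (weighted) gradient flow of the energy $E$ in \eqref{eq:2.3}: a direct computation gives $\dot{\alpha}^{(j)}=-\gamma\,\partial E/\partial\alpha^{(j)}$ and $\dot{\vec{a}}=-\eta\,\nabla_{\vec{a}}E$. In particular the mean $\tfrac13\sum_j\alpha^{(j)}$ is conserved and equals $\alpha_\infty$, so the perturbation $\vec{\beta}:=\vec{\alpha}-\alpha_\infty(1,1,1)$ stays for all time in the mean-zero subspace $V:=\{\vec{\beta}:\sum_j\beta^{(j)}=0\}$; set also $\vec{c}:=\vec{a}-\vec{a}_\infty$.

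Next I would show that $(\vec{\alpha}_\infty,\vec{a}_\infty)$ is a nondegenerate local minimizer of $E$ on the constraint manifold by computing the Hessian of $E$ at equilibrium. Because $\Delta^{(j)}\alpha_\infty=0$ and, by \eqref{eq:2.Assumption3}, $\sigma_\theta(0)=0$, the mixed $\vec{\alpha}$--$\vec{a}$ blocks vanish, so the Hessian is block diagonal. The $\vec{\alpha}$-block equals $\sigma_{\theta\theta}(0)$ times a length-weighted graph Laplacian of the misorientation differences, which by \eqref{eq:2.Assumption2} is positive definite on $V$ (its only kernel direction, $(1,1,1)$, is excluded by the conservation law). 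The $\vec{a}$-block equals $\sigma(0)\sum_j|\vec{b}^{(j)}_\infty|^{-1}\bigl(I-\vec{u}^{(j)}\otimes\vec{u}^{(j)}\bigr)$ with $\vec{u}^{(j)}=\vec{b}^{(j)}_\infty/|\vec{b}^{(j)}_\infty|$; this is the Hessian of the Fermat--Torricelli functional $\vec{a}\mapsto\sum_j|\vec{x}^{(j)}-\vec{a}|$, and assumption \eqref{eq:1.4} guarantees the three unit vectors do not lie on a line, hence this block is positive definite.

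Then I would derive, on a small ball about equilibrium, the two quantitative estimates that drive the decay: a coercivity bound $E-E_\infty\geq c_1\bigl(|\vec{\beta}|^2+|\vec{c}|^2\bigr)$ and a gradient-domination (\L ojasiewicz-with-exponent-$\tfrac12$) bound $\gamma|\nabla_{\vec{\alpha}}E|^2+\eta|\nabla_{\vec{a}}E|^2\geq c_2\,(E-E_\infty)$, both being Taylor expansions controlled by the positive-definite Hessian just established. Inserting these into the differentiated form of \eqref{eq:3.6} yields $\tfrac{d}{dt}(E-E_\infty)=-\gamma|\nabla_{\vec{\alpha}}E|^2-\eta|\nabla_{\vec{a}}E|^2\leq -c_2\,(E-E_\infty)$, so Gr\"onwall gives $E(t)-E_\infty\leq(E(0)-E_\infty)e^{-c_2 t}$, and coercivity converts this into the stated bound \eqref{eq:5.24} with $\lambda^{\star}=c_2/2$. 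Equivalently, linearizing \eqref{eq:1.1} about equilibrium produces a self-adjoint, negative-definite operator on $V\times\mathbb{R}^2$ whose spectral gap is exactly this rate; the energy route above is the one I would follow, since it uses Proposition~\ref{prop:3.4} directly.

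The hard part will be making the argument self-contained through a continuation/trapping step: the two estimates hold only inside a fixed neighborhood of $(\vec{\alpha}_\infty,\vec{a}_\infty)$, so I must choose $\Cr{eps:5.2}$ small enough that the trajectory, starting $\Cr{eps:5.2}$-close, never leaves that neighborhood. This is where monotonicity of $E$ from \eqref{eq:3.6} together with coercivity serves as a barrier, or alternatively where Proposition~\ref{prop:5.1} is invoked to guarantee the orbit eventually enters and stays in the good ball. The most delicate point is verifying positive-definiteness of the $\vec{a}$-block from \eqref{eq:1.4}, namely that $\vec{a}_\infty$ is a genuine interior Fermat point whose three unit ``forces'' span $\mathbb{R}^2$; but this is precisely the nondegeneracy already used to solve \eqref{eq:1.5} uniquely in Theorem~\ref{thm:4.4}.
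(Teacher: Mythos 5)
Your outline is sound and would yield the theorem, but note first that the paper itself does not prove this statement: it is quoted from \cite[Theorem 5.1]{Katya-Chun-Mzn2}, and the only indications of the underlying argument are the remark that $\lambda^{\star}\geq\lambda$ with $\lambda$ controlled by ``the smallest positive eigenvalues of the linearized operators'' for $\vec{\alpha}$ and $\vec{a}$, and Remark~\ref{rm:5.33}, which says the decay ``was obtained by considering [the] linearized problem, Lemma 5.1 in \cite{Katya-Chun-Mzn2}.'' So the paper's (cited) route is linearization of the ODE system about $(\vec{\alpha}_\infty,\vec{a}_\infty)$ plus a spectral-gap estimate, whereas your primary route is a nonlinear energy argument: conservation of the mean orientation, the block-diagonal positive-definite Hessian at equilibrium (using $\sigma_\theta(0)=0$, $\sigma_{\theta\theta}(0)>0$, and the nondegenerate Fermat point), two-sided quadratic bounds giving coercivity and a \L{}ojasiewicz inequality with exponent $\tfrac12$, Gr\"onwall applied to the differentiated form of \eqref{eq:3.6}, and a trapping step. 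Your route uses Proposition~\ref{prop:3.4} directly and avoids a separate estimate of the nonlinear remainder of the linearization; the paper's route produces the explicit eigenvalue-dependent lower bound for $\lambda^{\star}$ recorded in \eqref{eq:5.41decay}, which your rate $c_2/2$ does not display as transparently (you do acknowledge the two are essentially the same spectral quantity). Two small precisions: the fact that the unit vectors $\vec{u}^{(j)}=\vec{b}^{(j)}_\infty/|\vec{b}^{(j)}_\infty|$ span $\mathbb{R}^2$ follows from the equilibrium balance $\sum_j\vec{u}^{(j)}=\vec{0}$ in \eqref{eq:1.5} (which forces equal angles of $2\pi/3$), while assumption \eqref{eq:1.4} is what guarantees such an interior equilibrium exists with $\vec{b}^{(j)}_\infty\neq\vec{0}$; and your trapping step should be run as a continuity/barrier argument from the monotonicity of $E$ together with the upper quadratic bound $E(0)-E_\infty\leq C|(\vec{\beta}_0,\vec{c}_0)|^2$, rather than by invoking Proposition~\ref{prop:5.1}, whose proof in the reference itself relies on this neighborhood analysis. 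Both arguments ultimately rest on the same nondegeneracy of the Hessian blocks.
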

\begin{remark}
 The
 decay order $\lambda^{\star}$ in \eqref{eq:5.24} is explicitly estimated as,
 \begin{equation}\label{eq:5.41decay}
  \lambda^{\star} \geq \lambda,
 \end{equation}
 where $\lambda$ depends on $\gamma$, $\eta$, $\sigma_{\theta
   \theta}(0)$, $\sigma(0)$ and on the smallest positive eigenvalues
 of the linearized operators for the equations of the orientation
 ${\vec {\alpha}}$ and of the triple junction ${\bf a}$.
\end{remark}
\begin{corollary}
[Large Time Asymptotic{\cite[Corollary 5.1]{Katya-Chun-Mzn2}}]
 Under the same assumption as in Theorem \ref{thm:5.1}, the associated
 grain boundary energy $E(t)$ satisfies,
 \begin{equation}
  \label{eq:5.32}
  E(t)-E_\infty
   \leq
   \Cr{const:5.7}
   e^{-\lambda^{\star}t},
 \end{equation}
 for some positive constant $\Cl{const:5.7}>0$, where
 \begin{equation*}
   E_\infty
   :=
   \sigma(0)\sum_{j=1}^3|\vec{b}_\infty^{(j)}|.
 \end{equation*}
\end{corollary}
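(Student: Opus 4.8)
The plan is to use the fact that the reduced system \eqref{eq:1.1} is, up to the time scales $\gamma$ and $\eta$, a gradient flow of the energy $E$, so that the limit state $(\vec{\alpha}_\infty,\vec{a}_\infty)$ is a \emph{critical point} of $E$. The energy gap $E(t)-E_\infty$ is then controlled by the square of the distance to equilibrium, and the exponential decay of that distance in Theorem~\ref{thm:5.1} closes the argument.

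First I would regard $E$ as a smooth function of the finite-dimensional state $(\vec{\alpha},\vec{a})$. Using the constraint $\vec{a}+\vec{b}^{(j)}=\vec{x}^{(j)}$ we have $\vec{b}^{(j)}=\vec{x}^{(j)}-\vec{a}$ and
\begin{equation*}
 E(\vec{\alpha},\vec{a})=\sum_{j=1}^3\sigma(\Delta^{(j)}\alpha)\,|\vec{x}^{(j)}-\vec{a}|.
\end{equation*}
Under \eqref{eq:1.4} the equilibrium junction $\vec{a}_\infty$ lies strictly inside the triangle $\vec{x}^{(1)}\vec{x}^{(2)}\vec{x}^{(3)}$, so $|\vec{b}_\infty^{(j)}|=|\vec{x}^{(j)}-\vec{a}_\infty|>0$ for each $j$; hence $E$ is $C^3$ (by the regularity of $\sigma$) on a neighborhood of $(\vec{\alpha}_\infty,\vec{a}_\infty)$. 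Differentiating, and tracking the index shift $\Delta^{(j)}\alpha=\alpha^{(j-1)}-\alpha^{(j)}$, one checks that \eqref{eq:1.1} reads $\tfrac{d\vec{\alpha}}{dt}=-\gamma\,\nabla_{\vec{\alpha}}E$ and $\tfrac{d\vec{a}}{dt}=-\eta\,\nabla_{\vec{a}}E$, so the equilibrium system \eqref{eq:1.3} is precisely $\nabla E(\vec{\alpha}_\infty,\vec{a}_\infty)=\vec{0}$. Since \eqref{eq:2.Assumption2}--\eqref{eq:2.Assumption3} force $\Delta^{(j)}\alpha_\infty=0$, we also get $\sigma(\Delta^{(j)}\alpha_\infty)=\sigma(0)$ and $E(\vec{\alpha}_\infty,\vec{a}_\infty)=\sigma(0)\sum_{j=1}^3|\vec{b}_\infty^{(j)}|=E_\infty$, matching the stated formula.

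Because the gradient of $E$ vanishes at equilibrium, Taylor's theorem with the $C^2$ bound yields a constant $C>0$ such that, once $(\vec{\alpha}(t),\vec{a}(t))$ has entered the smoothness neighborhood,
\begin{equation*}
 0\leq E(t)-E_\infty
 =E(\vec{\alpha}(t),\vec{a}(t))-E(\vec{\alpha}_\infty,\vec{a}_\infty)
 \leq C\bigl(|\vec{\alpha}(t)-\vec{\alpha}_\infty|^2+|\vec{a}(t)-\vec{a}_\infty|^2\bigr),
\end{equation*}
the lower bound coming from monotonicity of $E$ along \eqref{eq:1.1} (Proposition~\ref{prop:3.4}) together with $E(t)\to E_\infty$ (Proposition~\ref{prop:5.1}). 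Invoking the decay \eqref{eq:5.24} of Theorem~\ref{thm:5.1}, the right-hand side is at most $Ce^{-2\lambda^\star t}\leq Ce^{-\lambda^\star t}$ for $t\geq0$, which gives \eqref{eq:5.32} (in fact with the faster rate $2\lambda^\star$).

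I expect the only delicate point to be the verification that \eqref{eq:1.1} is the gradient flow and that \eqref{eq:1.3} is exactly $\nabla E=0$: this requires differentiating $|\vec{x}^{(j)}-\vec{a}|$ and confirming $|\vec{b}_\infty^{(j)}|>0$ so those derivatives exist. An alternative route, closer to the dissipation identity \eqref{eq:3.6}, is to let $t\to\infty$ in Proposition~\ref{prop:3.4} to obtain $E(t)-E_\infty=\tfrac1\gamma\int_t^\infty\bigl|\tfrac{d\vec{\alpha}}{dt}(\tau)\bigr|^2\,d\tau+\tfrac1\eta\int_t^\infty\bigl|\tfrac{d\vec{a}}{dt}(\tau)\bigr|^2\,d\tau$, bound the velocities by $Ce^{-\lambda^\star\tau}$ via the gradient-flow form and \eqref{eq:5.24}, and integrate $e^{-2\lambda^\star\tau}$ to reach the same conclusion.
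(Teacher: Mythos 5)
Your proposal is correct, but it proves the corollary by a genuinely different mechanism than the paper. The paper's proof is a \emph{first-order} (Lipschitz) estimate: it writes $E(t)-E_\infty$ as $\sum_j\bigl(\sigma(\Delta^{(j)}\alpha)|\vec{b}^{(j)}|-\sigma(0)|\vec{b}^{(j)}_\infty|\bigr)$, splits each summand by adding and subtracting $\sigma(0)|\vec{b}^{(j)}(t)|$, and then bounds the two pieces by $\sigma(0)|\vec{a}(t)-\vec{a}_\infty|$ (using $\vec{b}^{(j)}=\vec{x}^{(j)}-\vec{a}$ and the triangle inequality) and by $C|\Delta^{(j)}\alpha(t)|\,|\vec{b}^{(j)}(t)|\le 2C|\vec{\alpha}(t)-\vec{\alpha}_\infty|\,|\vec{b}^{(j)}(t)|$ (mean value theorem for $\sigma$, with $|\vec{b}^{(j)}(t)|$ bounded via the dissipation identity \eqref{eq:3.6}); plugging in \eqref{eq:5.24} then gives the rate $e^{-\lambda^{\star}t}$ directly. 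You instead exploit the gradient-flow structure of \eqref{eq:1.1} and the fact that the limit state is a critical point of $E$ (which you correctly verify: $\sigma_\theta(0)=0$ kills $\nabla_{\vec{\alpha}}E$ and \eqref{eq:1.3}/\eqref{eq:1.5} kills $\nabla_{\vec{a}}E$), so Taylor expansion yields a \emph{quadratic} bound $E(t)-E_\infty\le C\bigl(|\vec{\alpha}-\vec{\alpha}_\infty|^2+|\vec{a}-\vec{a}_\infty|^2\bigr)$ and hence the sharper rate $e^{-2\lambda^{\star}t}$. What each buys: the paper's argument is more elementary --- it needs only Lipschitz continuity of $\sigma$ and no second derivatives of $E$, no verification of criticality, and no positivity of $|\vec{b}^{(j)}_\infty|$ beyond what is already assumed --- but it only gives the first power of the decay; your argument requires $C^2$ regularity of $E$ near equilibrium (which holds since $|\vec{b}^{(j)}_\infty|>0$ under \eqref{eq:1.4}) and a small amount of bookkeeping to handle the finite time interval before the trajectory enters the Taylor neighborhood (absorbed into the constant, using monotonicity of $E$ from Proposition~\ref{prop:3.4}), but it yields the strictly stronger conclusion with rate $2\lambda^{\star}$. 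Your alternative route via integrating \eqref{eq:3.6} from $t$ to $\infty$ and bounding the velocities by $|\nabla E|\le C\bigl(|\vec{\alpha}-\vec{\alpha}_\infty|+|\vec{a}-\vec{a}_\infty|\bigr)$ is also valid and likewise gives $e^{-2\lambda^{\star}t}$; none of these refinements is needed for the statement as written, so the proposal fully establishes \eqref{eq:5.32}.
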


\begin{proof}
 For the reader's convenience, we will review the proof from
 \cite{Katya-Chun-Mzn2}. Since $\alpha^{(1)}_\infty=\alpha^{(2)}_\infty=\alpha^{(3)}_\infty$, we obtain
 \begin{equation}
  \label{eq:5.33}
  \begin{split}
   E(t)-E_\infty
   &=
   \sum_{j=1}^3
   \left(
   \sigma(\Delta^{(j)}\alpha(t))
   |\vec{b}^{(j)}(t)|
   -
   \sigma(0)
   |\vec{b}_\infty^{(j)}|
   \right) \\
   &\leq
   \sum_{j=1}^3
   \left(
   \sigma(0)
   |
   \vec{b}^{(j)}(t)
   -
   \vec{b}_\infty^{(j)}
   |
   +
   \left(
   \sigma(\Delta^{(j)}\alpha(t))
   -
   \sigma(0)
   \right)
   |\vec{b}^{(j)}(t)|
   \right) \\
   &\leq
   \sum_{j=1}^3
   \left(
   \sigma(0)
   |
   \vec{a}^{(j)}(t)
   -
   \vec{a}_\infty
   |
   +
   \left(
   \Cl{const:5.12}
   |\Delta^{(j)}\alpha(t)|
   \right)
   |\vec{b}^{(j)}(t)|
   \right) \\
   &\leq
   \sum_{j=1}^3
   \left(
   \sigma(0)
   |
   \vec{a}^{(j)}(t)
   -
   \vec{a}_\infty
   |
   +
   2\Cr{const:5.12}
   |\vec{b}^{(j)}(t)|
   |\vec{\alpha}(t)-\vec{\alpha}_\infty|
   \right),
  \end{split}
 \end{equation}
 where $\Cr{const:5.12}=\sup_{|\theta|<2\Cr{eps:5.2}}|\sigma_\theta(\theta)|$.
 Using the dissipation estimate \eqref{eq:3.6} and the exponential decay estimate \eqref{eq:5.24}, we obtain
 \eqref{eq:5.32}.
\end{proof}

\begin{remark}
 \label{rm:5.33}
 Note, that the obtained exponential decay to equilibrium,
see estimates (\ref{eq:5.24}) and (\ref{eq:5.32}) was obtained by
considering linearized problem, Lemma 5.1 in
\cite{Katya-Chun-Mzn2}. Consideration of the model with curvature -
with finite $\mu$, (\ref{eq:2.20}) and 
of the nonlinear problem instead of linearized problem could lead to potential power laws estimates
for the decay rates.
See also
discussion  and numerical studies in Section \ref{sec:3}.
\end{remark}

\section{Extension to Grain Boundary Network}
\label{sec:2}

In this section, we review the extension of the results to a grain boundary network
$\{\Gamma_t^{(j)}\}$. As in \cite{Katya-Chun-Mzn1,Katya-Chun-Mzn2},  we define the total grain
boundary energy of the network, like,
\begin{equation} \label{eq:6.4e}
 E(t)
  =
  \sum_{j}
  \sigma
  (
  \Delta^{(j)}\alpha
  )
  |\Gamma_t^{(j)}|,
\end{equation}
where $\Delta^{(j)}\alpha$ is a misorientation, a difference between the
lattice orientation of the two neighboring grains which form the grain
boundary $\Gamma^{(j)}_t$. Then, the energetic variational principle
leads to a full model (network model analog of a single triple
junction system \eqref{eq:2.20}),
\begin{equation}
 \label{eq:6.4}
  \left\{
  \begin{aligned}
   v_n^{(j)}
   &=
   \mu
   \sigma
   (
   \Delta^{(j)}\alpha
   )
   \kappa^{(j)},\quad\text{on}\ \Gamma_t^{(j)},\ t>0, \\
   \frac{d\alpha^{(k)}}{dt}
   &=
   -\gamma
   \frac{\delta E}{\delta \alpha^{(k)}},
   \\
   \frac{d\vec{a}^{(l)}}{dt}
   &=
   \eta
   \sum_{\vec{a}^{(l)}\in\Gamma^{(j)}_t}
   \left(\sigma(\Delta^{(j)}\alpha)\frac{\vec{b}^{(j)}}{|\vec{b}^{(j)}|}\right),
   \quad t>0.
  \end{aligned}
  \right.
\end{equation}

As in \cite{Katya-Chun-Mzn1}, we consider the relaxation parameters,
$\mu\rightarrow\infty$, and we further assume that the energy density
$\sigma(\theta)$ is an even function with respect to
the misorientation $\theta= \Delta^{(j)}\alpha$, that is, the
misorientation effects are symmetric with respect to the difference
between the lattice orientations. Then, the problem \eqref{eq:6.4}
reduces to (network model analog of a single triple
junction system \eqref{eq:1.1} ),
\begin{equation}
 \label{eq:6.7} \left\{
  \begin{aligned}
   \Gamma_t^{(j)} &\ \text{is a line segment between some}\
   \vec{a}^{(l_{j,1})}\ \text{and}\ \vec{a}^{(l_{j,2})}, \\
   \frac{d\alpha^{(k)}}{dt}
   &=
   -
   \gamma
   \sum_{\substack{\text{ grain with } \alpha^{(k')}\ \text{is the
         neighbor of the grain with }\ \alpha^{(k)} \\
   \Gamma^{(j)}_t\ \text{is formed by the two grains with }\ \alpha^{(k)}\ \text{and}\ \alpha^{(k')}}}
   |\Gamma^{(j)}_t|
   \sigma_{\theta}(\alpha^{(k)}-\alpha^{(k')}), \\
   \frac{d\vec{a}^{(l)}}{dt}
   &=
   \eta
   \sum_{\vec{a}^{(l)}\in\Gamma^{(j)}_t}
   \left(\sigma(\Delta^{(j)}\alpha)\frac{\vec{b}^{(j)}}{|\vec{b}^{(j)}|}\right).
  \end{aligned}
  \right.
\end{equation}
To obtain the global solution of the system \eqref{eq:6.7} in
\cite{Katya-Chun-Mzn2}, we studied the system before the critical events, and we first
 considered an associated energy minimizing state,
$(\alpha_\infty^{(k)},\vec{a}_\infty^{(l)})$ of \eqref{eq:6.7}.  The
critical events are the disappearance events, such as, e.g., disappearance of
the grains and/or grain boundaries during coarsening of the system,
facet interchange and splitting of unstable junctions.
Then,
$(\alpha_\infty^{(k)},\vec{a}_\infty^{(l)})$ satisfies,
\begin{equation}
 \label{eq:6.8}
 \left\{
  \begin{aligned}
   \Gamma_\infty^{(j)}&\ \text{is a line segment between some}\
   \vec{a}^{(l_{j,1})}_\infty\ \text{and}\ \vec{a}_\infty^{(l_{j,2})}, \\
   0
   &=
   -
   \gamma
   \sum_{\substack{\text{ grain with } \alpha^{(k')}\ \text{is the
         neighbor of the grain with }\ \alpha^{(k)} \\
   \Gamma^{(j)}_t\ \text{is formed by the two grains with }\ \alpha^{(k)}\ \text{and}\ \alpha^{(k')}}}
   |\Gamma_\infty^{(j)}|
   \sigma_{\theta}(\alpha_\infty^{(k)}-\alpha_\infty^{(k')}), \\
   \vec{0}
   &=
   \eta
   \sum_{\vec{a}_\infty^{(l)}\in\Gamma^{(j)}_\infty}
   \left(\sigma(\Delta^{(j)}\alpha_\infty)\frac{\vec{b}_\infty^{(j)}}{|\vec{b}_\infty^{(j)}|}\right).
  \end{aligned}
  \right.
\end{equation}
Hence, the total energy $E_\infty$ of the grain boundary network
(\ref{eq:6.8}) is
\begin{equation}
 \label{eq:6.9}
 \begin{split}
  E_\infty
  &=
  \sum_{j}
  \sigma(\Delta^{(j)}\alpha_\infty)
  |\vec{b}_\infty^{(j)}|
  =
  \inf
  \biggl\{
  \sum_{j}
  \sigma(\Delta^{(j)}\alpha)
  |\vec{b}^{(j)}|
  \biggr\}.
 \end{split}
\end{equation}
\begin{remark}
Note, we assumed in (\ref{eq:6.7})-(\ref{eq:6.8}) that the total number of grains, grain boundaries and triple
    junctions are the same as in the initial configuration (assumption
    of no critical events in the network).
\end{remark}
Further, if there is a neighborhood $U^{(l)}\subset\mathbb{R}^2$ of $\vec{a}_\infty^{(l)}$
such that
\begin{equation}
 \label{eq:6.10}
  E_\infty
  <
  \sum_{j}
  |\vec{b}^{(j)}|
\end{equation}
for all $\vec{a}^{(l)}\in U^{(l)}$, one can obtain a priori estimate for
the triple junctions, and, hence,  obtain the time global solution of
\eqref{eq:6.7}. Note that, the assumption \eqref{eq:6.10} is related to
the boundary condition of the line segments
$\Gamma_t^{(j)}$. Further, if the energy minimizing state is unique,
then we can proceed with the same argument as in Lemma 4.1 in \cite{Katya-Chun-Mzn2}, and
obtain the global solution \eqref{eq:6.7} near the energy minimizing
state.

\begin{remark}
 Note that, the solution of \eqref{eq:6.8} may not be unique even though the grain
 orientations are constant (misorientation is zero) \cite{Katya-Chun-Mzn2}.

 \end{remark}

The asymptotics of the grain boundary networks are rather
nontrivial. Our arguments in \cite{Katya-Chun-Mzn2} were based on the uniqueness of the equilibrium
state \eqref{eq:1.5}. However,  we do not know the uniqueness of solutions of
the equilibrium state for the grain boundary network
\eqref{eq:6.8}. Thus,  in general we cannot take a full limit for the
large time asymptotic behavior of the solution of the network model
\eqref{eq:6.7}. But, one can show, the following result instead,
\begin{corollary}
[{\cite[Corollary 6.1]{Katya-Chun-Mzn2}}]
 In a grain boundary network \eqref{eq:6.7}, assume that the initial
 configuration is sufficiently close to an associated energy minimizing
 state \eqref{eq:6.8}.
 Then,  there
 is a global solution $(\alpha^{(k)}, \vec{a}^{(l)})$ of
 \eqref{eq:6.7}. Furthermore, there exists a time sequence
 $t_n\rightarrow\infty$ such that $(\alpha^{(k)}(t_n),
 \vec{a}^{(l)}(t_n))$ converges to an associated equilibrium
 configuration \eqref{eq:6.8}.
\end{corollary}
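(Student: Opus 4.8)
The plan is to treat \eqref{eq:6.7}, under the stated no-critical-events assumption, as a locally constrained ODE system on a fixed finite-dimensional configuration space (the combinatorial type of the network is frozen), and to run a LaSalle-type invariance argument with the total energy \eqref{eq:6.4e} as the Lyapunov functional. First I would record the network analog of the dissipation identity of Proposition~\ref{prop:3.4}: differentiating $E(t)=\sum_j\sigma(\Delta^{(j)}\alpha)|\Gamma_t^{(j)}|$ along \eqref{eq:6.7} and using the gradient-flow (maximal dissipation) structure of the $\vec{\alpha}$- and $\vec{a}$-dynamics, one obtains
\begin{equation*}
 E(t)+\frac1\gamma\int_0^t|\dot{\vec{\alpha}}|^2\,d\tau+\frac1\eta\int_0^t|\dot{\vec{a}}|^2\,d\tau=E(0).
\end{equation*}
In particular $E$ is nonincreasing with $E(t)\le E(0)$, and $\int_0^\infty\bigl(|\dot{\vec{\alpha}}|^2+|\dot{\vec{a}}|^2\bigr)\,d\tau<\infty$. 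These two consequences carry, respectively, the global-existence and the convergence halves of the statement.

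For global existence I would derive an a priori confinement of the triple junctions, mimicking the role of the constant $\Cr{const:4.1}$ and condition \eqref{eq:4.4} in the single-junction Theorem~\ref{thm:4.4}. Since the initial data are close to the minimizing state $(\alpha_\infty^{(k)},\vec{a}_\infty^{(l)})$ of \eqref{eq:6.8}, one has $E(0)$ close to $E_\infty$; combined with \eqref{eq:6.10}, which says $E_\infty<\sum_j|\vec{b}^{(j)}|$ on a neighborhood $U^{(l)}$ of each $\vec{a}_\infty^{(l)}$, and with the lower bound $E\ge\sigma(0)\sum_j|\vec{b}^{(j)}|$, the monotone bound $E(t)\le E(0)$ forbids any $\vec{a}^{(l)}(t)$ from reaching $\partial U^{(l)}$. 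Thus every junction stays in $\overline{U^{(l)}}$, the segment lengths $|\vec{b}^{(j)}(t)|$ stay bounded below away from $0$, and (using $\sigma(\Delta^{(j)}\alpha)|\vec{b}^{(j)}|\le E(0)$ together with the convexity/coercivity of $\sigma$, or the conserved mean of the orientations) the misorientations stay bounded. On this non-degenerate region the right-hand sides of \eqref{eq:6.7} are smooth and bounded, so local existence iterates to a global solution; this is the network counterpart of Lemma~4.1 in \cite{Katya-Chun-Mzn2}.

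For the subsequential convergence I would argue as in a LaSalle invariance principle. Finiteness of $\int_0^\infty(|\dot{\vec{\alpha}}|^2+|\dot{\vec{a}}|^2)\,d\tau$ forces a sequence $t_n\to\infty$ along which $\dot{\vec{\alpha}}(t_n)\to0$ and $\dot{\vec{a}}(t_n)\to0$. By the confinement and the orientation bounds of the previous step the trajectory is precompact, so after passing to a further subsequence $(\vec{\alpha}(t_n),\vec{a}(t_n))\to(\vec{\alpha}_\ast,\vec{a}_\ast)$, and the limit still lies in the non-degenerate region where the right-hand sides of \eqref{eq:6.7} are continuous. Since those right-hand sides equal, up to the factors $\gamma$ and $\eta$, the derivatives $\dot{\vec{\alpha}}(t_n)$ and $\dot{\vec{a}}(t_n)$, passing to the limit shows that $(\vec{\alpha}_\ast,\vec{a}_\ast)$ annihilates them, i.e. it satisfies the equilibrium system \eqref{eq:6.8}. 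This is exactly the asserted convergence along $t_n$.

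The main obstacle is that this is all one can hope for. Unlike the single-junction case, where uniqueness of the equilibrium \eqref{eq:1.5} lets Proposition~\ref{prop:5.1} promote the argument to full convergence, the network equilibrium \eqref{eq:6.8} need not be unique, so the $\omega$-limit set is only known to be contained in the (possibly non-discrete) equilibrium set; a Lyapunov/gradient-flow argument of this type then yields convergence along a subsequence but not necessarily $\lim_{t\to\infty}(\vec{\alpha},\vec{a})$. Upgrading to full convergence would require isolatedness of equilibria or a \L{}ojasiewicz--Simon-type inequality, which is not available here. The subtlest technical point in executing the plan is keeping the limiting configuration non-degenerate --- segment lengths bounded below so that $\vec{b}^{(j)}/|\vec{b}^{(j)}|$ and $\sigma_\theta$ remain continuous in the limit --- which is precisely where the closeness hypothesis and \eqref{eq:6.10} are used.
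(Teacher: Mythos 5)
Your proposal is correct and follows essentially the same route as the paper: a priori confinement of the triple junctions via the monotonicity of $E(t)$ combined with \eqref{eq:6.10} yields global existence, and the network analog of the dissipation identity \eqref{eq:3.6} gives a subsequence $t_n\rightarrow\infty$ along which the time derivatives vanish and the configuration converges to an equilibrium of \eqref{eq:6.8}. Your closing remark that full convergence is obstructed by the possible non-uniqueness of the network equilibrium is exactly the point the paper makes before stating the corollary.
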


\section{Experiments and Numerical Simulations}\label{sec:3}
In this section we present results of some experiments in thin films and numerical study of
the grain growth using models of planar grain boundary network from
Section \ref{sec:2}.  The energetics and connectivity
of the grain boundary network play a crucial role in determining the properties of a material
across multiple scales, see also Sections
\ref{sec:1}-\ref{sec:2}. Therefore, our main focus here is
to develop a better understanding of the energetic properties of the
experimental and computational microstructures.
\subsection{Experimental Results: Grain Boundary Character Distribution}\label{sec:3a}
To more fully characterize a microstructure, it is necessary to
consider the types and energies of the constituent grain
boundaries, in addition to geometric features such as grain size.  Indeed, experiments and simulations over the past 30+
years have led to the discovery and notion of the Grain Boundary
Character Distribution (GBCD) \cite
{ISI:000071740700001,Adams1999,rohrer_distribution_2004,
  ISI:000225119800166,rohrer_influence_2005,ISI:000248758300020}. {\it
  The GBCD, denoted by $\rho$, is an empirical
  distribution of the relative area (in 3D) or relative length (in 2D)
  of interface/grain boundaries with a given misorientation and
  boundary normal.} The GBCD can be viewed as a leading statistical descriptor to characterize the texture of the grain boundary network (see, e.g., \cite{ISI:000071740700001,Adams1999,ISI:000225119800166,rohrer_influence_2005,ISI:000248758300020, DK:gbphysrev}).
\begin{figure}
\centerline{\includegraphics[width=4.0in, angle=0]{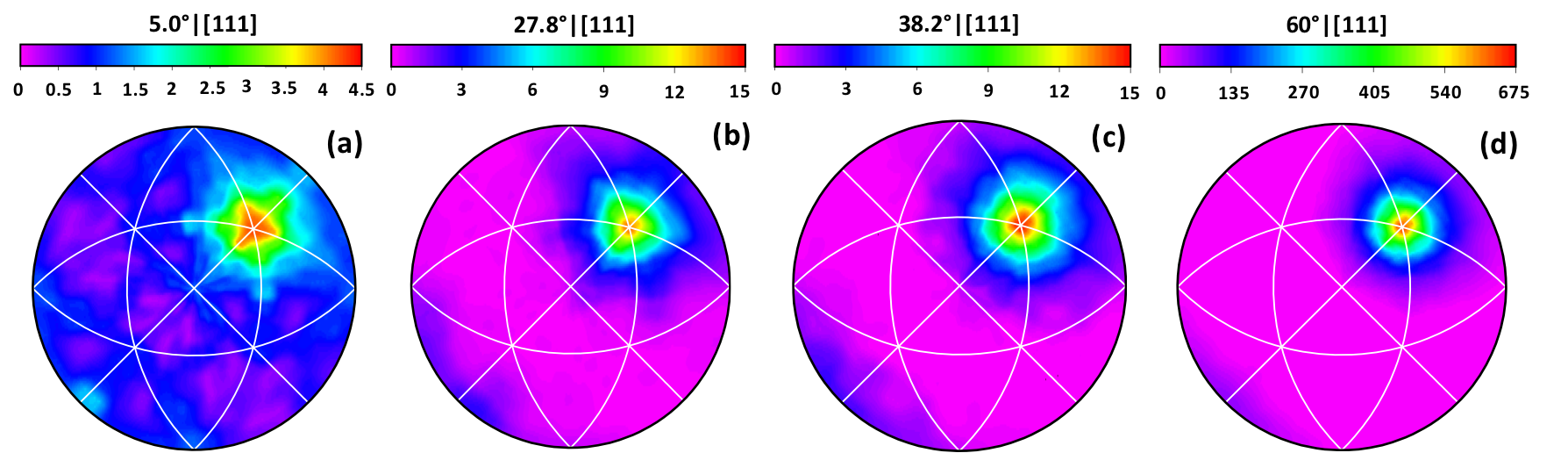}}
\caption{Experiments: (a-d) Grain boundary character distribution of 100 nm-thick, as-deposited Al film with a mean grain size of approximately 100 nm for four given misorientations. Misorientations are specified as angle-axis pairs. Pseudosymmetry cleanup of the crystal orientation maps was used in generating the figures. The scale is multiples of random distribution.}
\label{fig:exp_GBCD}
\end{figure}
\par Figure~\ref{fig:exp_GBCD} presents the GBCD for four different misorientations for an as-deposited aluminum film with near random orientation distribution.  The details of film deposition, sample preparation and precession electron diffraction crystal orientation mapping in the transmission electron microscope are given in \cite{Rohrer_AlGBCD_2017}. However, in contrast to \cite{Rohrer_AlGBCD_2017}, the orientation data were subjected to the same cleanup procedure as for the grain size distribution, namely the pseudosymmetry cleanup procedure detailed in \cite{Liu_AlGrain_2014} with the exception of the 60\degree|[111] boundaries, which are clearly abundant and should not be removed. The minimum grain size of the dilation cleanup step was 20 pixels.
\par Given that grain boundaries have five crystallographic degrees of freedom - three to specify the misorientation across the grain boundary, and two to define the normal to the boundary, the two-dimensional graphical presentation of the GBCD as in Figure~\ref{fig:exp_GBCD} is achieved in the following manner. To begin, a given misorientation is selected, for example, 5\degree|[111]. The rotation axis, here [111], is given by the Miller indices of the crystallographic direction that is common to both grains on either side of the boundary. The misorientation angle is usually, but not always, chosen to be within the fundamental zone of misorientations, which for cubic crystals has a minimum of zero and a maximum of 62.8\degree. Common choices of angles are either those of low angle boundaries, with rotation angles of less than 15 degrees, or those of coincident site lattice (CSL) type. In Figure~\ref{fig:exp_GBCD}, the selected rotation angles about the [111] axis of 27.8\degree|[111], 38.2\degree|[111] and 60\degree|[111] correspond to CSL designations $\Sigma13b$, $\Sigma7$ and $\Sigma3$, respectively. The numerical value in the $\Sigma$ designation is the reciprocal of the number of atomic sites that are coincident in the crystallographic plane perpendicular to rotation axis.  For face centered cubic crystals, the Miller indices of this plane are the same as the Miller indices of the misorientation axis, e.g., the (111) plane for the [111] rotation axis. The letters a or b in the $\Sigma$ designation then indicate different angle-axis pairs with the same number of coincident sites. Note that the CSL designation does not specify the grain boundary plane that is present in the sample; rather it specifies only a given misorientation.
\par Next, the grain boundary planes present in the experimental sample for the given misorientation are represented by the crystallographic directions normal to the planes in standard stereographic projections, such as those in Figure~\ref{fig:exp_GBCD}. The use of stereographic projection rather than other types of projections in single crystal or bicrystal crystallography of materials has been common practice. Its choice is based on the fact that it is an angle-preserving projection that does not depend on the size of the crystal (from nano to macro).  For cubic crystals, the standard projection has the [001] cubic crystal axis pointing out of the page thereby projecting onto the page as the origin of the plot at the center of the (projected equatorial) circle. In Figure~\ref{fig:exp_GBCD}, the [100] crystallographic axis points to the right, and the [010] crystallographic axis points up, thereby defining a right-handed axis set.
\par The stereographic projections of the boundary plane normals such
as those of Figure~\ref{fig:exp_GBCD} then show the abundance of
grain boundary plane normals in multiples of random distribution
(MRD) on the thermal scale. The MRD is similar to a probability
density plot, but its integrated value is 2, rather than 1, since
every grain boundary segment is counted twice, once for the grain on
one side of the boundary and once for the grain on the other side of
the boundary.  When the direction normal to the boundary plane and the
misorientation axis are the same, the grain boundary is termed a
twist boundary, since the axis of rotation is normal to the observed
boundary plane.  In Figure~\ref{fig:exp_GBCD}, a high relative
intensity is seen at the position of the [111] twist boundaries for
all four selected misorientations. If, on the other hand, the high
intensities were seen as bands along a great circle ninety degrees
away from the chosen misorientation axis, then the boundaries would
have been designated as tilt boundaries, with the misorientation axis
in the plane of the grain boundary. In effect, GBCD plots such as
those of Figure~\ref{fig:exp_GBCD} make manifest texture formation in the grain boundary network, see also
numerical experiments Section~\ref{sec:3b}. 
\par The most striking feature of Figure~\ref{fig:exp_GBCD} is the
very high abundance of 60\degree|[111] boundaries, which show a
population of several hundred times MRD.  Given that the majority of
the boundary planes were also found to be (111), this sample is said
to have a large population of coherent $\Sigma3$, or so-called
coherent twin boundaries.  $\Sigma3$ boundaries constitute
approximately one quarter of all the boundaries in this sample. In
contrast, for a "bulk" aluminum sample, i.e., in an aluminum sample
with mean grain size of 23 $\mu$m, the population of $\Sigma3$
boundaries is more than ten times lower \cite{Rohrer_AlGBCD_2017}.
The very high population of $\Sigma3$ boundaries in the thin film
sample of Figure~\ref{fig:exp_GBCD} is likely a result of the
structure forming processes that take place during film deposition,
rather than a result of normal grain growth. The evolution of the
grain boundary network and the GBCD of this sample towards equilibrium
or steady-state  will be determined by the dynamics of the grain
boundaries and the relaxation time scales for the boundary curvature,
misorientation and triple junctions, for which models and simulations
are presented in the current work.  We note that in experimental samples where GBCD has reached steady-state, the GBCD averaged over its five crystallographic parameters is
  inversely related to the grain boundary energy density similar to
  the GBCD extracted from grain growth models, Section~\ref{sec:3b}.
Laboratory-based experimental quantification of grain boundary
dynamics via in-situ annealing experiments similar to the experiment
in Figure~\ref{fig:lattice1}, together with intermittent mapping of
crystal orientations for determination of the evolving GBCD will be the key to
connecting more closely experimental findings to mathematical and computational models of
grain growth. These experiments are the subject of the ongoing research.
\subsection{Numerical Experiments}\label{sec:3b}
Here, we present several numerical experiments to illustrate the
effects of different time scales, such as 
the dynamic orientations/misorientations (grains ``rotations'')  and mobility of the
triple junctions, as well as we compare the grain growth model with
curvature \eqref{eq:6.4} and model without curvature \eqref{eq:6.7},
as described in Sections \ref{sec:1}-\ref{sec:2}.  
\begin{figure}[hbtp]
\centering
\vspace{-1.0cm}
\includegraphics[width=2.1in]{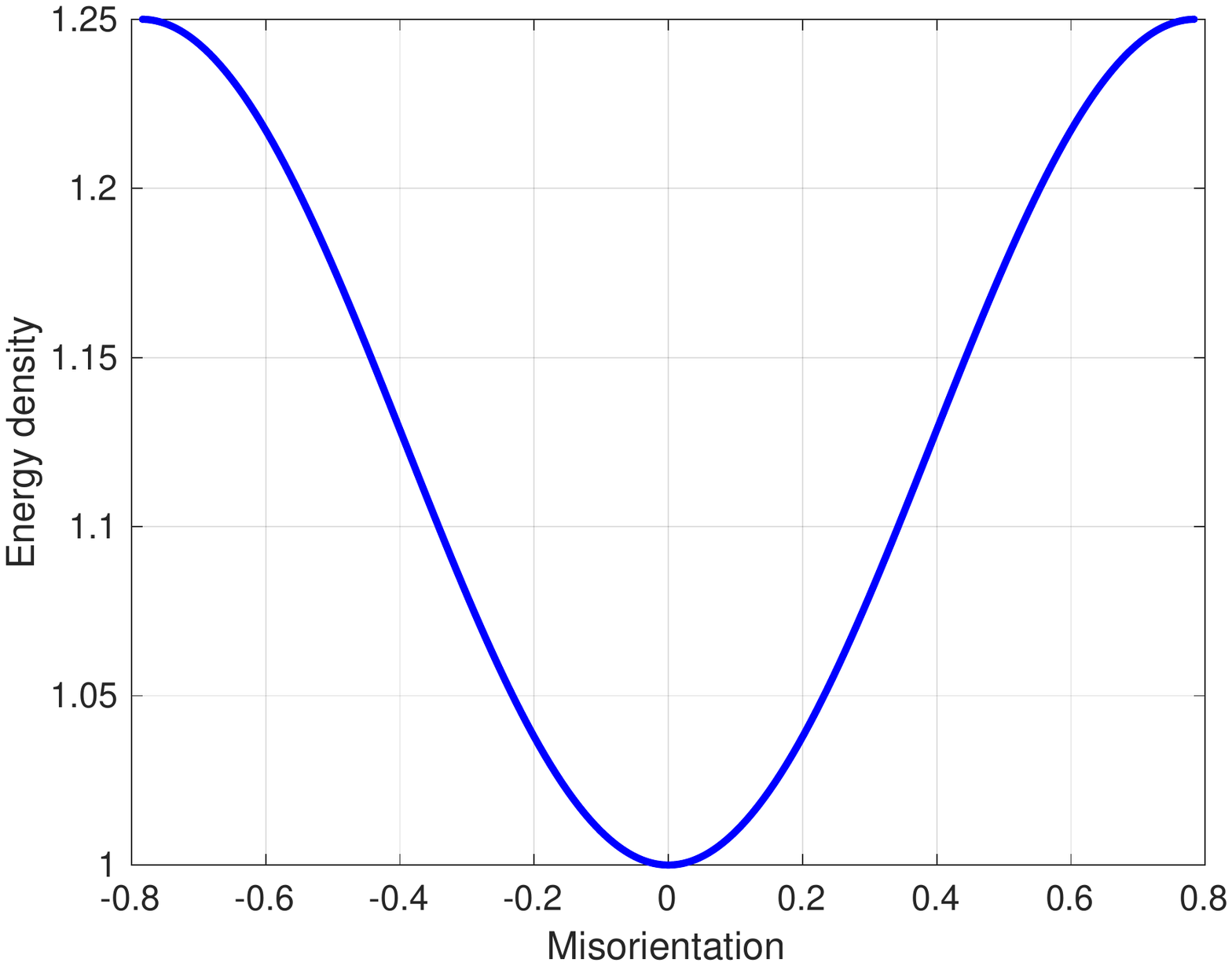}
\includegraphics[width=2.1in]{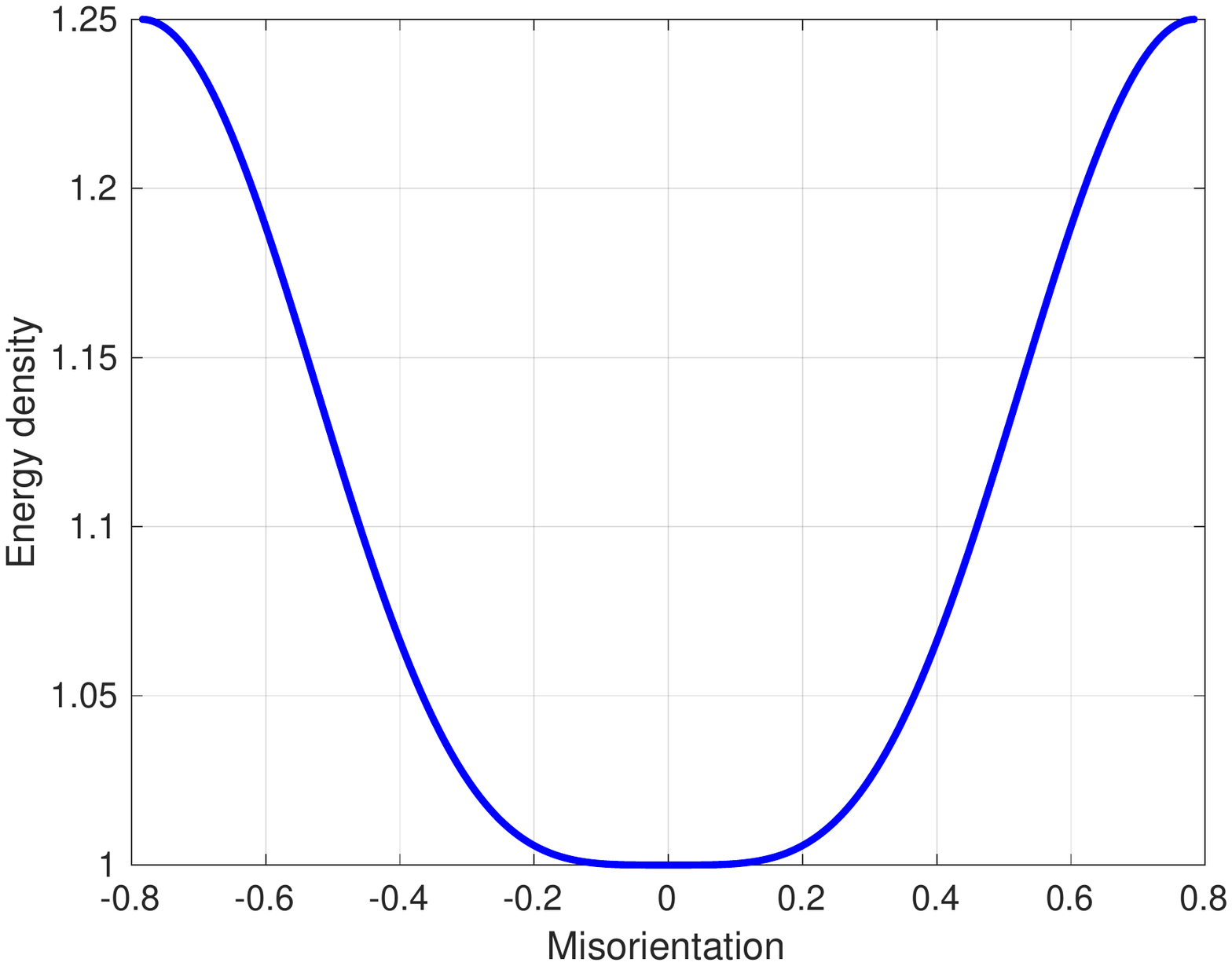}
\vspace{-1.0cm}
\caption{\footnotesize Grain boundary energy density function
  $\sigma(\Delta \alpha)$: {\it (a) Left plot,} $\sigma=1+0.25\sin^2(2\Delta \alpha)$ and {\it (b) Right
    plot,} $\sigma=1+0.25\sin^4(2\Delta \alpha).$}\label{gbend}
\end{figure}

\begin{figure}[hbtp]
\centering
\vspace{-1.8cm}
\includegraphics[width=2.1in]{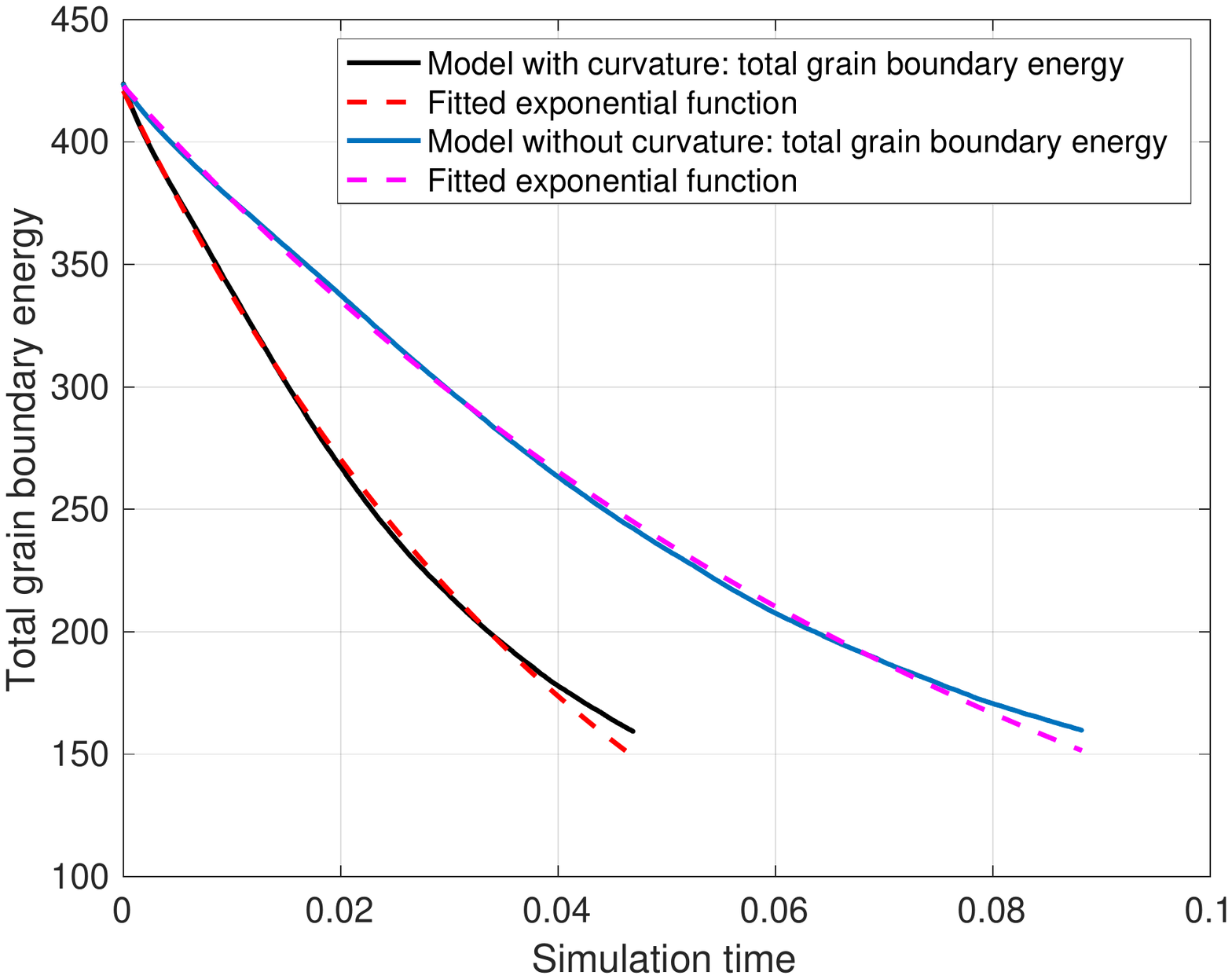}
\includegraphics[width=2.1in]{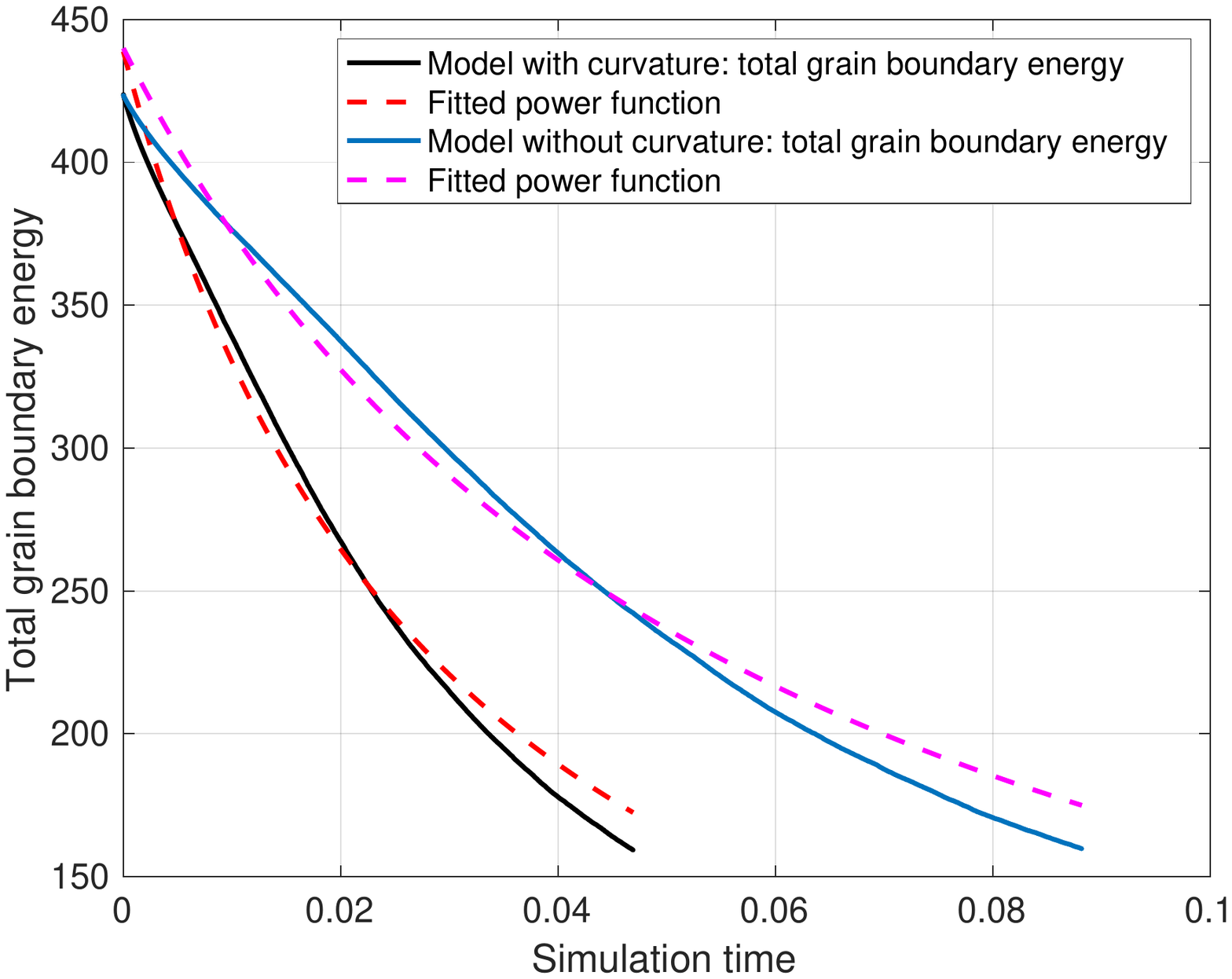}
\vspace{-1.8cm}
\caption{\footnotesize One run of $2$D trial with $10000$ initial
  grains: {\it (a) Left plot,} Total grain boundary energy plot, model
  with curvature  (solid black) versus fitted  exponential decaying function
  $y(t)=420.9\exp(-22.13t)$ (dashed red). Total grain boundary energy plot, model
  without curvature  (solid blue) versus fitted  exponential decaying function
  $y(t)=422.8\exp(-11.64t)$ (dashed magenta); {\it (b) Right
    plot},  Total grain boundary energy plot, model
  with curvature  (solid black) versus fitted  power function
   $y_1(t)=438.8797(1.0+32.9489t)^{-1}$ (dashed red). Total grain boundary energy plot, model
  without curvature  (solid blue) versus fitted power function
$y_1(t)=439.9588(1.0+17.1792t)^{-1}$ (dashed magenta). 
 Mobility of the triple
junctions is
$\eta=10$ and the misorientation parameter $\gamma=1$.  Grain boundary
energy density $\sigma=1+0.25\sin^2(2\Delta \alpha).$}\label{fig7}
\end{figure}

\begin{figure}[hbtp]
\centering
\vspace{-1.8cm}
\includegraphics[width=2.1in]{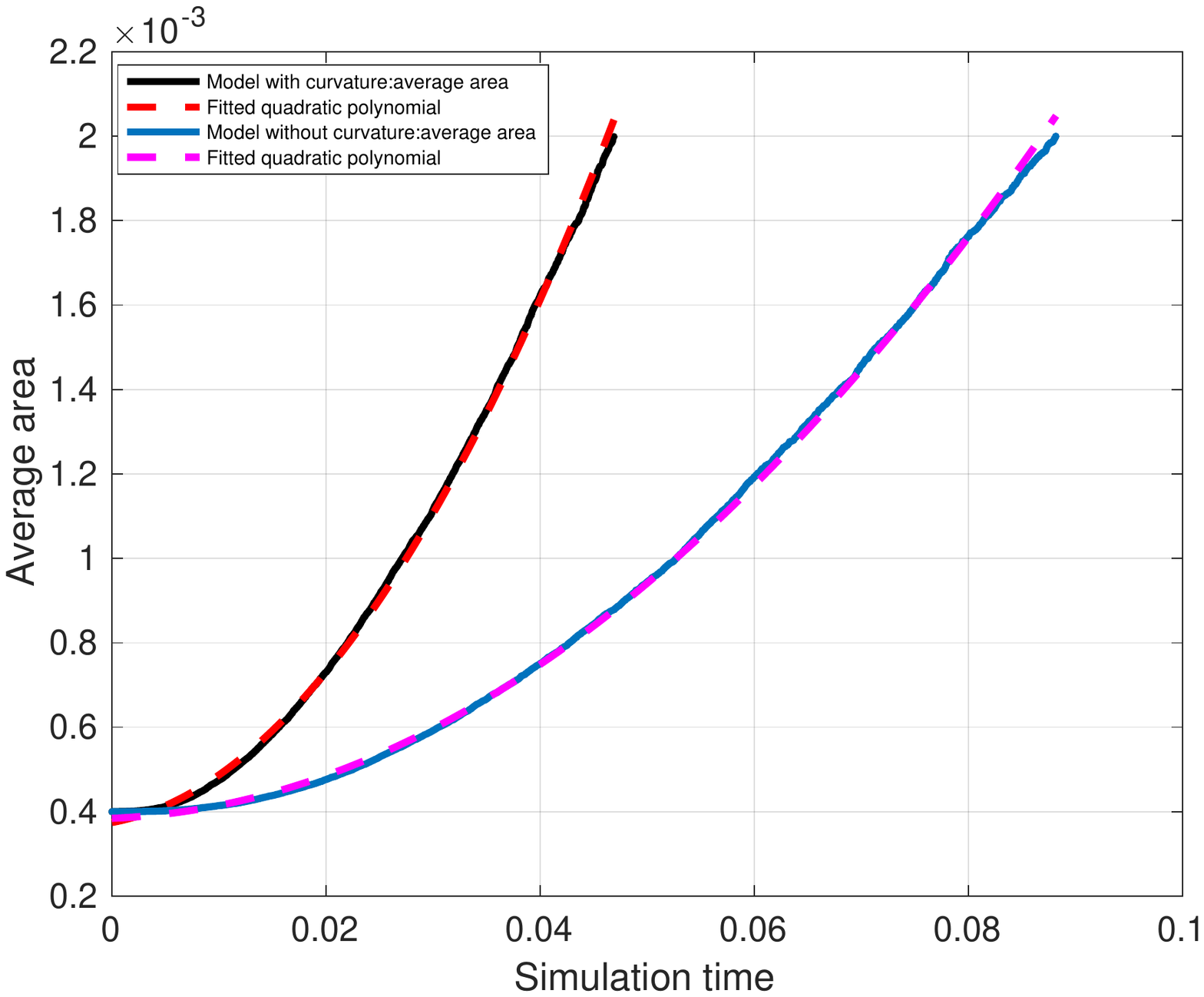}
\includegraphics[width=2.1in]{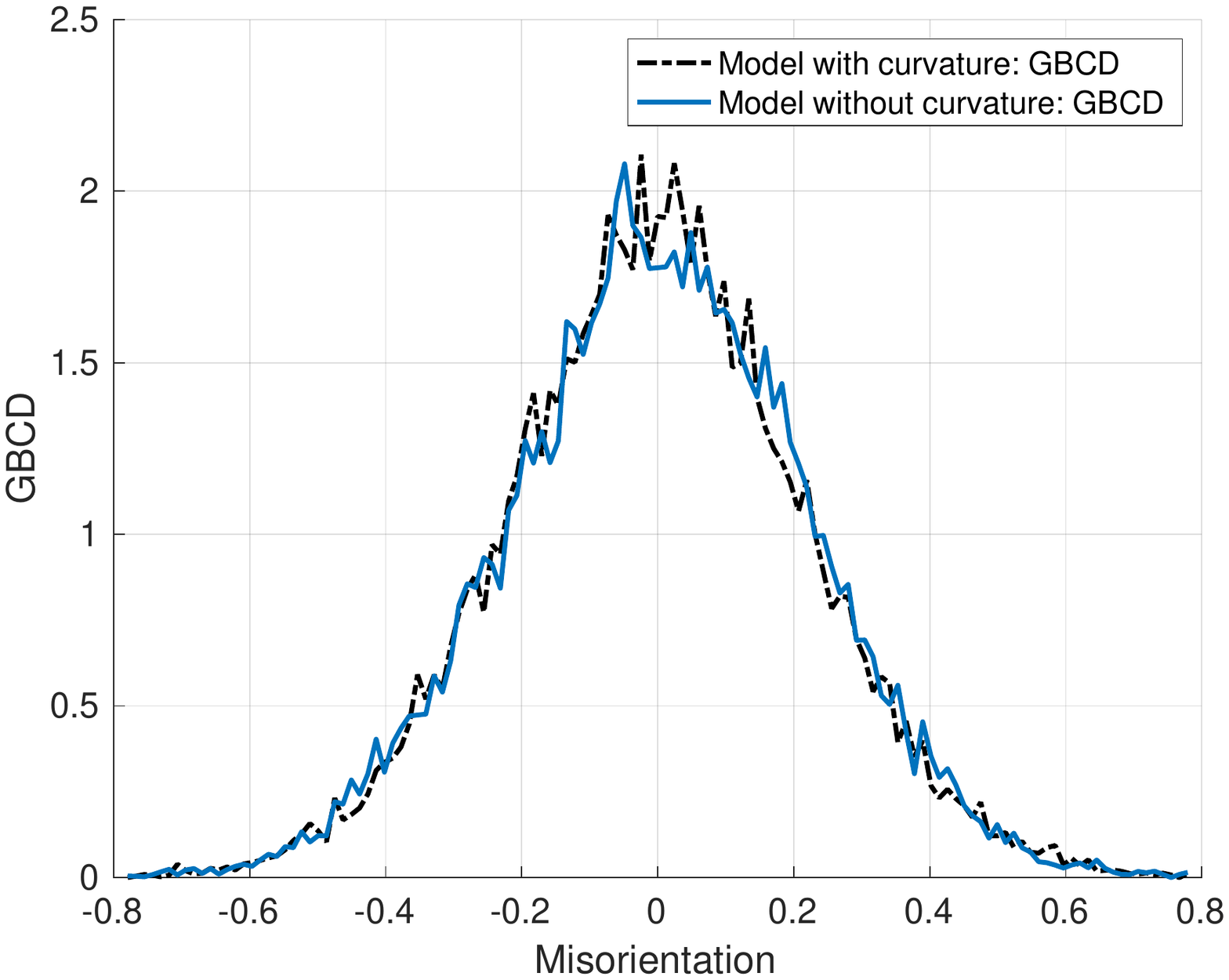}
\vspace{-1.8cm}
\caption{\footnotesize {\it (a) Left plot,} One run of $2$D trial with $10000$ initial
  grains: Growth of the average area of the
  grains, model with curvature (solid black) versus fitted  quadratic
  polynomial function  $y(t)=0.6575t^2+0.004668t+0.0003745$ (dashed
  red). Growth of the average area of the
  grains, model without curvature (solid blue) versus fitted  quadratic
  polynomial function  $y(t)=0.2025t^2+0.001016t+0.0003844$ (dashed magenta);
 {\it (b) Right plot}, GBCD (black curve,  model with curvature) and  GBCD (blue curve,  model without curvature) at $T_{\infty}$ averaged over 3 runs of $2$D trials with $10000$ initial
  grains. Mobility of the triple junctions is $\eta=10$ and the
  misorientation parameter $\gamma=1$.  Grain boundary
energy density $\sigma=1+0.25\sin^2(2\Delta \alpha).$}\label{fig8}
\end{figure}

\begin{figure}[hbtp]
\centering
\vspace{-1.8cm}
\includegraphics[width=2.1in]{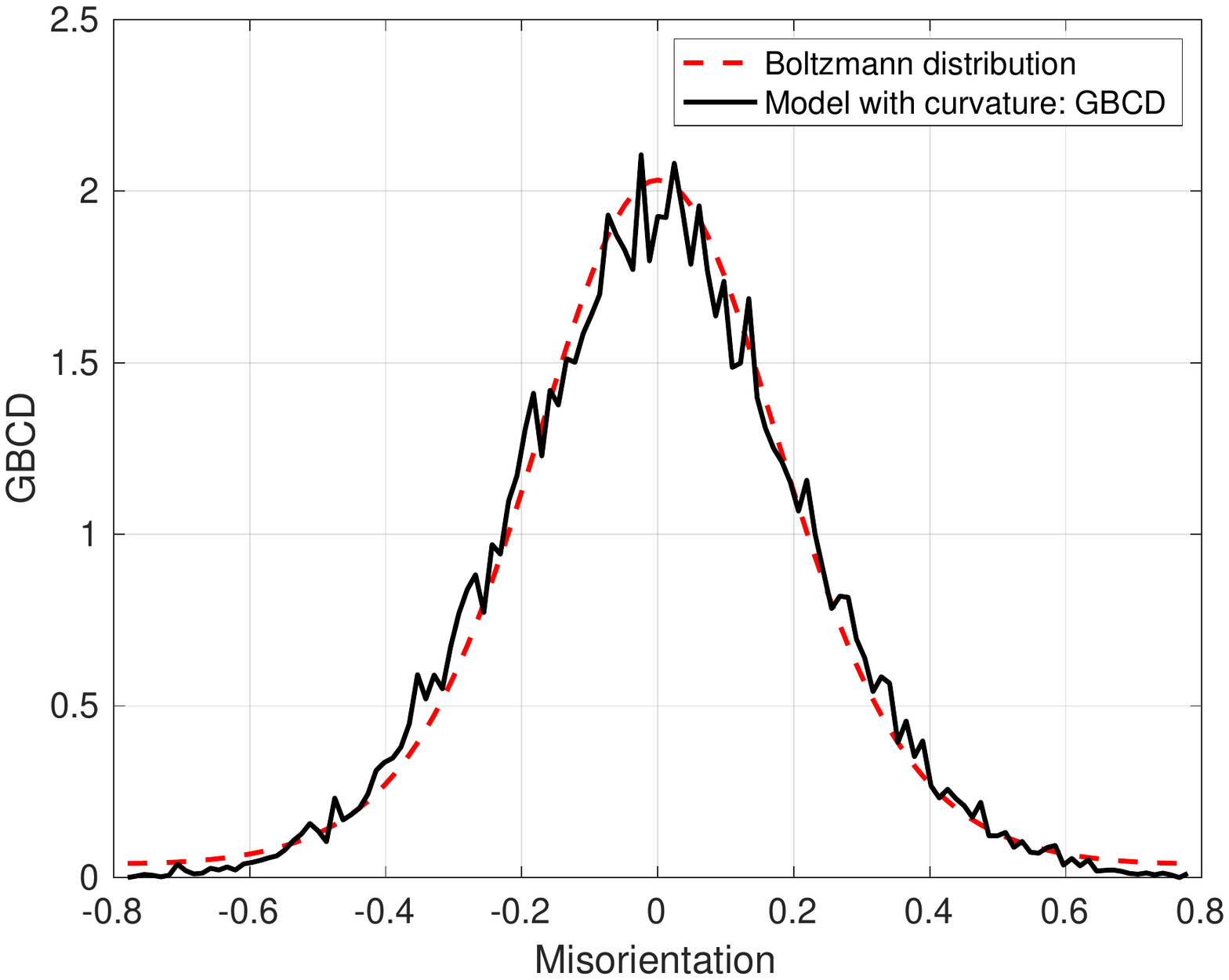}
\includegraphics[width=2.1in]{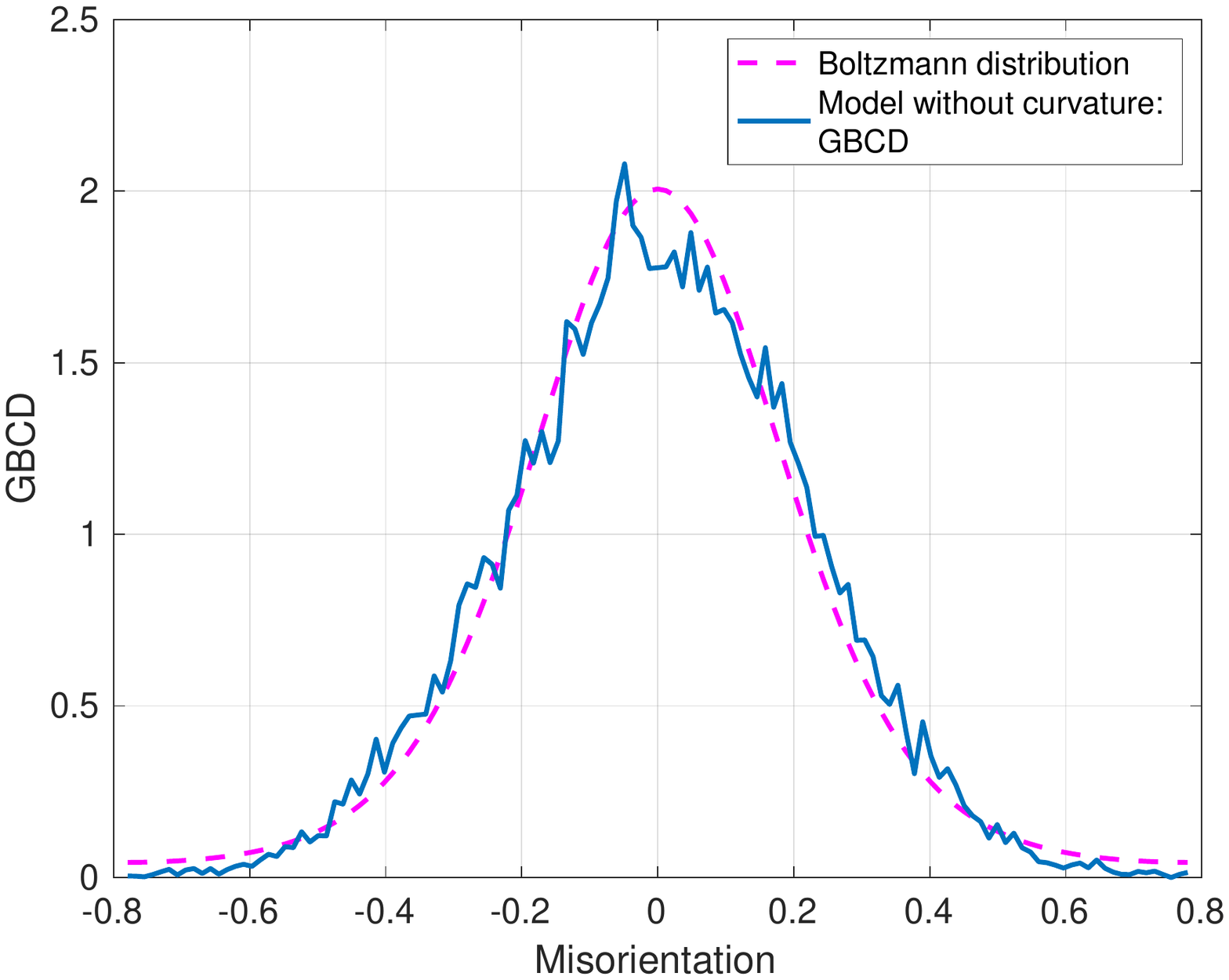}
\vspace{-1.8cm}
\caption{\footnotesize {\it (a) Left plot,} Model with curvature,  GBCD (black curve) at $T_{\infty}$ averaged over 3 runs of $2$D trials with $10000$ initial
  grains versus Boltzmann distribution with ``temperature''-
$D\approx 0.0641$ (dashed red curve). 
 {\it (b) Right plot}, Model without curvature,  GBCD (blue curve) at $T_{\infty}$ averaged over 3 runs of $2$D trials with $10000$ initial
  grains versus Boltzmann distribution with ``temperature''-
$D\approx 0.0655$ (dashed magenta curve). Mobility of the triple
junctions is $\eta=10$ and the misorientation parameter $\gamma=1$.  Grain boundary
energy density $\sigma=1+0.25\sin^2(2\Delta \alpha).$}\label{fig9}
\end{figure}

\begin{figure}[hbtp]
\centering
\vspace{-1.8cm}
\includegraphics[width=2.1in]{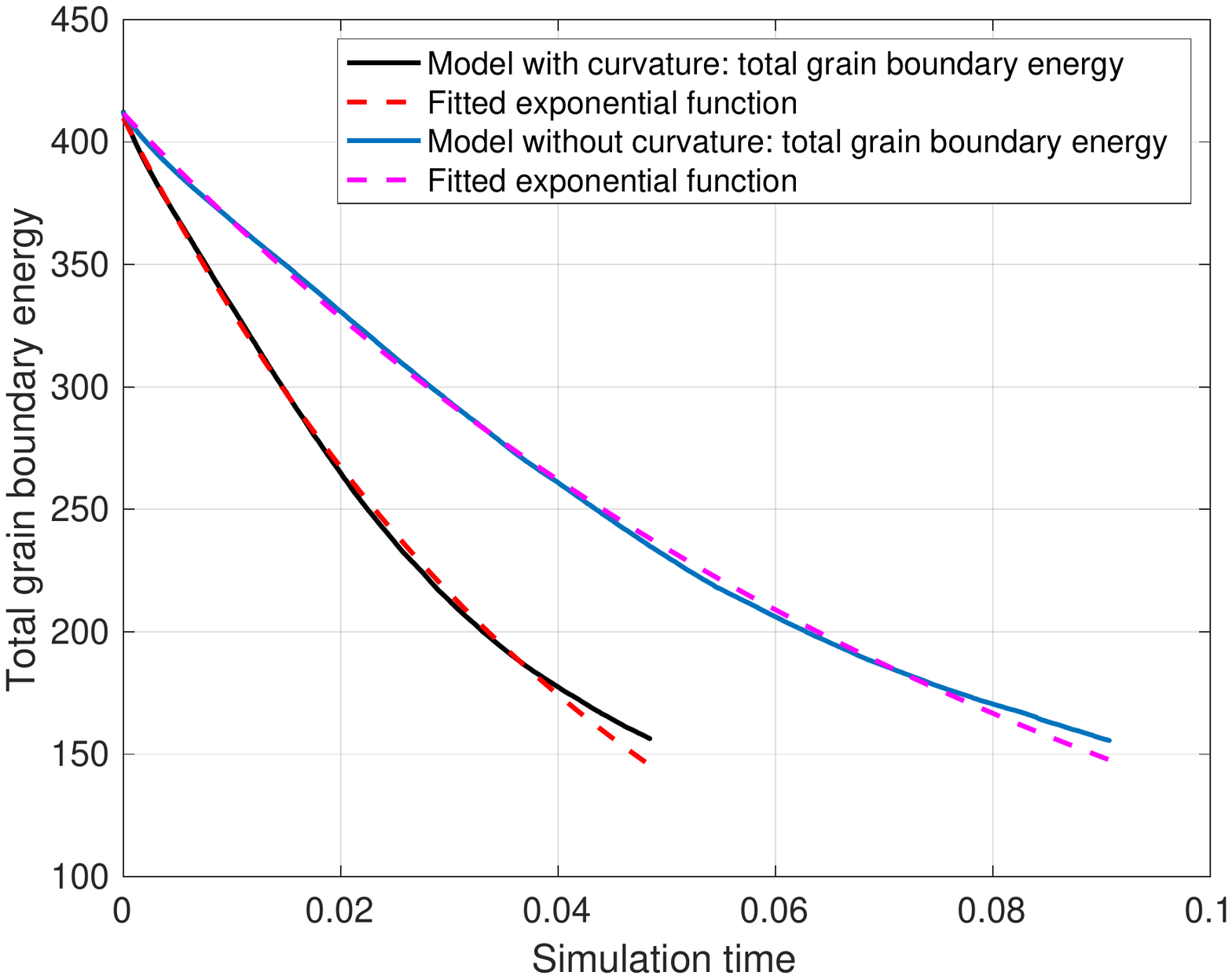}
\includegraphics[width=2.1in]{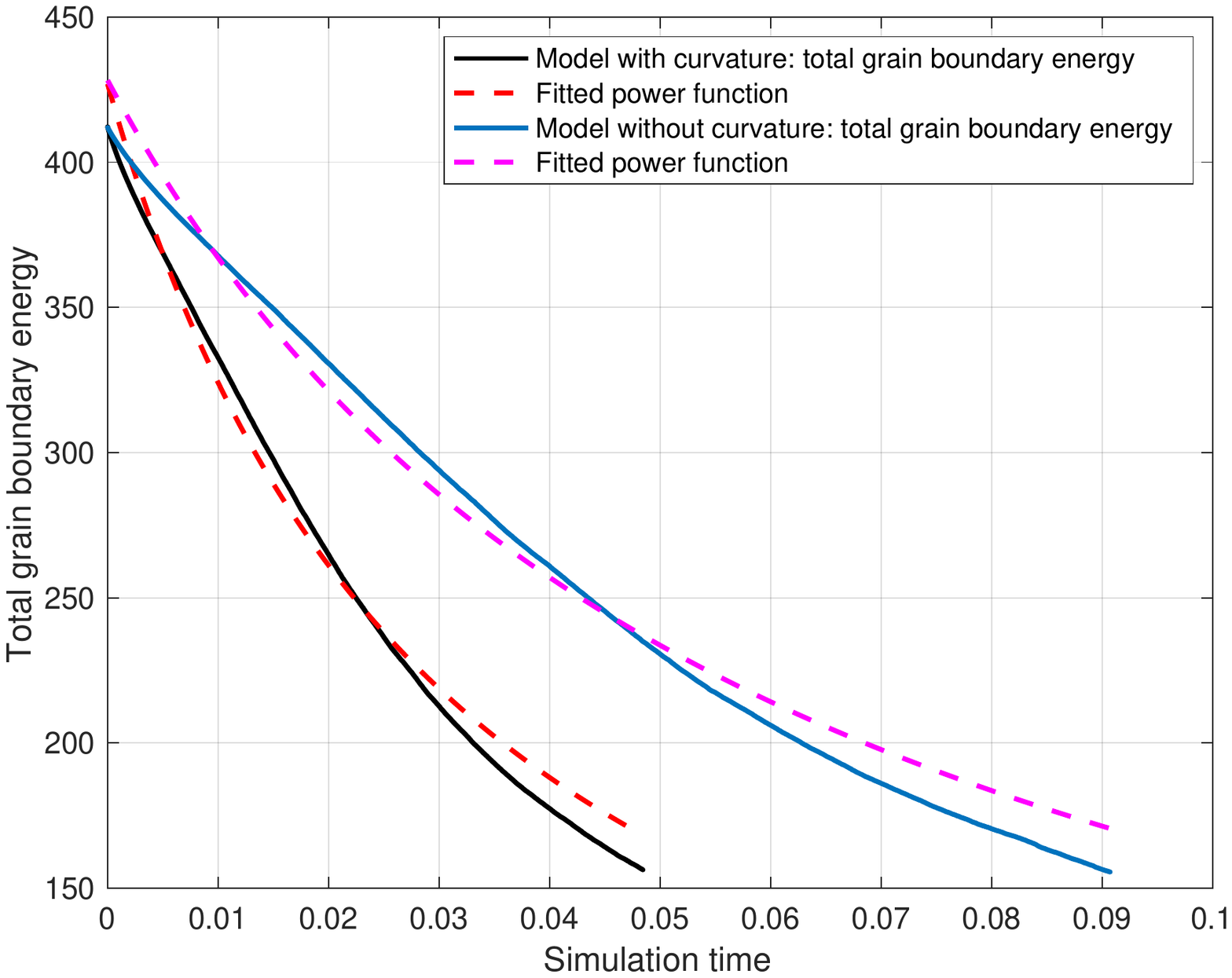}
\vspace{-1.8cm}
\caption{\footnotesize One run of $2$D trial with $10000$ initial
  grains: {\it (a) Left plot,} Total grain boundary energy plot, model
  with curvature  (solid black) versus fitted  exponential decaying function
  $y(t)=409.8\exp(-21.38t)$ (dashed red). Total grain boundary energy plot, model
  without curvature  (solid blue) versus fitted  exponential decaying function
  $y(t)=411.6\exp(-11.3t)$ (dashed magenta); {\it (b) Right
    plot},  Total grain boundary energy plot, model
  with curvature  (solid black) versus fitted  power function
  $y_1(t)=426.9841(1.0+31.746t)^{-1}$ (dashed red). Total grain boundary energy plot, model
  without curvature  (solid blue) versus fitted power function
  $y_1(t)=428.2145(1.0+16.6556t)^{-1}$ (dashed magenta). Mobility of the triple
junctions is
$\eta=10$ and the misorientation parameter $\gamma=1$.  Grain boundary
energy density $\sigma=1+0.25\sin^4(2\Delta \alpha).$}\label{fig10}
\end{figure}

\begin{figure}[hbtp]
\centering
\vspace{-1.8cm}
\includegraphics[width=2.1in]{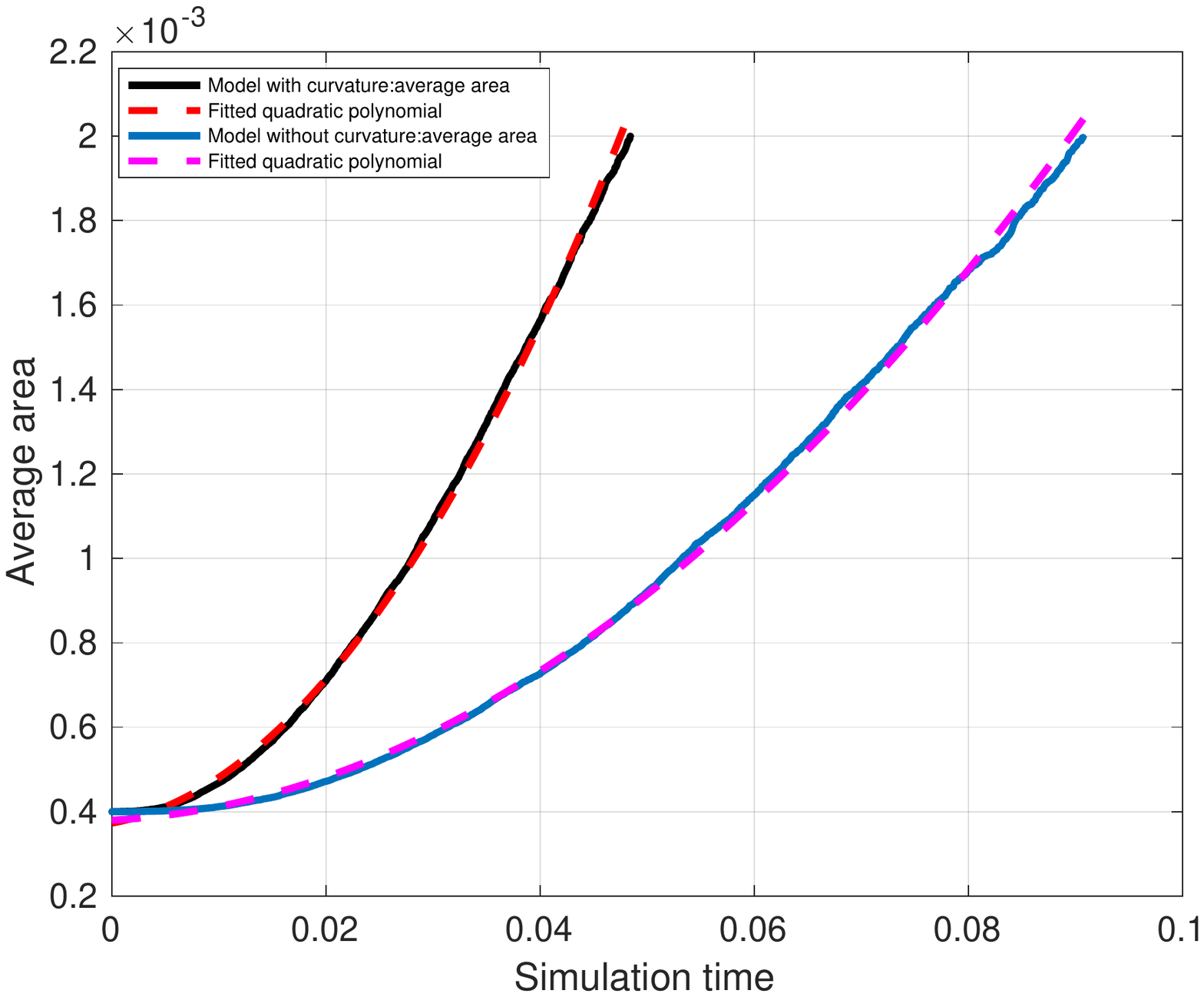}
\includegraphics[width=2.1in]{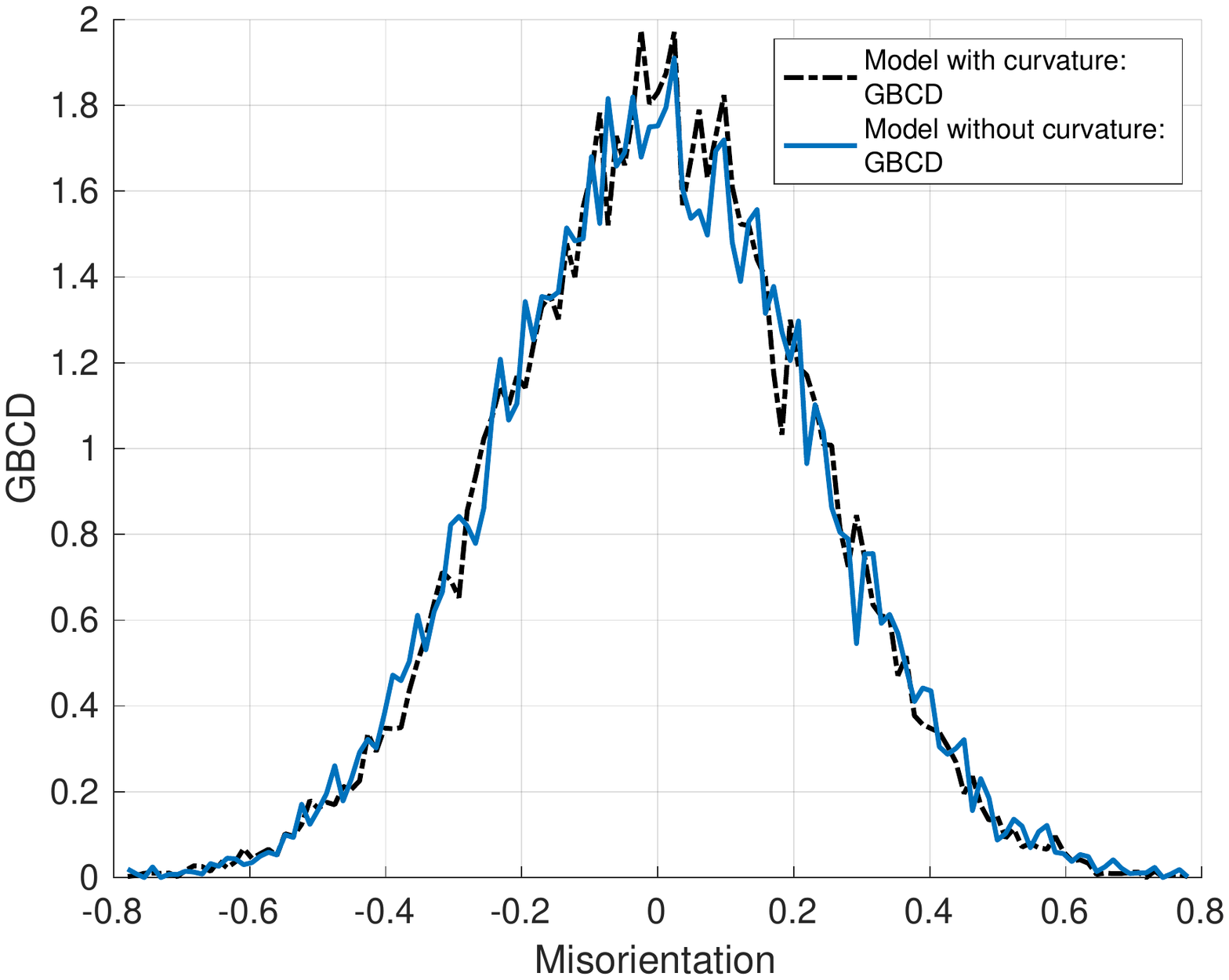}
\vspace{-1.8cm}
\caption{\footnotesize {\it (a) Left plot,} One run of $2$D trial with $10000$ initial
  grains: Growth of the average area of the
  grains, model with curvature (solid black) versus fitted  quadratic
  polynomial function  $y(t)=0.6258t^2+0.004538t+0.0003732$ (dashed
  red). Growth of the average area of the
  grains, model without curvature (solid blue) versus fitted  quadratic
  polynomial function  $y(t)=0.1866t^2+0.001377t+0.0003799$ (dashed magenta);
 {\it (b) Right plot}, GBCD (black curve,  model with curvature) and  GBCD (blue curve,  model without curvature) at $T_{\infty}$ averaged over 3 runs of $2$D trials with $10000$ initial
  grains. Mobility of the triple junctions is $\eta=10$ and the
  misorientation parameter $\gamma=1$.  Grain boundary
energy density $\sigma=1+0.25\sin^4(2\Delta \alpha).$}\label{fig11}
\end{figure}

\begin{figure}[hbtp]
\centering
\vspace{-1.8cm}
\includegraphics[width=2.1in]{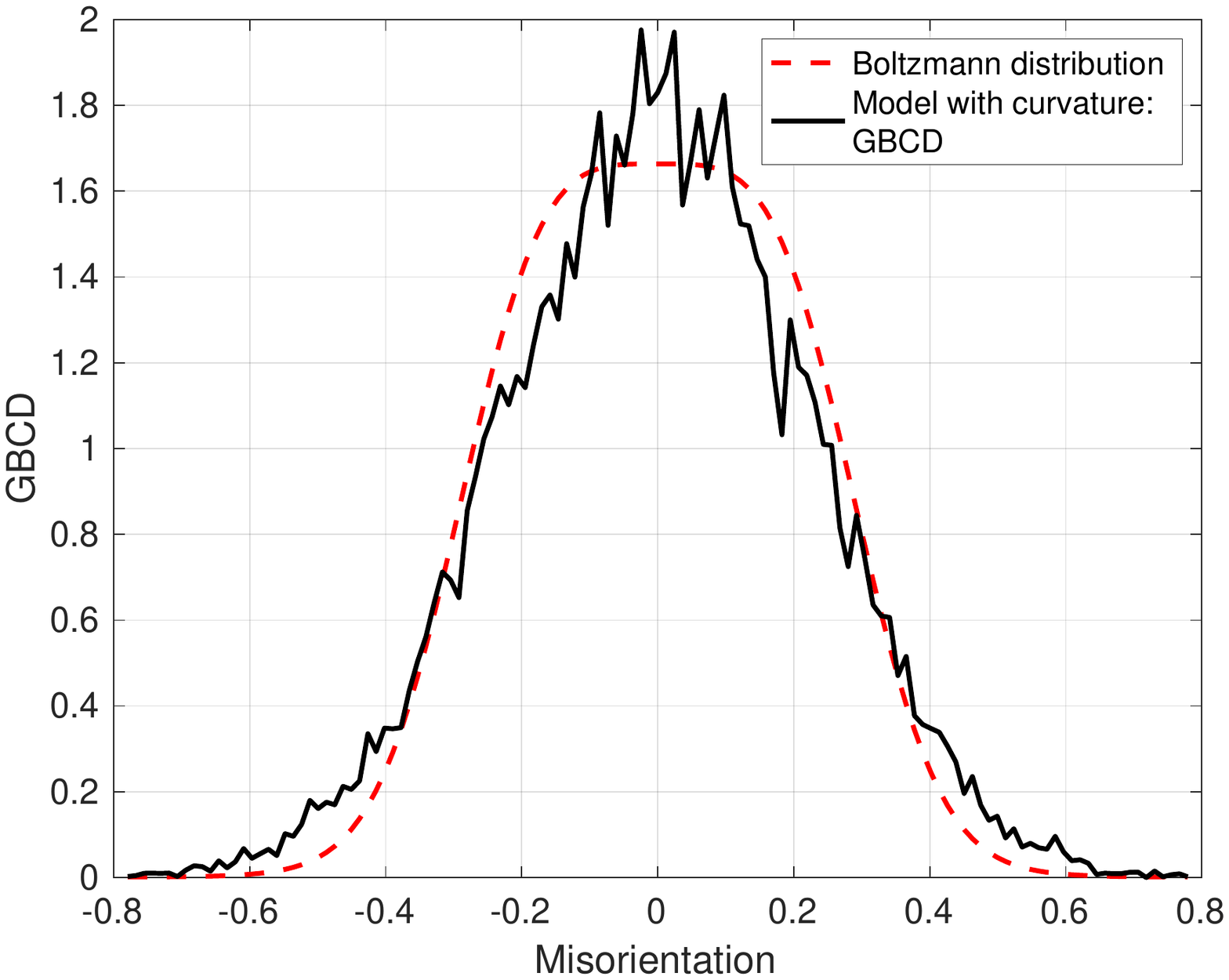}
\includegraphics[width=2.1in]{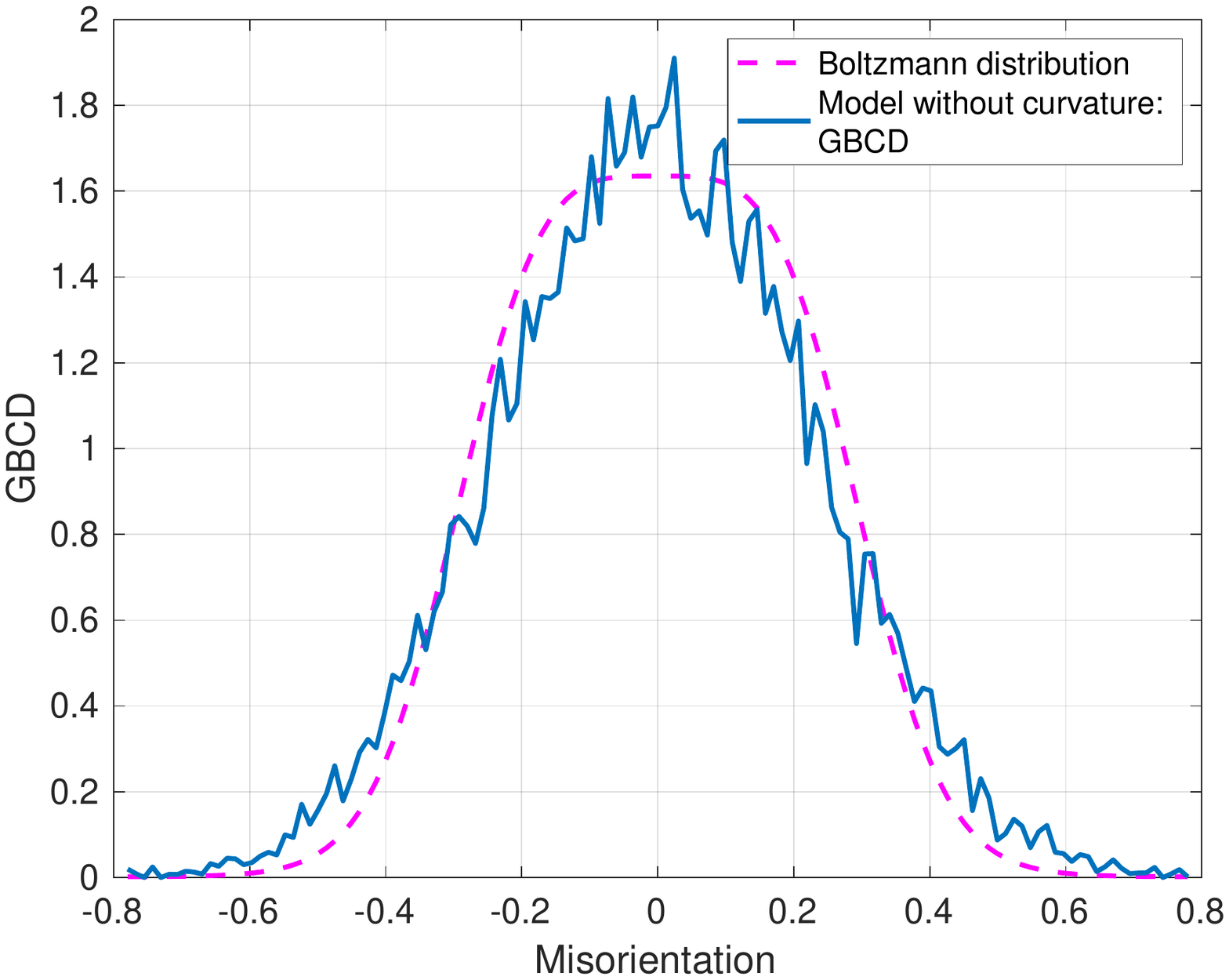}
\vspace{-1.8cm}
\caption{\footnotesize {\it (a) Left plot,} Model with curvature, GBCD (black curve) at $T_{\infty}$ averaged over 3 runs of $2$D trials with $10000$ initial
  grains versus Boltzmann distribution with ``temperature''-
$D\approx 0.035$ (dashed red curve). 
 {\it (b) Right plot}, Model without curvature, GBCD (blue curve) at $T_{\infty}$ averaged over 3 runs of $2$D trials with $10000$ initial
  grains versus Boltzmann distribution with ``temperature''-
$D\approx 0.035$ (dashed magenta curve). Mobility of the triple
junctions is $\eta=10$ and the misorientation parameter $\gamma=1$.  Grain boundary
energy density $\sigma=1+0.25\sin^4(2\Delta \alpha).$}\label{fig12}
\end{figure}

\begin{figure}[hbtp]
\centering
\vspace{-1.8cm}
\includegraphics[width=2.1in]{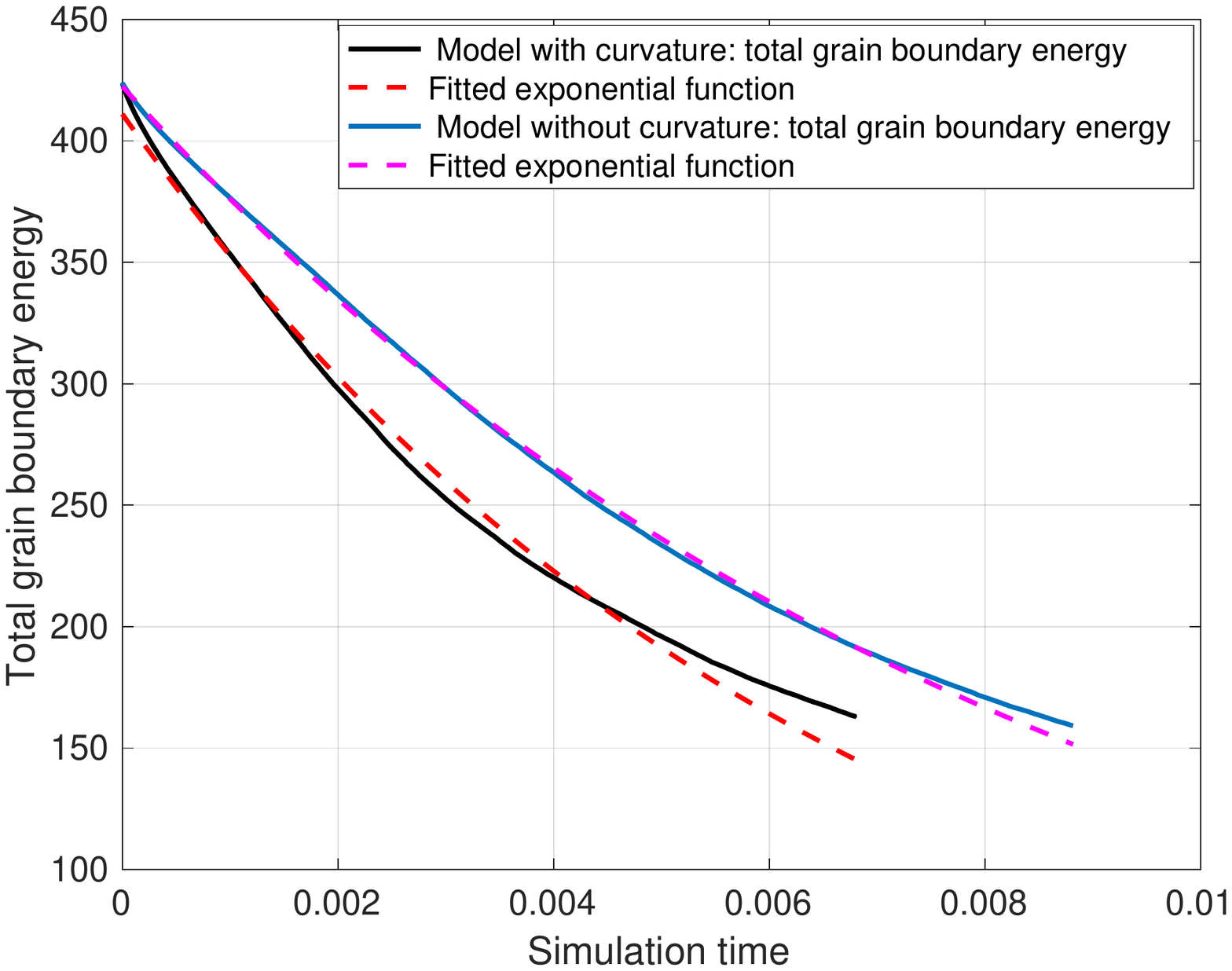}
\includegraphics[width=2.1in]{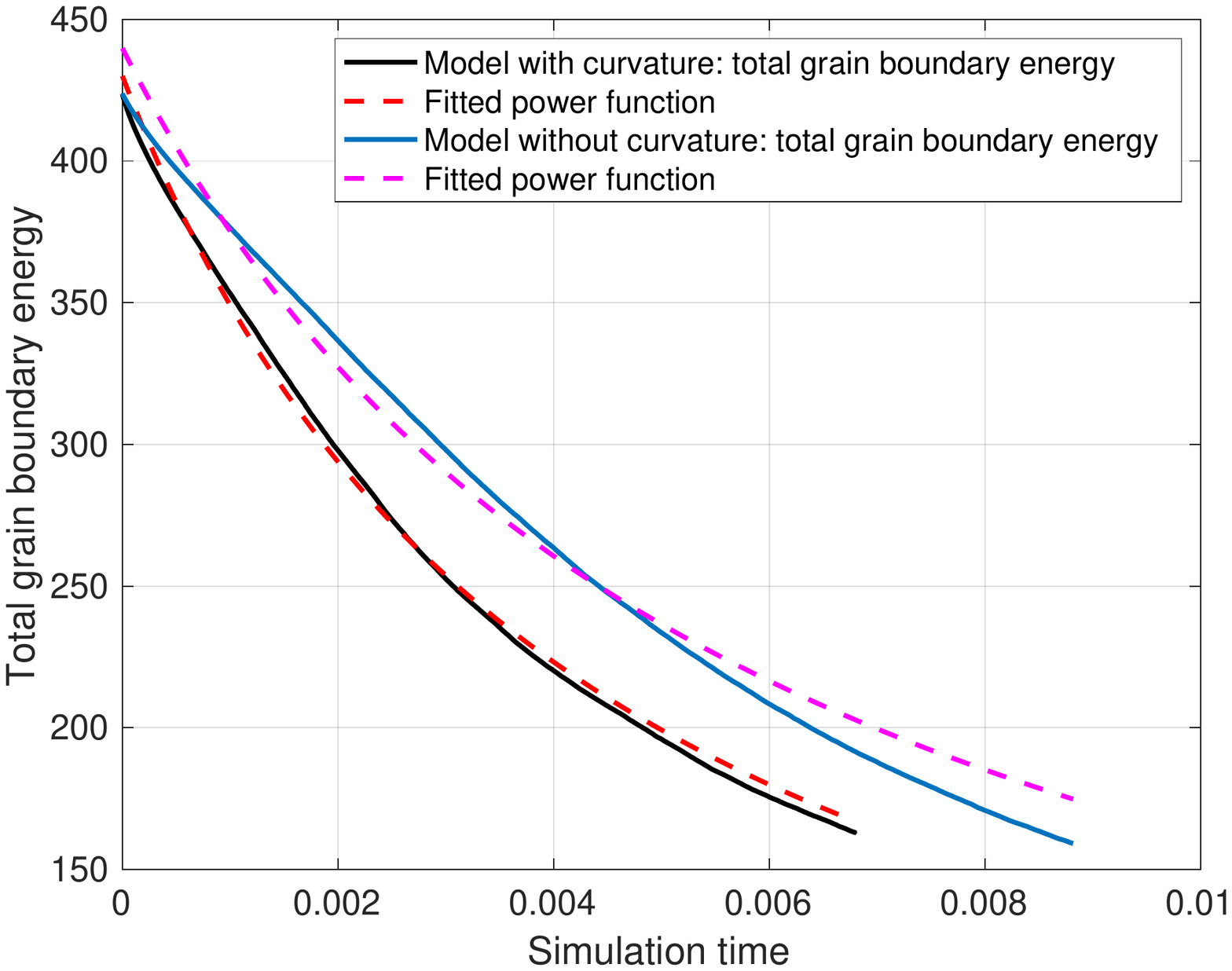}
\vspace{-1.8cm}
\caption{\footnotesize One run of $2$D trial with $10000$ initial
  grains: {\it (a) Left plot,} Total grain boundary energy plot, model
  with curvature  (solid black) versus fitted  exponential decaying function
  $y(t)=411\exp(-153t)$ (dashed red). Total grain boundary energy plot, model
  without curvature  (solid blue) versus fitted  exponential decaying function
  $y(t)=422.4\exp(-116.3t)$ (dashed magenta); {\it (b) Right
    plot},  Total grain boundary energy plot, model
  with curvature  (solid black) versus fitted  power function
  $y_1(t)=430.0278(1.0+231.6960t)^{-1}$ (dashed red). Total grain boundary energy plot, model
  without curvature  (solid blue) versus fitted power function
  $y_1(t)=439.8212(1.0+171.9395t)^{-1}$ (dashed magenta).
 Mobility of the triple
junctions is
$\eta=100$ and the misorientation parameter $\gamma=1$. Grain boundary
energy density $\sigma=1+0.25\sin^2(2\Delta \alpha).$}\label{fig1}
\end{figure}
In
particular, the main goal of our numerical experiments is to
illustrate the time scales effect of curvature - through grain
boundary mobility $\mu$, mobility of the triple junctions $\eta$,
and misorientation parameter $\gamma$ on how the grain boundary
system decays energy and coarsens with time. For that we will numerically study
evolution of the total grain boundary energy,
\begin{equation}\label{eq:7.1}
 E(t)
  =
  \sum_{j}
  \sigma
  (
  \Delta^{(j)}\alpha
  )
  |\Gamma_t^{(j)}|,
\end{equation}
where as before, $\Delta^{(j)}\alpha$ is a misorientation of the grain
boundary $\Gamma^{(j)}_t$, and  $|\Gamma^{(j)}_t|$ is the length of the
grain boundary. We will also consider the growth of the average area, defined as,
\begin{equation}\label{eq:7.2}
A(t)=\frac{4}{N(t)},
\end{equation}
here $4$ is the total area of the sample, and $N(t)$ is the total
number of grains at time $t$. The growth of the average area is
closely related to the coarsening rate of the grain system that
undergoes critical/disappearance events. However, it is important to
note that critical events include not only grain disappearance, but
also facet/grain boundary disappearance, facet interchange,
and splitting of unstable junctions, for more details about
  numerical modeling of critical events in 2D, see e.g. \cite{MR2272185,MR2772123}. Further,
we will investigate the distribution of the grain boundary character
distribution (GBCD) $\rho(\Delta ^{(j)} \alpha)$ at a final time of
the simulations $T_{\infty}$ (defined below) under a simplified
assumption on a grain boundary energy density, namely that $\sigma(
\Delta^{(j)}\alpha)$ is only a function of the misorientation, see
also Sections \ref{sec:1}-\ref{sec:2}.
The GBCD (in this context) is an empirical
statistical measure of the relative length (in 2D) of the grain
boundary interface with a given lattice misorientation,
\begin{eqnarray}
&\rho(\Delta ^{(j)} \alpha, t) =\mbox{ relative length of interface of
  lattice misorientation } \Delta ^{(j)} \alpha \mbox{ at time }t,\nonumber\\
&\mbox{ normalized so that } \int_{\Omega_{\Delta ^{(j)} \alpha}} \rho
  d \Delta ^{(j)} \alpha=1, \label{eq:7.3}
\end{eqnarray}
where we consider $\Omega_{\Delta ^{(j)} \alpha}=[-\frac{\pi}{4},
\frac{\pi}{4}]$ in the numerical experiments below (for planar grain
boundary network, it is reasonable to consider such range for the
misorientations). For more details, see for example \cite{DK:gbphysrev, Katya-Chun-Mzn2}. In all our tests below,
we compare the GBCD at $T_{\infty}$ to the stationary solution of the
Fokker-Planck equation, the  Boltzmann distribution for the grain
boundary energy density $\sigma
  (
  \Delta^{(j)}\alpha
  )$,
\begin{equation} \begin{aligned} \label{eq:7.4}
& \rho_D(\Delta^{(j)}\alpha) =
\frac{1}{Z_D}e^{-\frac{\sigma(\Delta^{(j)}\alpha)}{D}}, \\
& \textrm{with partition function, i.e.,normalization factor} \\
&Z_D = \int_{\Omega_{\Delta^{(j)}\alpha}}
e^{-\frac{\sigma(\Delta^{(j)}\alpha)}{D}}d\Delta^{(j)} \alpha,
\end{aligned}\end{equation}
\cite{DK:BEEEKT,DK:gbphysrev,MR2772123,MR3729587}. We employ the
Kullback-Leibler relative entropy test to obtain a unique
``temperature-like'' parameter $D$ and to construct the corresponding
Boltzmann distribution for the GBCD at $T_{\infty}$ as it was originally done in \cite{DK:BEEEKT,DK:gbphysrev,MR2772123,MR3729587}.
Note, as we also discussed in Section \ref{sec:3a},  GBCD is a primary candidate to characterize texture of the grain boundary
network, and is inversely related to the grain boundary energy density
as discovered in experiments and simulations.
The
reader can consult, for example, \cite{DK:BEEEKT,DK:gbphysrev,
  MR2772123, MR3729587} for more details about GBCD and the theory of
the GBCD.
 In the numerical experiments in this paper, we consider two choices
 for the
grain boundary energy density as plotted in Figure \ref{gbend} and given below,
\[\sigma(\Delta^{(j)} \alpha)=1+0.25\sin^2(2\Delta^{(j)} \alpha)
  \mbox{ and } \sigma(\Delta^{(j)} \alpha)=1+0.25\sin^4(2\Delta^{(j)} \alpha).\]
\par  We consider simulation of 2D grain
boundary network using the algorithm based on the sharp interface
approach  \cite{Katya-Chun-Mzn2} with dynamic misorientation and finite mobility of the triple junctions which we also extended to a
model without curvature \eqref{eq:6.7}.  Note, that the algorithm
\cite{Katya-Chun-Mzn2} is a further extension of the algorithm from
\cite{MR2772123,MR3729587}. We recall that in the numerical scheme
we work with a variational principle. The cornerstone of the algorithm, which assures its
stability, is the discrete dissipation inequality for the total grain boundary energy that holds when
either the discrete Herring boundary condition ($\eta \to \infty$) or  discrete ``dynamic boundary
condition'' (finite mobility $\eta$ of the triple junctions,  third
equation of (\ref{eq:6.4}) or of (\ref{eq:6.7})) is
satisfied at the triple junctions. We also recall that in the
numerical algorithm for model (\ref{eq:6.4}) we impose
the Mullins' theory (first equation of (\ref{eq:6.4})) as the local
evolution law for the grain boundaries (and the time scale $\mu$ is
kept finite). For model (\ref{eq:6.7}), $\mu\to \infty$, hence the
dynamics of the grain boundaries are defined by the evolution of the
triple junctions (the third equation of (\ref{eq:6.7})) and by the
grains rotation  (the second equation of (\ref{eq:6.7})).
The reader can consult
\cite{MR2772123,MR3729587, Katya-Chun-Mzn2} for more details about
numerical algorithm based on the sharp interface approach.
\par In all the numerical tests below we initialized our system with
$10^4$ cells/grains with normally distributed misorientation angles at
initial time $t=0$. We also assume that the final time of the
simulations $T_{\infty}$ is the time when
approximately $80\%$ of grains disappeared from the system, namely the
time when only about $2000$ cells/grains remain. The final time is
selected based on the system \eqref{eq:6.4} with no dynamic misorientations
($\gamma=0$) and with the Herring condition at the triple junctions ($\eta
\to \infty$)
and, it is selected to ensure that statistically significant number of grains still remain
in the system and that
the system reached its statistical stead-state. Therefore, all the
numerical results which are presented below are for the grain boundary
system that undergoes critical/disappearance events.
\par First,  we study the effect of dynamics of triple
junctions on the dissipation and coarsening of the system,
see Figures \ref{fig7}-\ref{fig18} (we consider different values of
misorientation parameter $\gamma$ for these tests). We observe that for smaller
values of the mobility of the triple junctions $\eta$, the energy
decay $E(t)$ is well-approximated by an exponential function for both
models, for the model
with curvature \eqref{eq:6.4} and for the model without curvature \eqref{eq:6.7}, see Figures
\ref{fig7}  and \ref{fig10} (left plots). This is consistent with the results of our
theory, see Sections \ref{sec:1}-\ref{sec:2} and \cite{Katya-Chun-Mzn1, Katya-Chun-Mzn2}, even
though, the theoretical results are obtained under assumption of no
critical events and $\mu\to \infty$ (for grain growth model without
curvature).  This result indicates that for lower mobility of the triple
junctions $\eta$,  the dynamics of triple
junctions have a dominant effect on the grain growth, see model \eqref{eq:6.7}. This explains the similarity in the energy decay for
grain growth model with curvature  \eqref{eq:6.4} and  without curvature \eqref{eq:6.7} when $\eta=10$, Figures
\ref{fig7}  and \ref{fig10}.
In comparison, we also present fit to a power law decaying
function, see Figures \ref{fig7} and \ref{fig10} (right plots). The power law function
does not seem to give as good approximation in this case. 
\begin{figure}[hbtp]
\centering
\vspace{-1.8cm}
\includegraphics[width=2.1in]{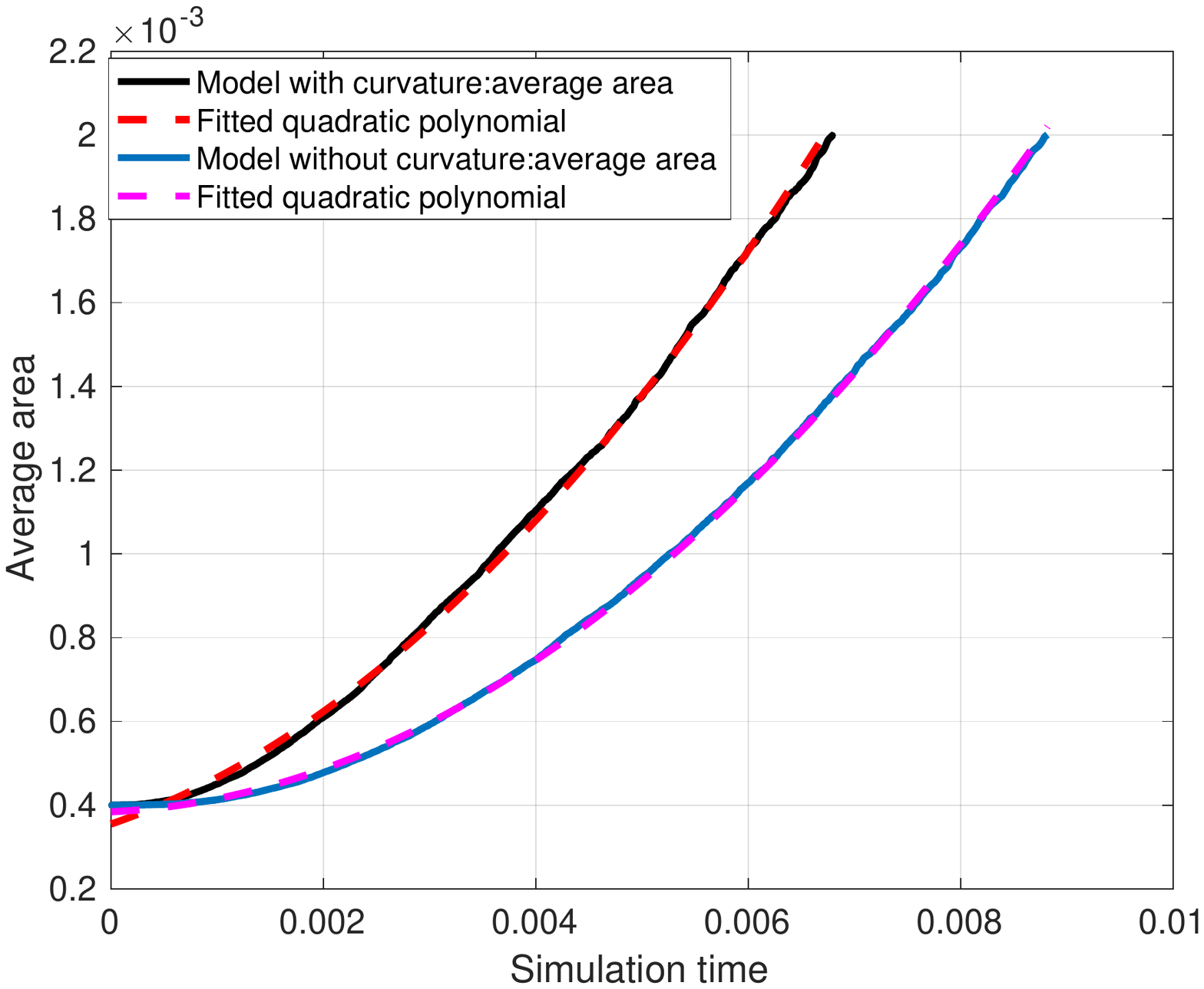}
\includegraphics[width=2.1in]{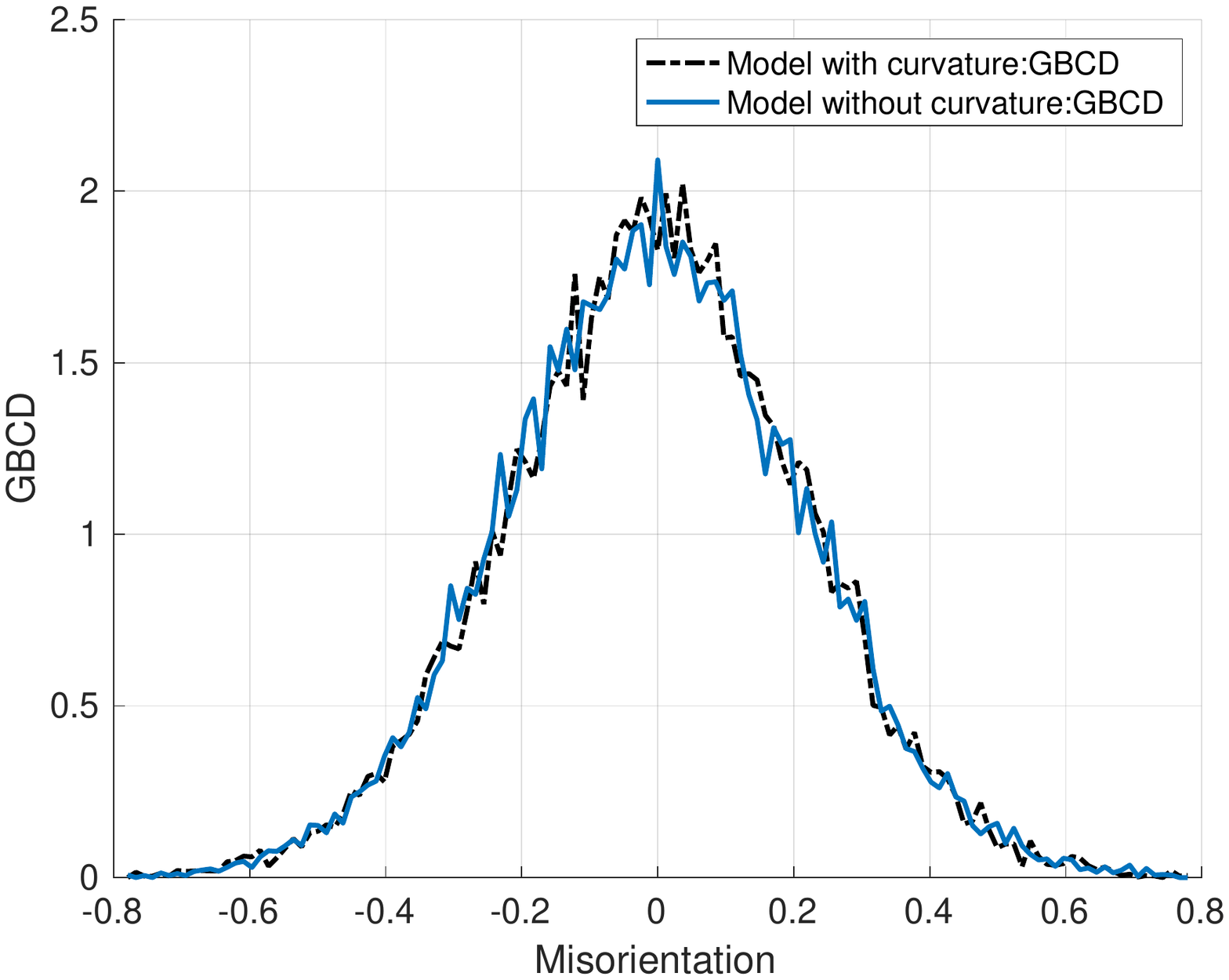}
\vspace{-1.8cm}
\caption{\footnotesize {\it (a) Left plot,} One run of $2$D trial with $10000$ initial
  grains: Growth of the average area of the
  grains, model with curvature (solid black) versus fitted  quadratic
  polynomial function  $y(t)=23.47t^2+0.08748t+0.0003549$ (dashed
  red). Growth of the average area of the
  grains, model without curvature (solid blue) versus fitted  quadratic
  polynomial function  $y(t)=19.74t^2+0.01157t+0.0003843$ (dashed magenta);
 {\it (b) Right plot}, GBCD (black curve,  model with curvature) and
  GBCD (blue curve,  model without curvature) at $T_{\infty}$ averaged over 3 runs of $2$D trials with $10000$ initial
  grains. Mobility of the triple junctions is $\eta=100$ and the
  misorientation parameter $\gamma=1$.  Grain boundary
energy density $\sigma=1+0.25\sin^2(2\Delta \alpha).$}\label{fig2}
\end{figure}

\begin{figure}[hbtp]
\centering
\vspace{-1.8cm}
\includegraphics[width=2.1in]{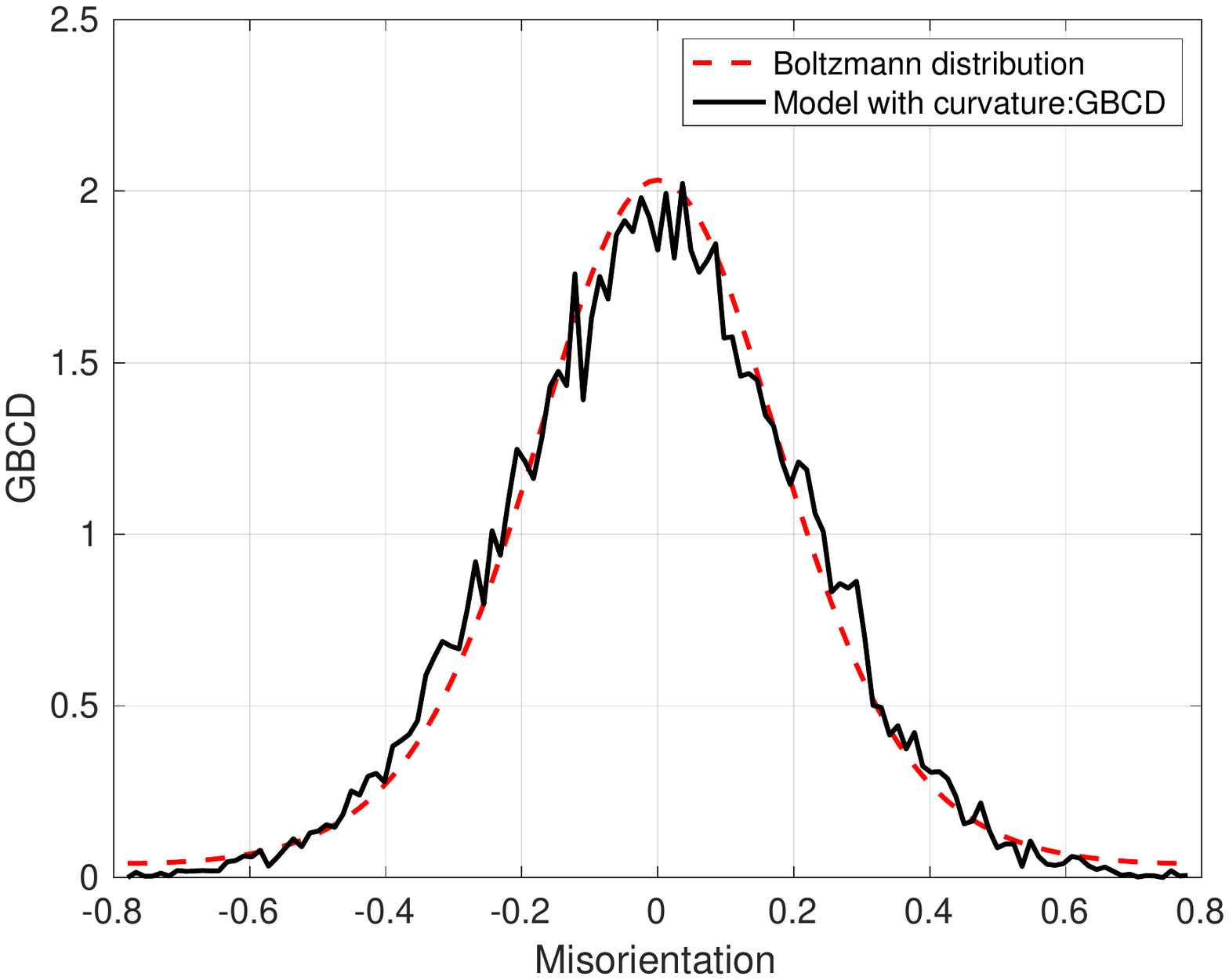}
\includegraphics[width=2.1in]{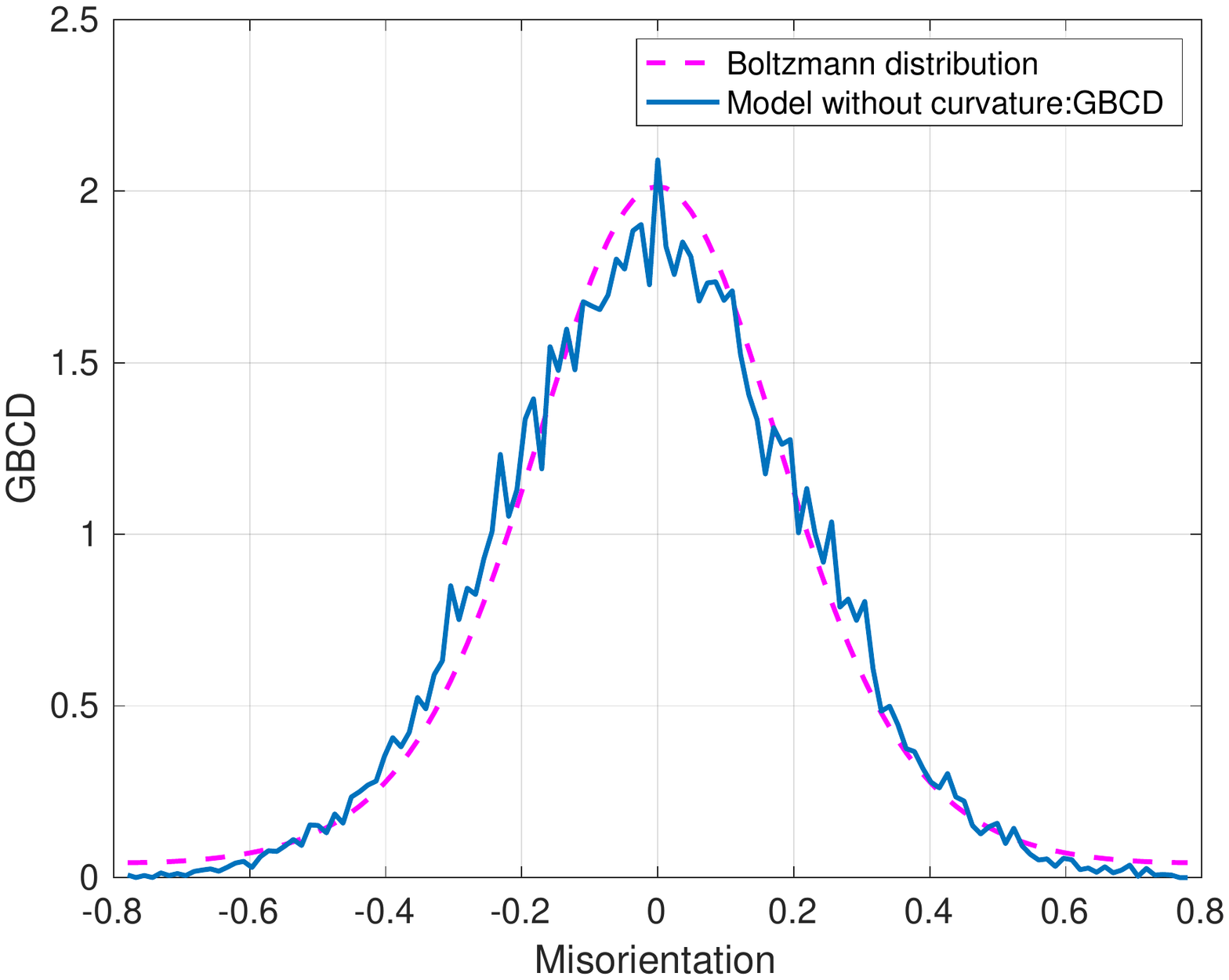}
\vspace{-1.8cm}
\caption{\footnotesize {\it (a) Left plot,} Model with curvature, GBCD (black curve) at $T_{\infty}$ averaged over 3 runs of $2$D trials with $10000$ initial
  grains versus Boltzmann distribution with ``temperature''-
$D\approx 0.0641$ (dashed red curve). 
 {\it (b) Right plot}, Model without curvature, GBCD (blue curve) at $T_{\infty}$ averaged over 3 runs of $2$D trials with $10000$ initial
  grains versus Boltzmann distribution with ``temperature''-
$D\approx 0.0651$ (dashed magenta curve). Mobility of the triple
junctions is $\eta=100$ and the misorientation parameter
$\gamma=1$.  Grain boundary
energy density $\sigma=1+0.25\sin^2(2\Delta \alpha).$}\label{fig3}
\end{figure}
However, for a
larger value of $\eta=100$, Figures \ref{fig1}, \ref{fig4},
\ref{fig13} and \ref{fig16},  we obtain that the total grain boundary energy does not
follow exponential decay anymore for the model with curvature
\eqref{eq:6.4}, but rather the energy decay is closer to a power
law. Thus, the curvature time scale-the grain boundary evolution has a dominant effect for large $\eta$. However, for the
model without curvature \eqref{eq:6.7}, the energy decay is still well
approximated by the exponential function which is consistent with the
theory, Sections \ref{sec:1}-\ref{sec:2}. Note
also, that the numerically observed energy decay rates increase with the
mobility $\eta$  of the triple junctions which is also consistent with
the developed theory \cite{Katya-Chun-Mzn2}. In
addition, we observe that the average area grows as a quadratic function in
time for the finite mobility $\eta$ of the triple junctions, Figures
 \ref{fig8}, \ref{fig11}, \ref{fig2}, \ref{fig5}, \ref{fig14} and
\ref{fig17} (left plots) and see also our earlier work
\cite{Katya-Chun-Mzn2}.  We also observe that the coarsening rate of
grain growth slows down with the
smaller $\eta$.  In addition, we note that the energy
 decay in our numerical tests is consistent with the growth of the average area.
Moreover, we observe that neither dynamics of the triple junctions nor curvature
show as much of an effect on the GBCD,
see Figures  \ref{fig8} (right plot)-\ref{fig9}, \ref{fig11}
(right plot)-\ref{fig12},
\ref{fig2} (right plot)-\ref{fig3}, \ref{fig5} (right
plot)-\ref{fig6}, \ref{fig14} (right plot)-\ref{fig15} and \ref{fig17} (right plot)-\ref{fig18} (note,  the ``temperature'' like
parameter $D$  also accounts for various critical events--grains
disappearance, facet/grain boundary disappearance, facet interchange,
splitting of unstable junctions. It will be part of our
  future study to understand how $D$ depends on the critical events.)
\par For the other series of tests, we vary the misorientation
parameter $\gamma$, second equation of (\ref{eq:6.4}) or of
(\ref{eq:6.7}) (and we set the mobility of the triple junctions
$\eta=100$,  third equation of (\ref{eq:6.4}) or of
(\ref{eq:6.7})). We do not
observe as much of an effect on the energy decay or average area growth in this
case, but we observe the significant effect on the GBCD
and the diffusion coefficient/``temperature''-like parameter $D$,
see Figures \ref{fig7}-\ref{fig6} (with the  misorientation
parameter $\gamma=1$) and  Figures \ref{fig13}-\ref{fig18} (with
larger values of the  misorientation
parameter $\gamma$).  As concluded from our numerical
results, larger values of $\gamma$ give smaller diffusion
coefficient/''temperature''-like parameter $D$, and hence higher GBCD
peak near misorientation $0$. This is consistent with our theory that
basically, larger misorientation parameter $\gamma$ produces direct motion of misorientations
towards equilibrium state of zero misorientations, see Section
\ref{sec:1} and also \cite{Katya-Chun-Mzn2}. Furthermore, from all of
our numerical experiments with dynamic
misorientations and with different triple junction mobilities, we
observe that the GBCD at time $T_{\infty}$ is well-approximated by the Boltzmann
distribution for the grain boundary energy density see Figures  \ref{fig9}, \ref{fig12},
\ref{fig3}, \ref{fig6}, \ref{fig15} and
\ref{fig18}, as well as consistent with experimental findings as
discussed in Section \ref{sec:3a}, which is similar to the
work in \cite{DK:BEEEKT,DK:gbphysrev,
  MR2772123, MR3729587}, but more detailed analysis
needs to be done for a system that undergoes critical events to
understand the relation between GBCD, ``temperature''-like/diffusion
parameter $D$, and different
relaxation time scales, as well as the effect of the time scales
on the dissipation mechanism and certain coarsening rates.
\begin{remark}
Note that, we performed $3$ runs for each numerical test
presented in this work. We report results of a single run for the energy decay
and growth of the average area (the results from the other two
runs for each test were very similar to the presented ones),  and we
illustrate  averaged
over the $3$ runs the GBCD statistics. The curve-fitting
 for the energy and the average area plots was done using Matlab (\cite{Matlab}) toolbox cftool.
\end{remark}

\begin{figure}[hbtp]
\centering
\vspace{-1.8cm}
\includegraphics[width=2.1in]{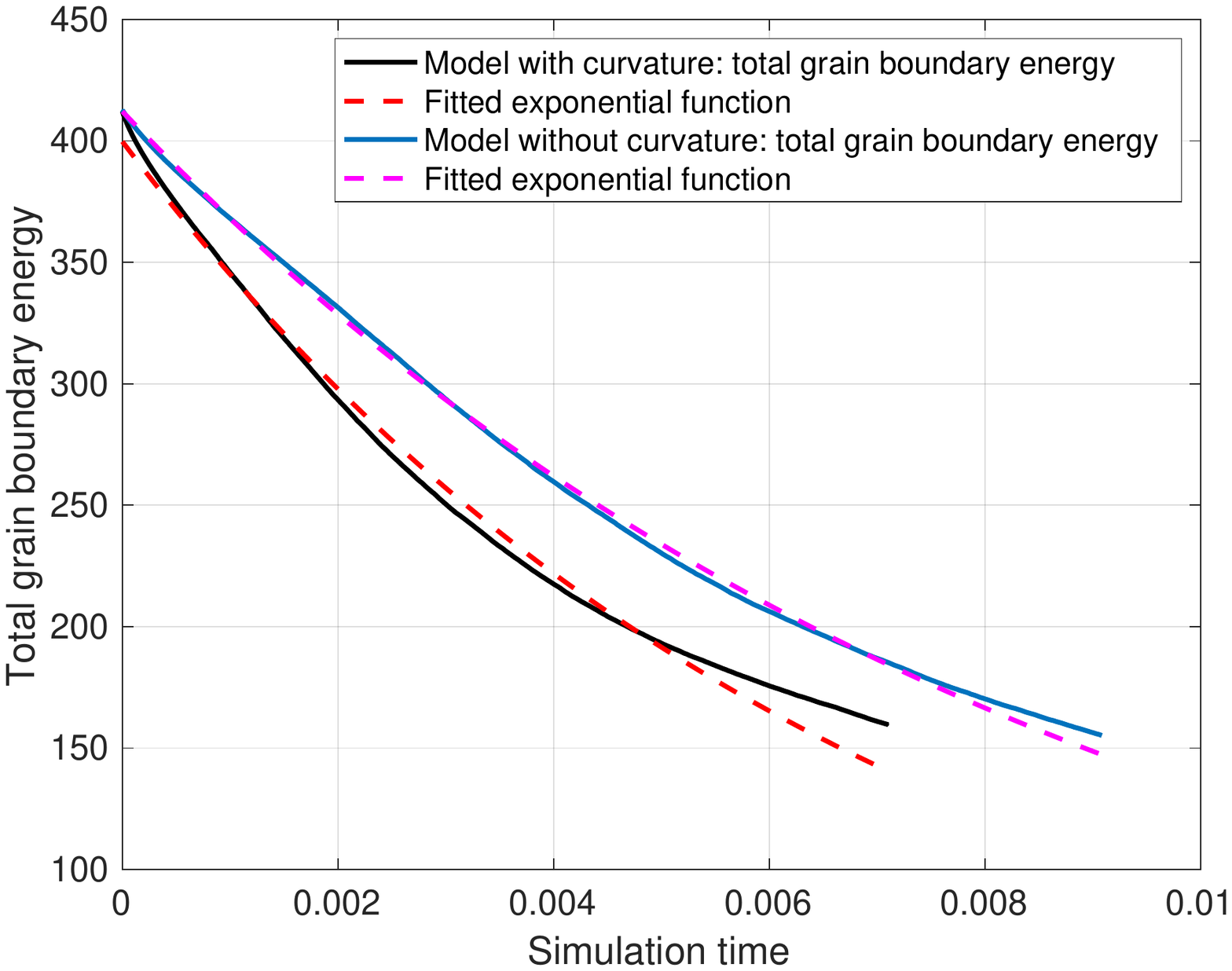}
\includegraphics[width=2.1in]{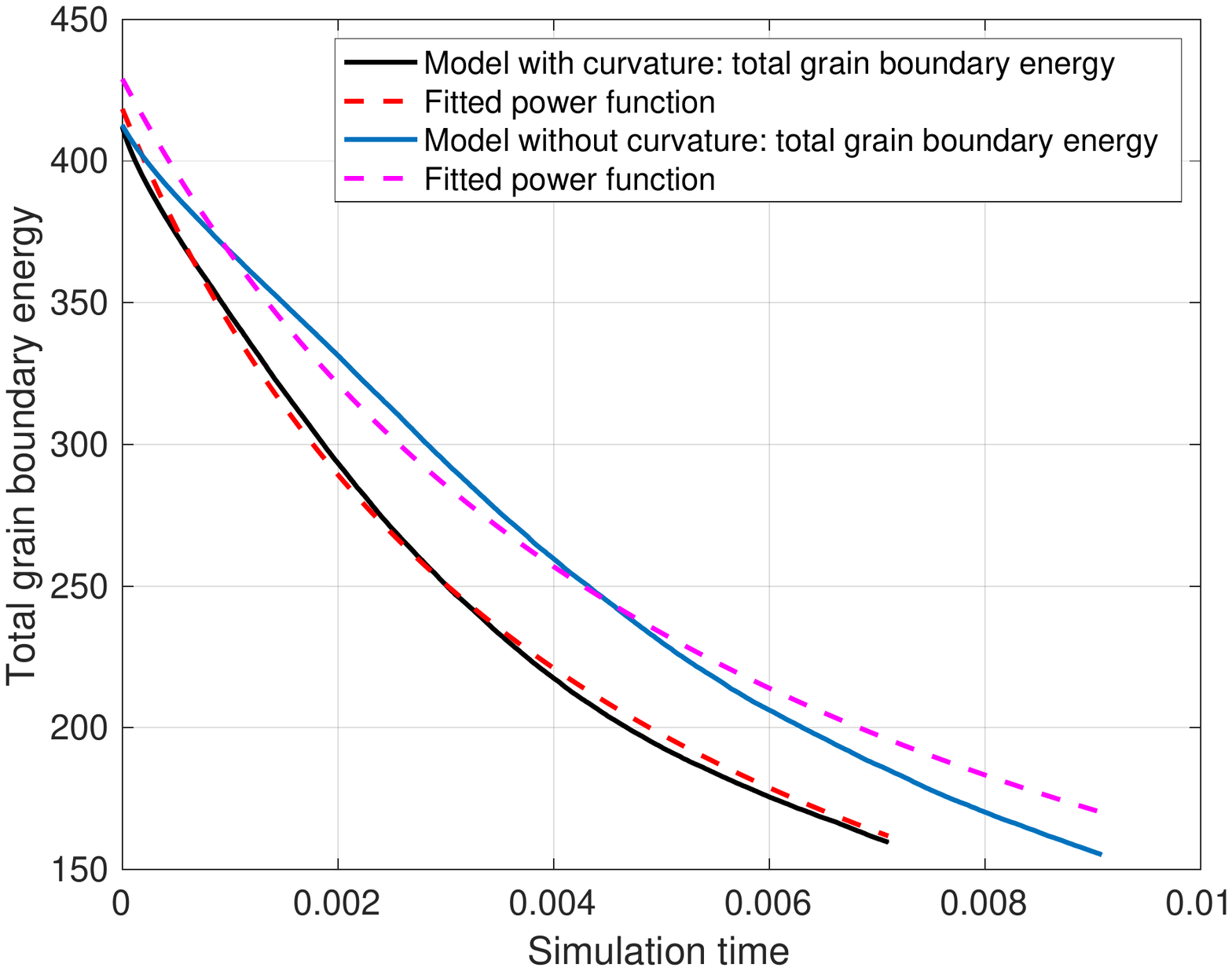}
\vspace{-1.8cm}
\caption{\footnotesize One run of $2$D trial with $10000$ initial
  grains: {\it (a) Left plot,} Total grain boundary energy plot, model
  with curvature  (solid black) versus fitted  exponential decaying function
  $y(t)=399.7\exp(-147.2t)$ (dashed red). Total grain boundary energy plot, model
  without curvature  (solid blue) versus fitted  exponential decaying function
  $y(t)=412.1\exp(-113.3t)$ (dashed magenta); {\it (b) Right
    plot},  Total grain boundary energy plot, model
  with curvature  (solid black) versus fitted  power function
  $y_1(t)=418.3970(1.0+223.2641t)^{-1}$ (dashed red). Total grain boundary energy plot, model
  without curvature  (solid blue) versus fitted power function
  $y_1(t)=428.9782(1.0+167.5042t)^{-1}$ (dashed magenta).
 Mobility of the triple
junctions is
$\eta=100$ and the misorientation parameter $\gamma=1$.  Grain boundary
energy density $\sigma=1+0.25\sin^4(2\Delta \alpha).$}\label{fig4}
\end{figure}

\begin{figure}[hbtp]
\centering
\vspace{-1.8cm}
\includegraphics[width=2.1in]{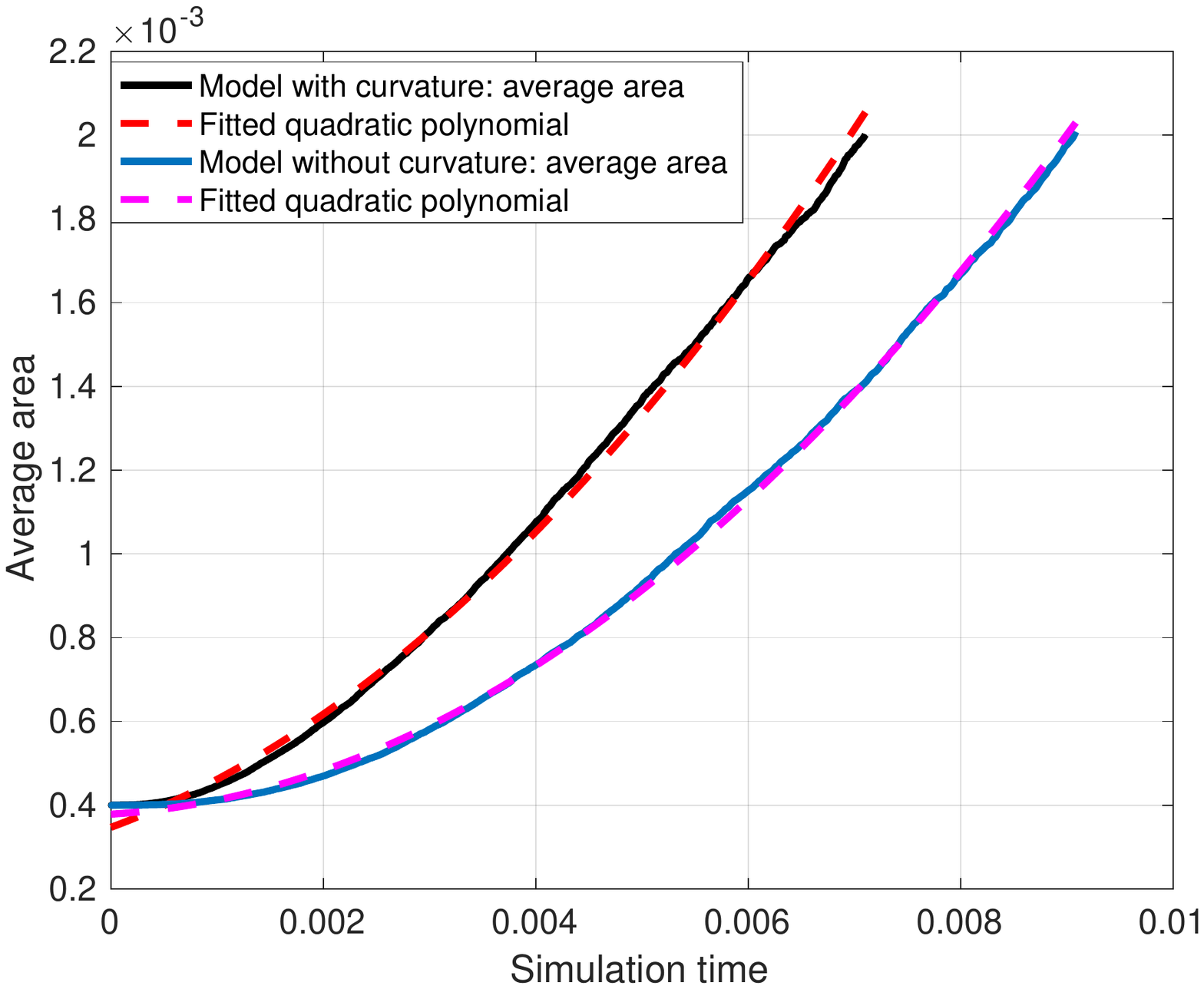}
\includegraphics[width=2.1in]{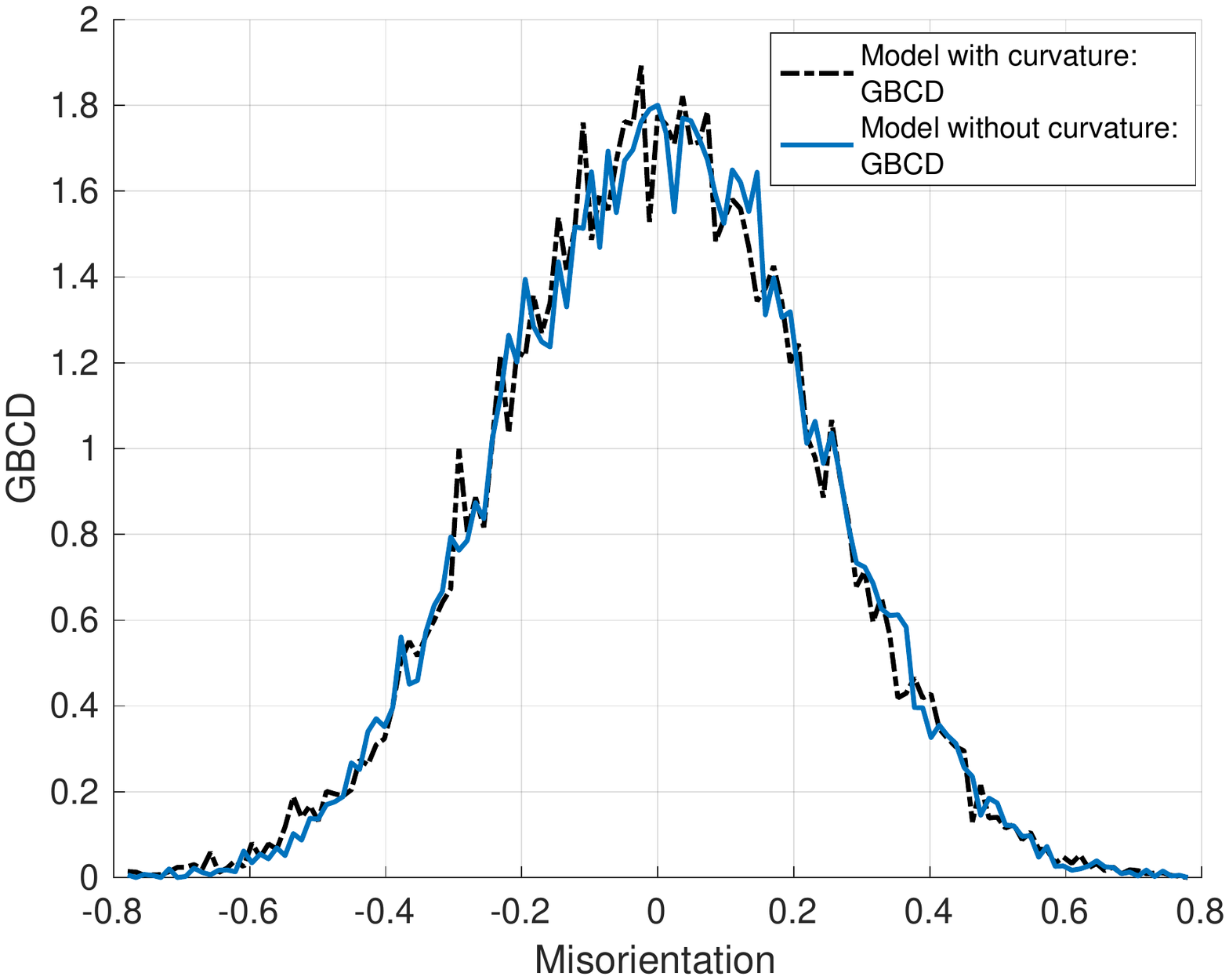}
\vspace{-1.8cm}
\caption{\footnotesize {\it (a) Left plot,} One run of $2$D trial with $10000$ initial
  grains: Growth of the average area of the
  grains, model with curvature (solid black) versus fitted  quadratic
  polynomial function  $y(t)=20.63t^2+0.09393t+0.0003472$ (dashed
  red). Growth of the average area of the
  grains, model without curvature (solid blue) versus fitted  quadratic
  polynomial function  $y(t)=18.31t^2+0.01553t+0.0003786$ (dashed magenta);
 {\it (b) Right plot},  GBCD (black curve,  model with curvature) and GBCD (blue curve,  model without curvature) at $T_{\infty}$ averaged over 3 runs of $2$D trials with $10000$ initial
  grains. Mobility of the triple junctions is $\eta=100$ and the
  misorientation parameter $\gamma=1$.  Grain boundary
energy density $\sigma=1+0.25\sin^4(2\Delta \alpha).$}\label{fig5}
\end{figure}

\begin{figure}[hbtp]
\centering
\vspace{-1.8cm}
\includegraphics[width=2.1in]{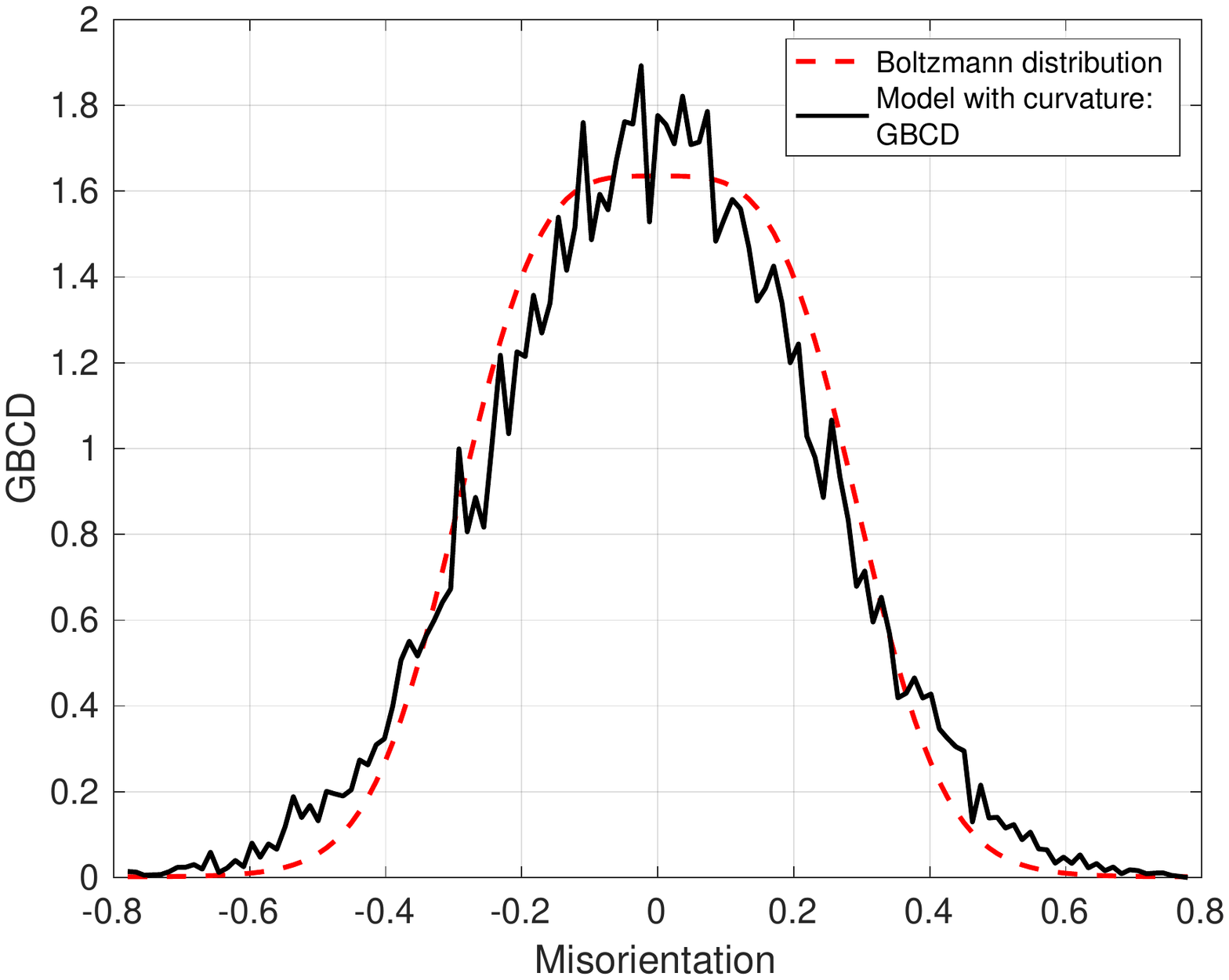}
\includegraphics[width=2.1in]{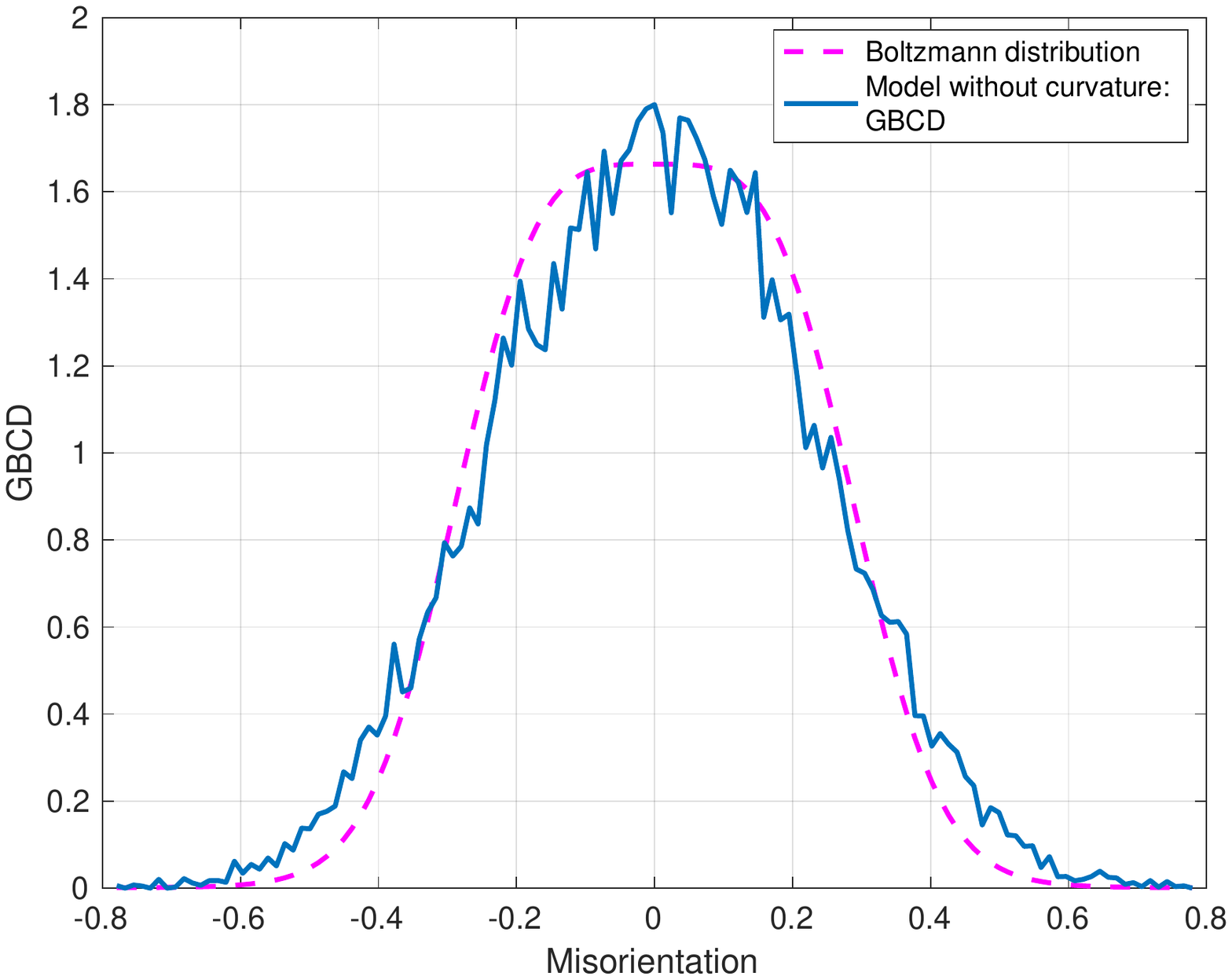}
\vspace{-1.8cm}
\caption{\footnotesize {\it (a) Left plot,} Model with curvature,  GBCD (black curve) at $T_{\infty}$ averaged over 3 runs of $2$D trials with $10000$ initial
  grains versus Boltzmann distribution with ``temperature''-
$D\approx 0.037$ (dashed red curve). 
 {\it (b) Right plot}, Model without curvature, GBCD (blue curve) at $T_{\infty}$ averaged over 3 runs of $2$D trials with $10000$ initial
  grains versus Boltzmann distribution with ``temperature''-
$D\approx 0.035$ (dashed magenta curve). Mobility of the triple
junctions is $\eta=100$ and the misorientation parameter $\gamma=1$.  Grain boundary
energy density $\sigma=1+0.25\sin^4(2\Delta \alpha).$}\label{fig6}
\end{figure}

\begin{figure}[hbtp]
\centering
\vspace{-1.8cm}
\includegraphics[width=2.1in]{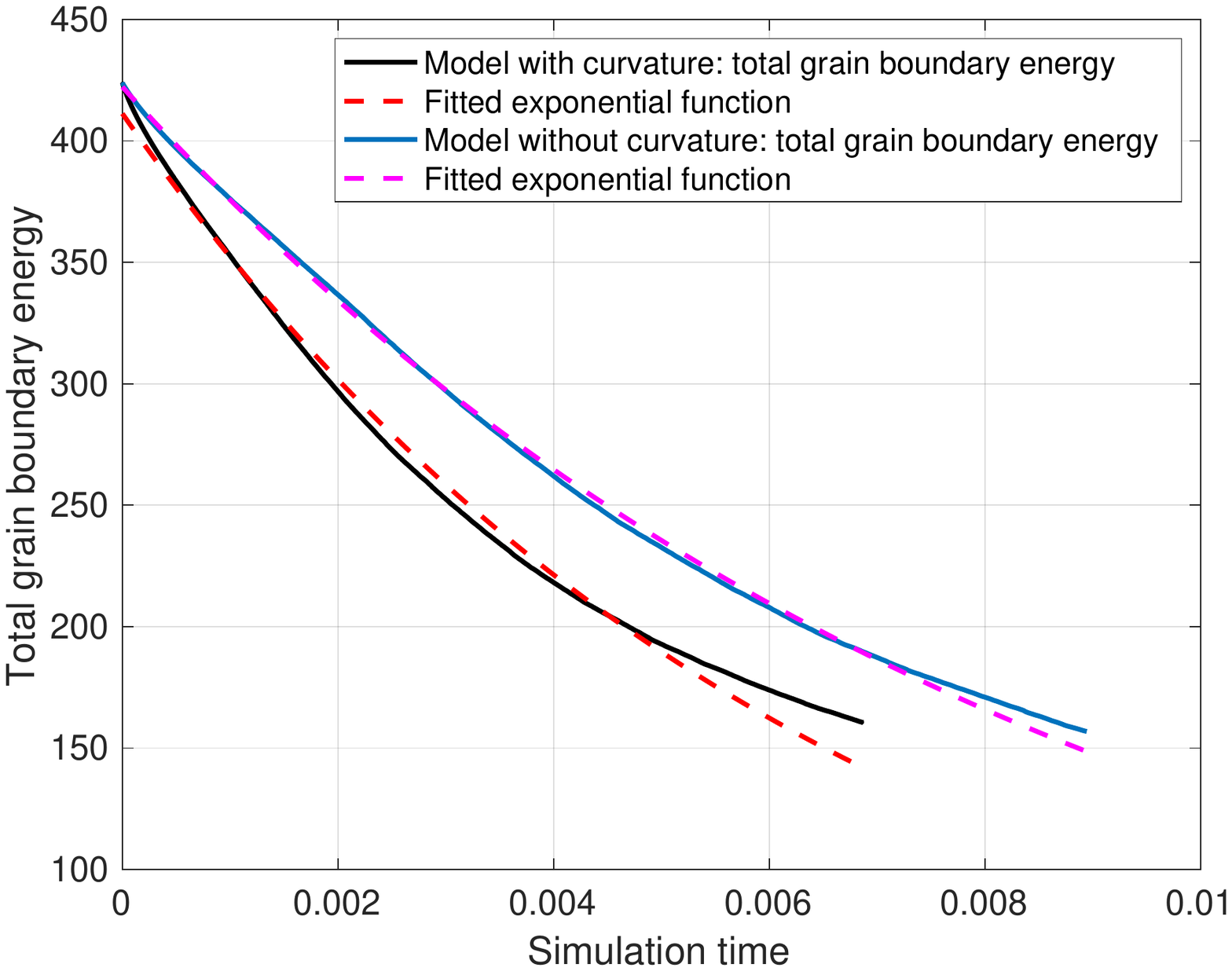}
\includegraphics[width=2.1in]{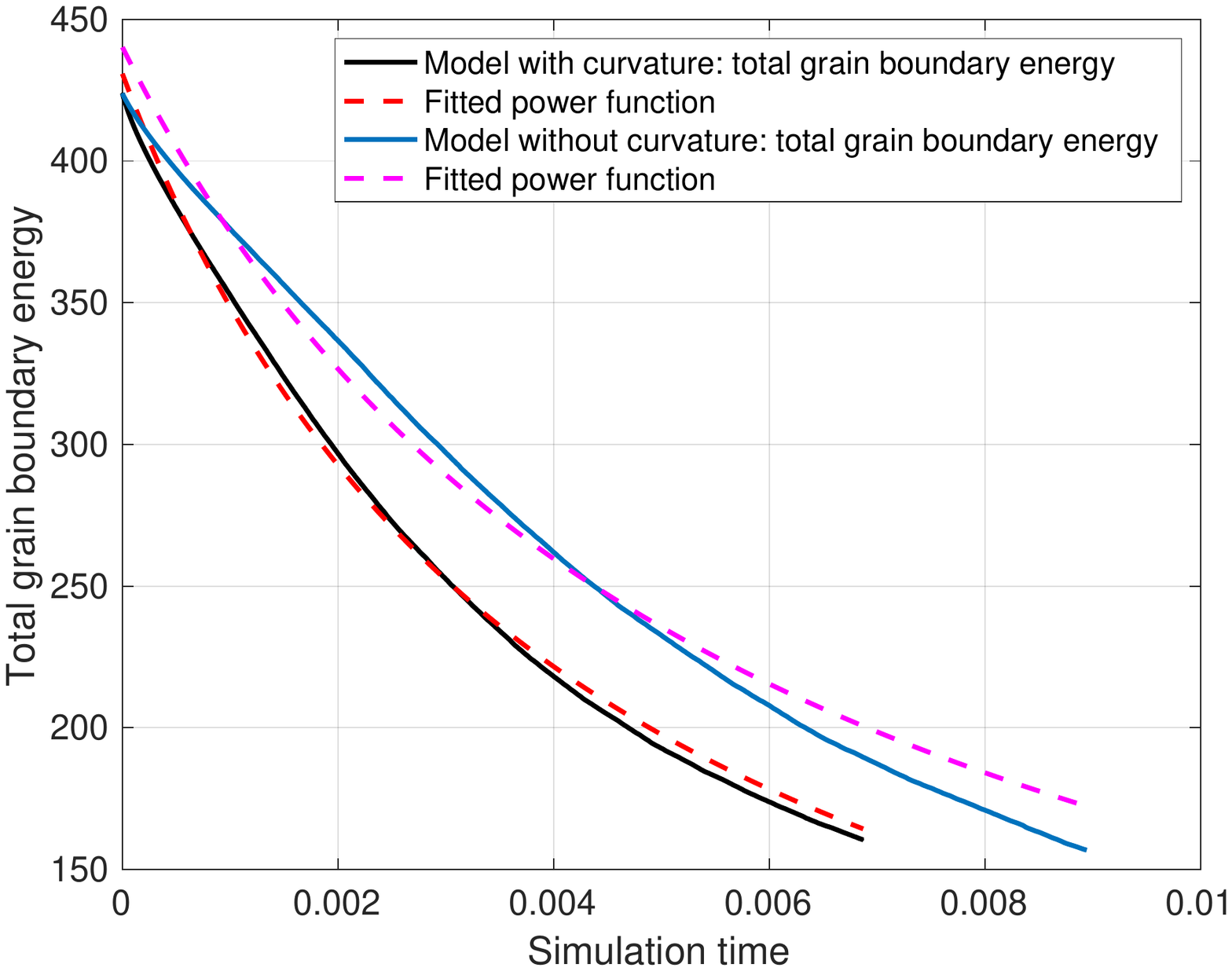}
\vspace{-1.8cm}
\caption{\footnotesize One run of $2$D trial with $10000$ initial
  grains: {\it (a) Left plot,} Total grain boundary energy plot, model
  with curvature  (solid black) versus fitted  exponential decaying function
  $y(t)=411.2\exp(-154.9t)$ (dashed red). Total grain boundary energy plot, model
  without curvature  (solid blue) versus fitted  exponential decaying function
  $y(t)=422.2\exp(-116.8t)$ (dashed magenta); {\it (b) Right
    plot},  Total grain boundary energy plot, model
  with curvature  (solid black) versus fitted  power function
  $y_1(t)=430.8310(1.0+236.0718t)^{-1}$ (dashed red). Total grain boundary energy plot, model
  without curvature  (solid blue) versus fitted power function
  $y_1(t)=440.1947(1.0+173.8526t)^{-1}$ (dashed magenta).
 Mobility of the triple
junctions is
$\eta=100$, the misorientation parameter $\gamma=250$ (curvature
model) and $\gamma=300$ (vertex model).  Grain boundary
energy density $\sigma=1+0.25\sin^2(2\Delta \alpha).$}\label{fig13}
\end{figure}

\begin{figure}[hbtp]
\centering
\vspace{-1.8cm}
\includegraphics[width=2.1in]{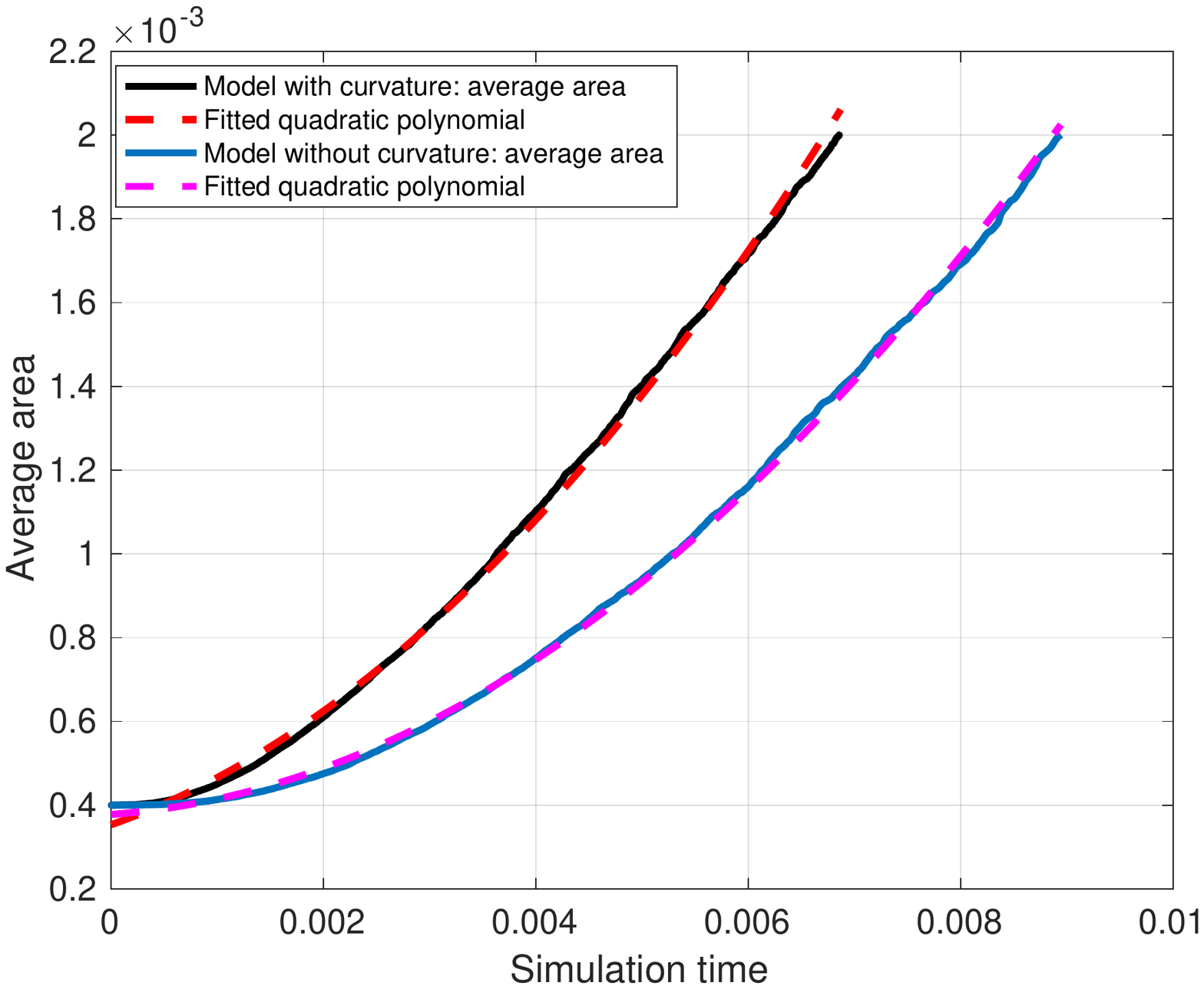}
\includegraphics[width=2.1in]{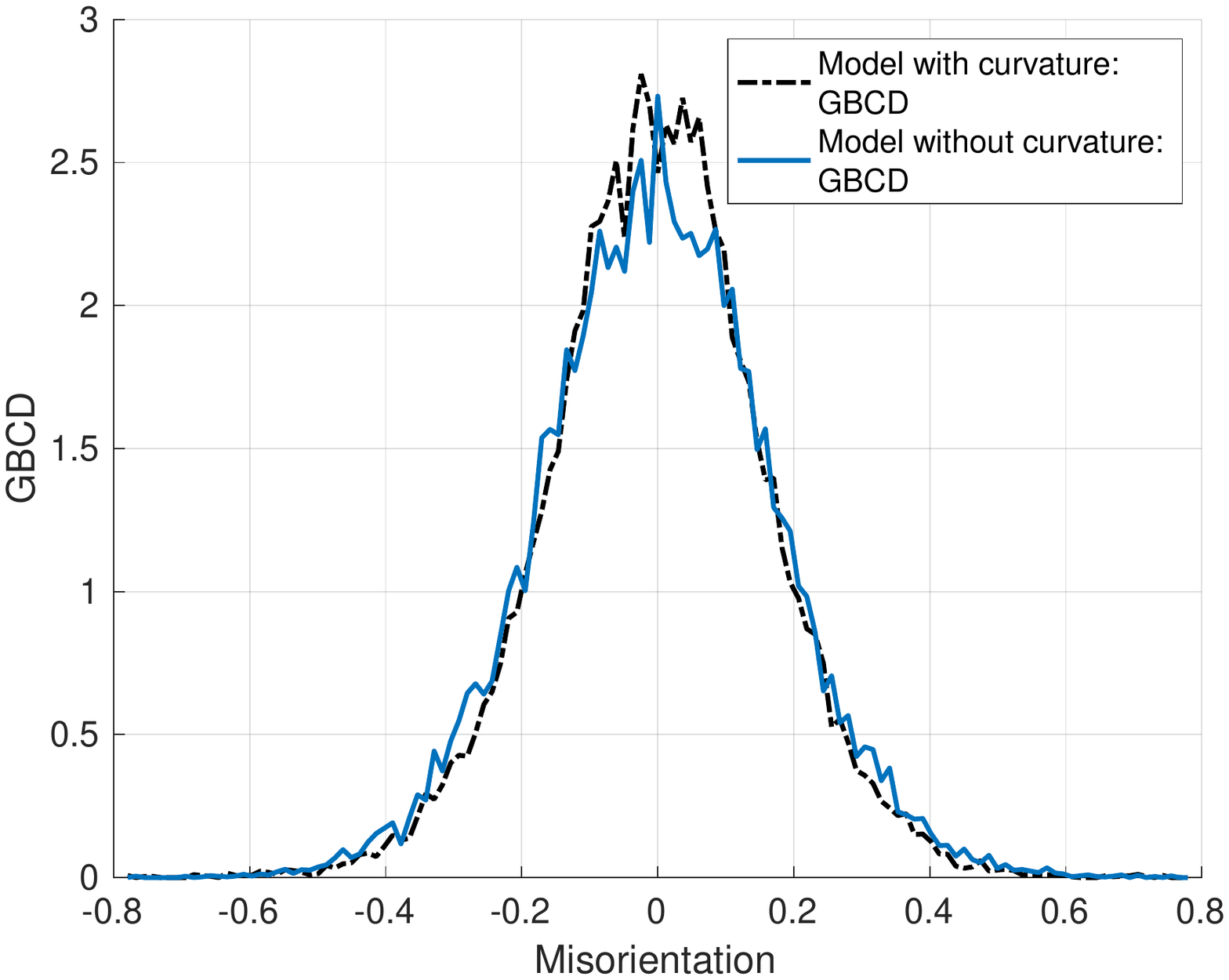}
\vspace{-1.8cm}
\caption{\footnotesize {\it (a) Left plot,} One run of $2$D trial with $10000$ initial
  grains: Growth of the average area of the
  grains, model with curvature (solid black) versus fitted  quadratic
  polynomial function  $y(t)=23.18t^2+0.08941t+0.0003532$ (dashed
  red). Growth of the average area of the
  grains, model without curvature (solid blue) versus fitted  quadratic
  polynomial function  $y(t)=18.56t^2+0.01824t+0.000378$ (dashed magenta);
 {\it (b) Right plot},  GBCD (black curve,  model with curvature) and  GBCD (blue curve,  model without curvature) at $T_{\infty}$ averaged over 3 runs of $2$D trials with $10000$ initial
  grains. Mobility of the triple junctions is $\eta=100$, the
  misorientation parameter $\gamma=250$ (curvature model) and
  $\gamma=300$ (vertex model).  Grain boundary
energy density $\sigma=1+0.25\sin^2(2\Delta \alpha).$}\label{fig14}
\end{figure}

\begin{figure}[hbtp]
\centering
\vspace{-1.8cm}
\includegraphics[width=2.1in]{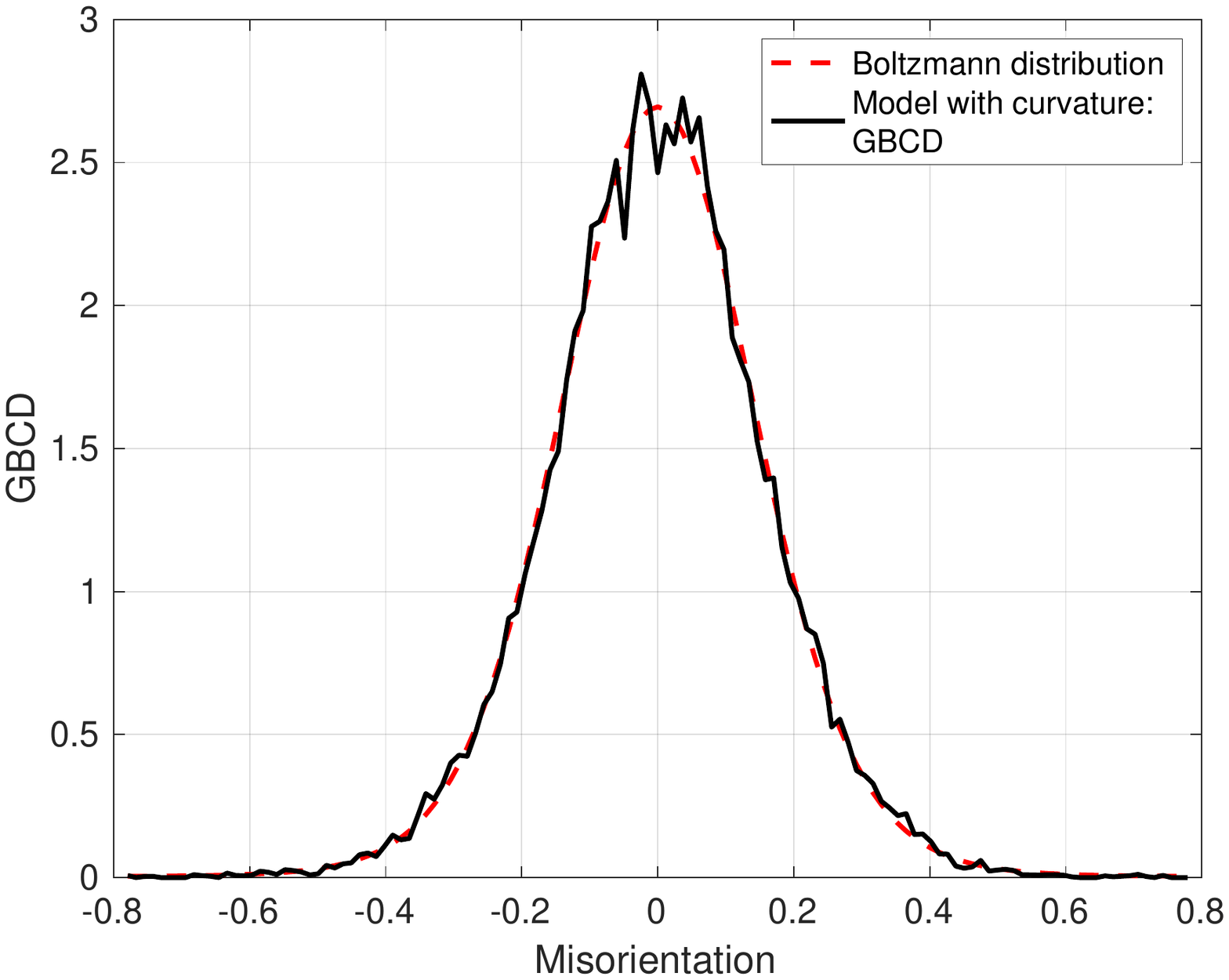}
\includegraphics[width=2.1in]{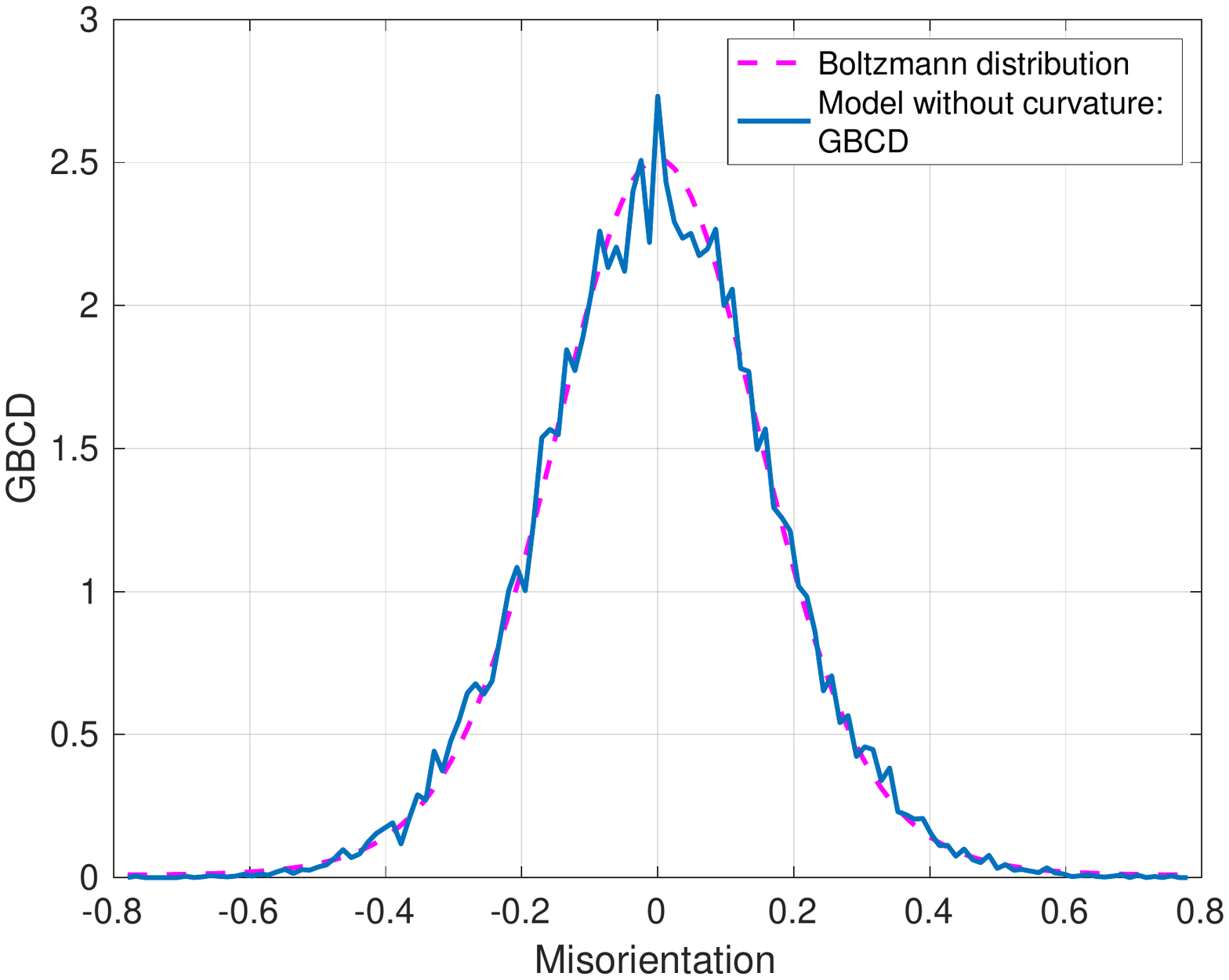}
\vspace{-1.8cm}
\caption{\footnotesize {\it (a) Left plot,} Model with curvature,  GBCD (black curve) at $T_{\infty}$ averaged over 3 runs of $2$D trials with $10000$ initial
  grains versus Boltzmann distribution with ``temperature''-
$D\approx 0.0397$ (dashed red curve). 
 {\it (b) Right plot}, Model without curvature, GBCD (blue curve) at $T_{\infty}$ averaged over 3 runs of $2$D trials with $10000$ initial
  grains versus Boltzmann distribution with ``temperature''-
$D\approx 0.0448$ (dashed magenta curve). Mobility of the triple
junctions is $\eta=100$, the misorientation parameter $\gamma=250$
(model with curvature) and $\gamma=300$ (model without curvature).  Grain boundary
energy density $\sigma=1+0.25\sin^2(2\Delta \alpha).$}\label{fig15}
\end{figure}

\begin{figure}[hbtp]
\centering
\vspace{-1.8cm}
\includegraphics[width=2.1in]{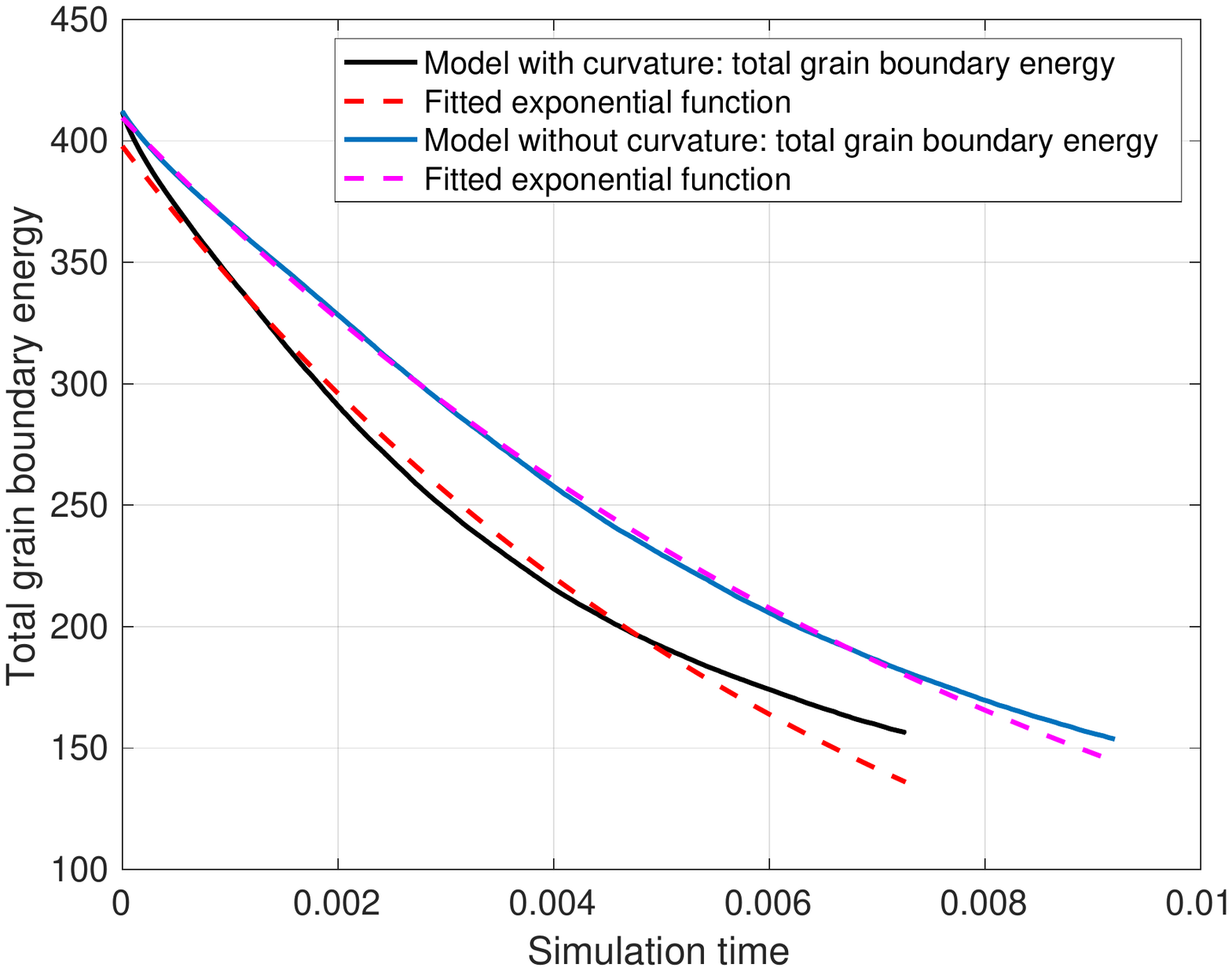}
\includegraphics[width=2.1in]{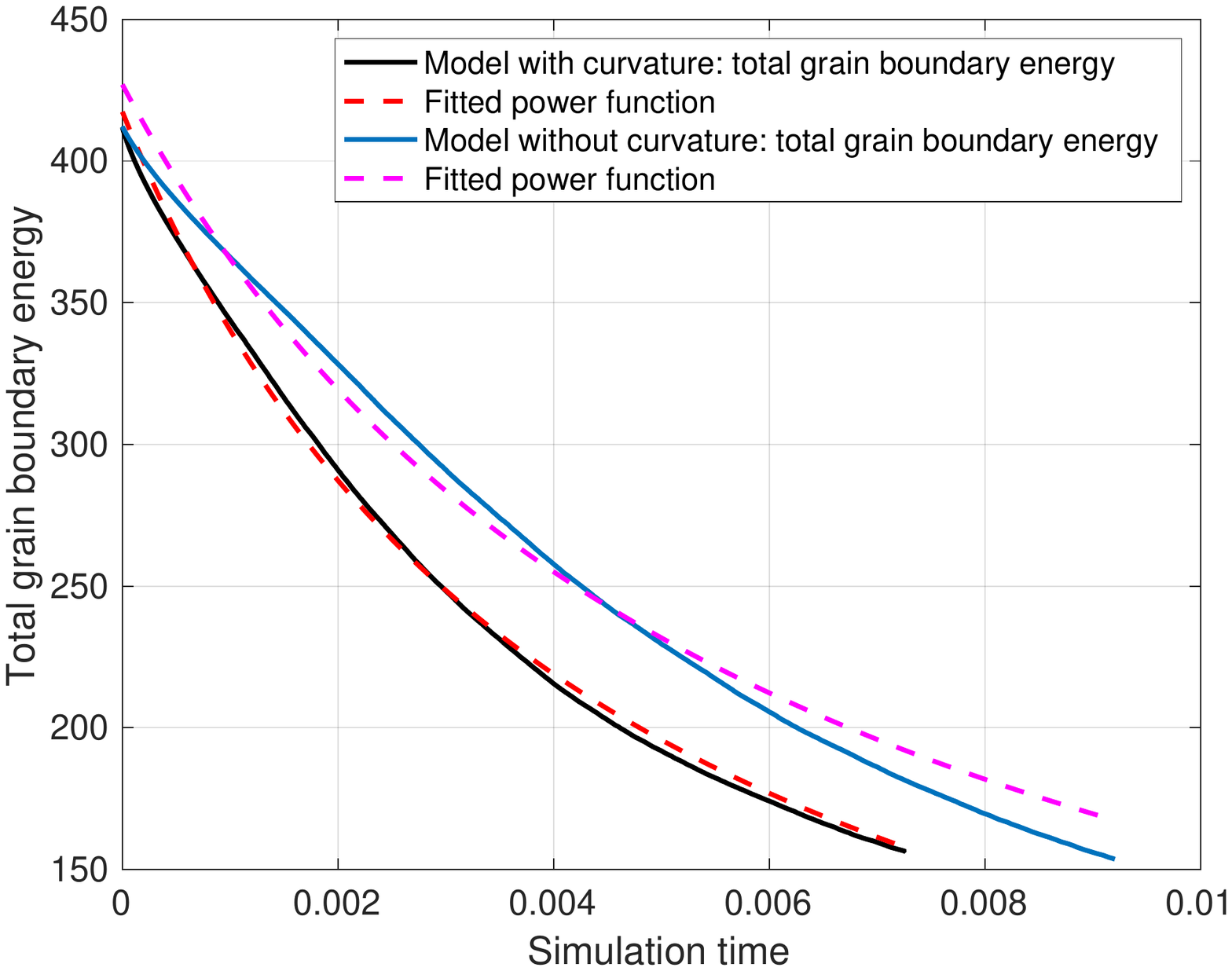}
\vspace{-1.8cm}
\caption{\footnotesize One run of $2$D trial with $10000$ initial
  grains: {\it (a) Left plot,} Total grain boundary energy plot, model
  with curvature  (solid black) versus fitted  exponential decaying function
  $y(t)=397.8\exp(-147.8t)$ (dashed red). Total grain boundary energy plot, model
  without curvature  (solid blue) versus fitted  exponential decaying function
  $y(t)=409.5\exp(-113.2t)$ (dashed magenta); {\it (b) Right
    plot},  Total grain boundary energy plot, model
  with curvature  (solid black) versus fitted  power function
  $y_1(t)=417.4031(1.0+226.6032t)^{-1}$ (dashed red). Total grain boundary energy plot, model
  without curvature  (solid blue) versus fitted power function
  $y_1(t)=427.0061(1.0+168.5772t)^{-1}$ (dashed magenta)
 Mobility of the triple
junctions is
$\eta=100$, the misorientation parameter $\gamma=1000$ (curvature
model) and $\gamma=1500$ (vertex model).  Grain boundary
energy density $\sigma=1+0.25\sin^4(2\Delta \alpha).$}\label{fig16}
\end{figure}

\begin{figure}[hbtp]
\centering
\vspace{-1.8cm}
\includegraphics[width=2.1in]{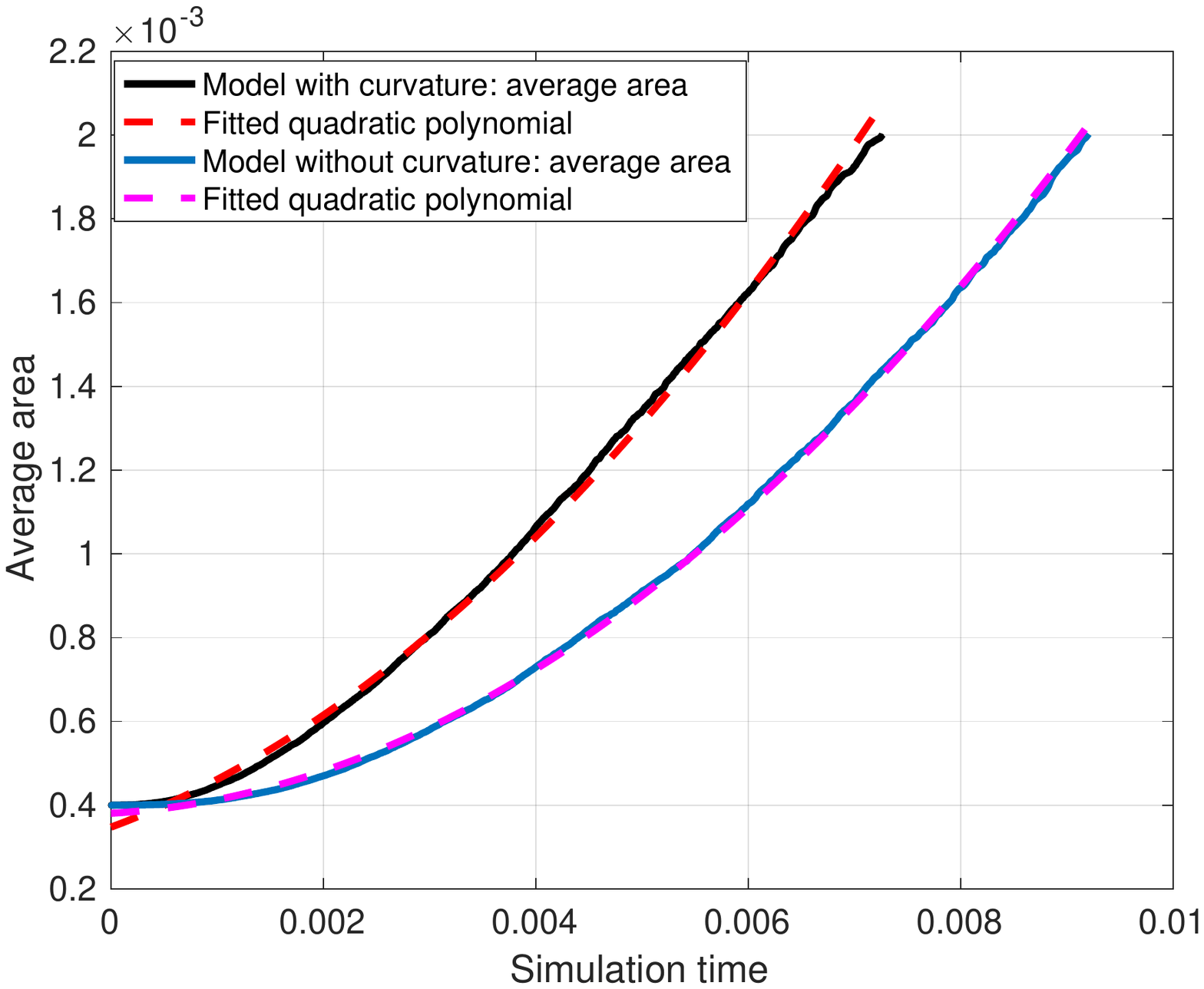}
\includegraphics[width=2.1in]{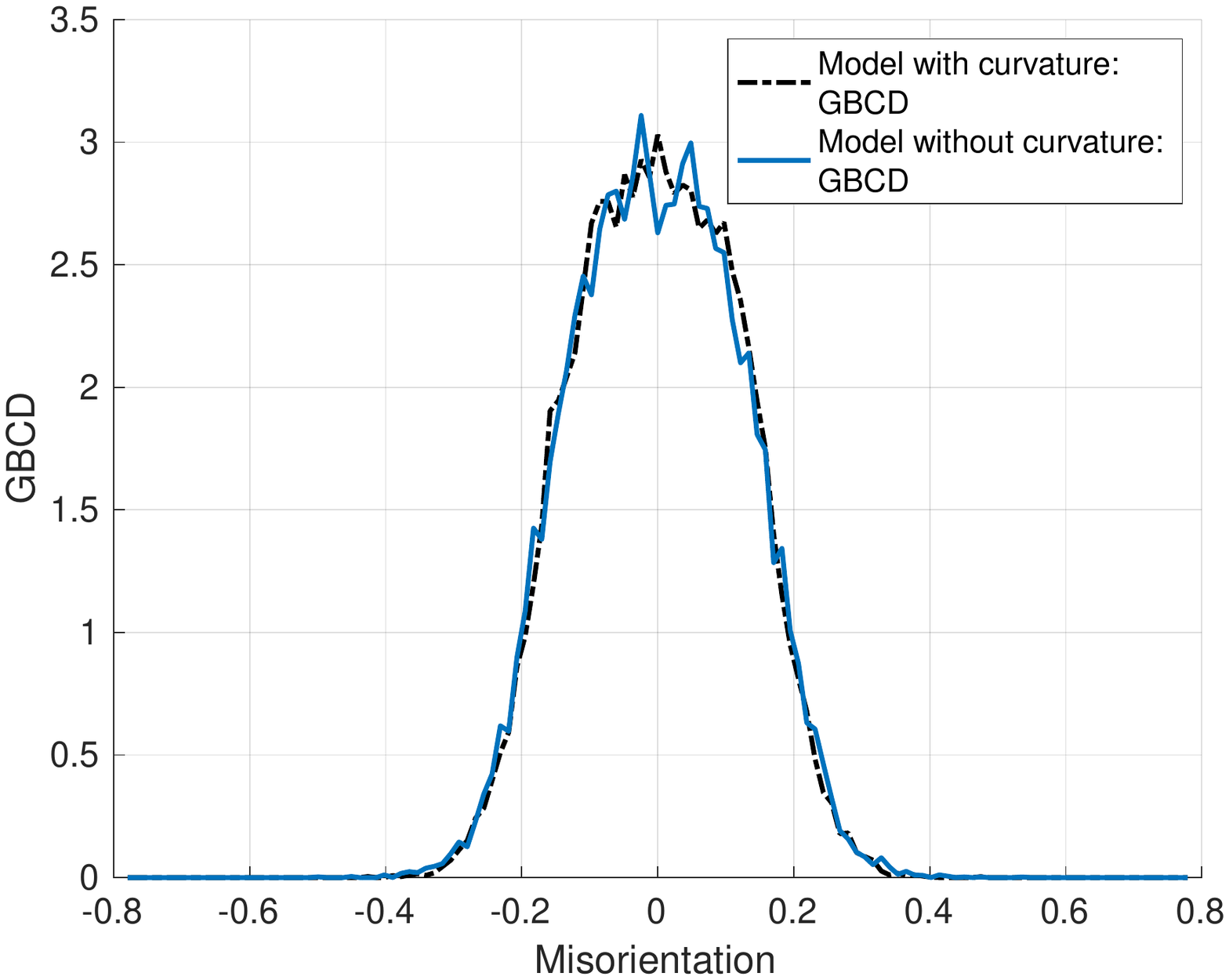}
\vspace{-1.8cm}
\caption{\footnotesize {\it (a) Left plot,} One run of $2$D trial with $10000$ initial
  grains: Growth of the average area of the
  grains, model with curvature (solid black) versus fitted  quadratic
  polynomial function  $y(t)=19.81t^2+0.09408t+0.0003476$ (dashed
  red). Growth of the average area of the
  grains, model without curvature (solid blue) versus fitted  quadratic
  polynomial function  $y(t)=17.81t^2+0.01484t+0.0003807$ (dashed magenta);
 {\it (b) Right plot}, GBCD (black curve,  model with curvature) and  GBCD (blue curve,  model without curvature) at $T_{\infty}$ averaged over 3 runs of $2$D trials with $10000$ initial
  grains. Mobility of triple junctions is $\eta=100$,  the
  misorientation parameter $\gamma=1000$ (curvature model) and
  $\gamma=1500$ (vertex model).  Grain boundary
energy density $\sigma=1+0.25\sin^4(2\Delta \alpha).$}\label{fig17}
\end{figure}

\begin{figure}[hbtp]
\centering
\vspace{-1.8cm}
\includegraphics[width=2.1in]{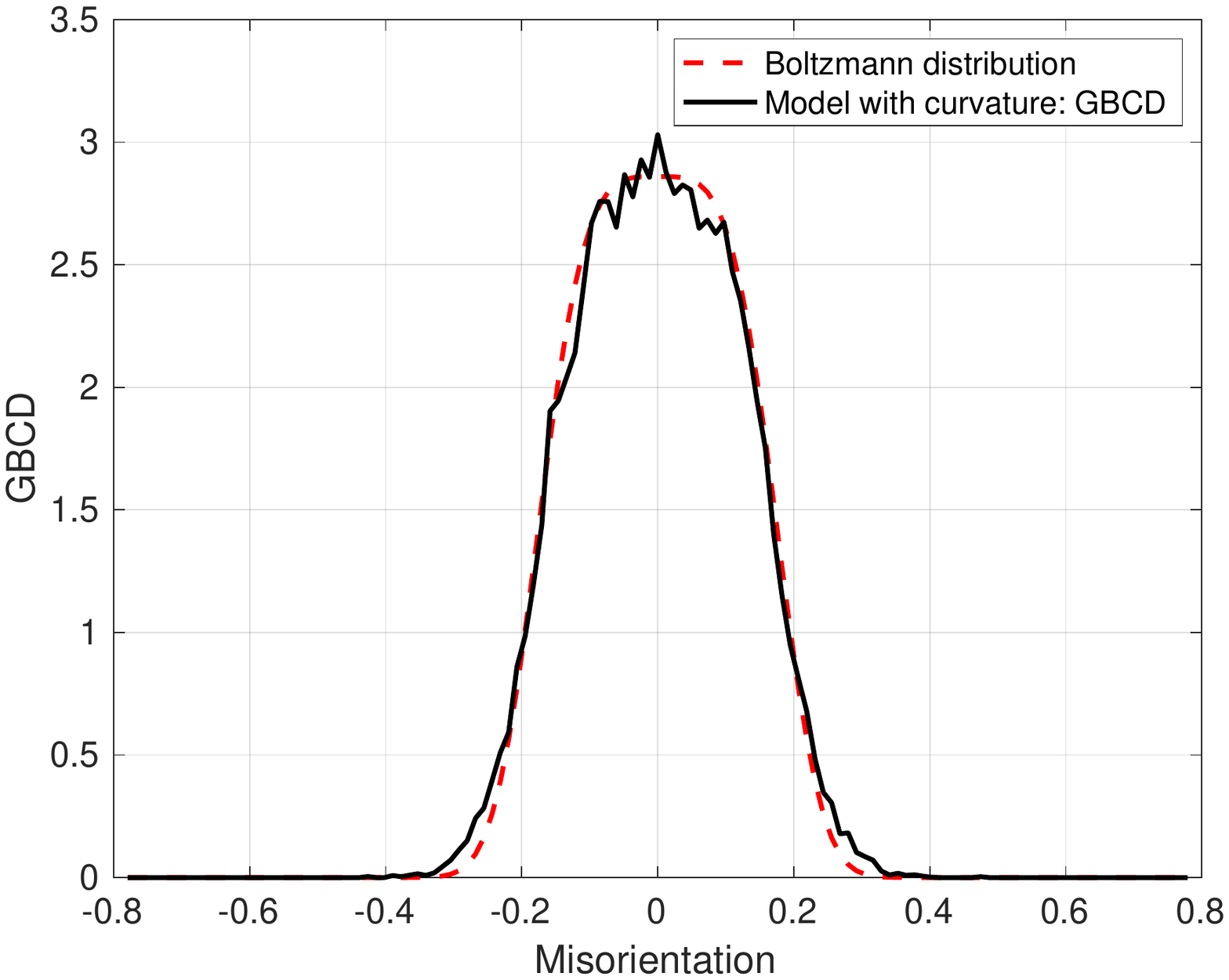}
\includegraphics[width=2.1in]{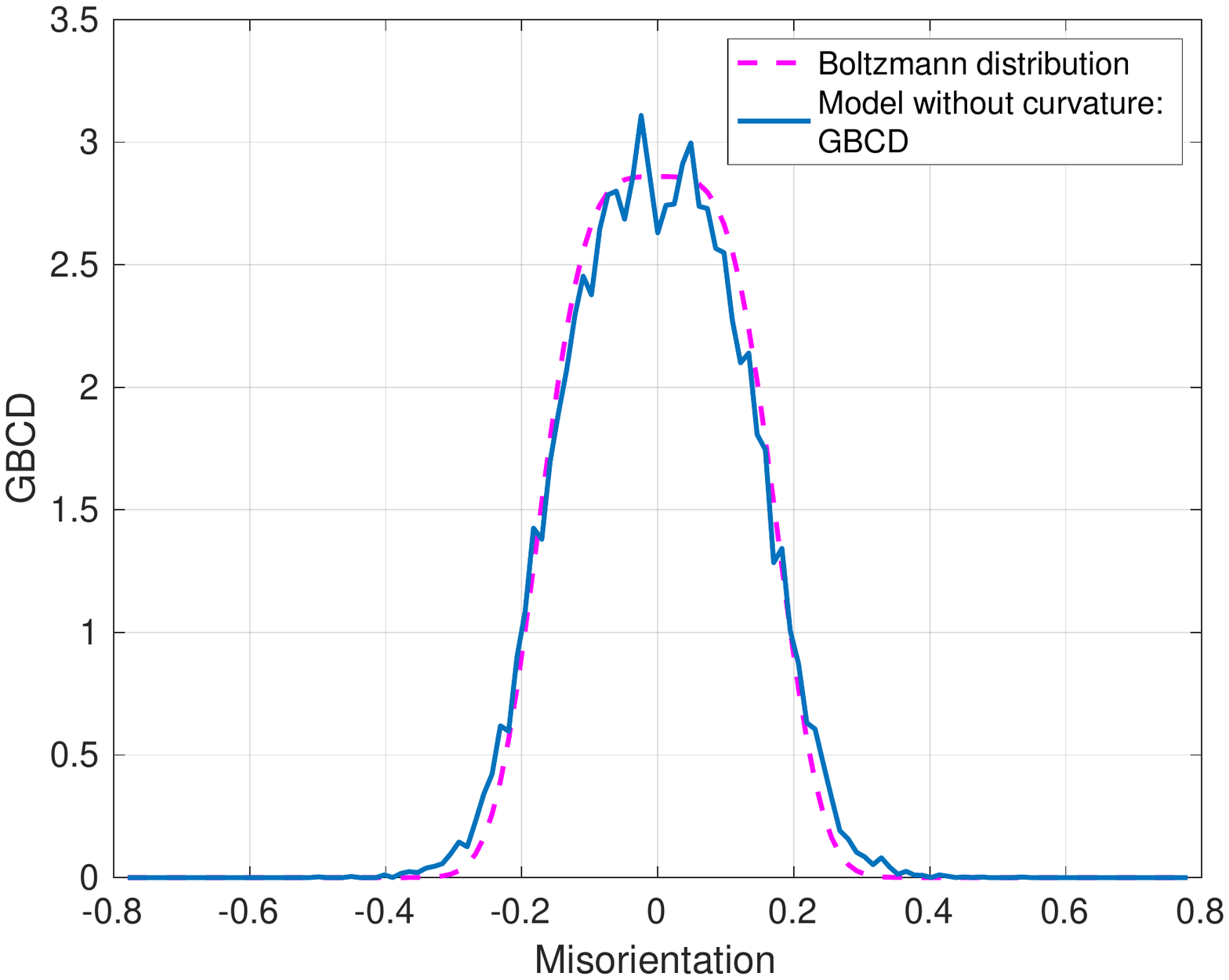}
\vspace{-1.8cm}
\caption{\footnotesize {\it (a) Left plot,} Model with curvature,  GBCD (black curve) at $T_{\infty}$ averaged over 3 runs of $2$D trials with $10000$ initial
  grains versus Boltzmann distribution with ``temperature''-
$D\approx 0.005$ (dashed red curve). 
 {\it (b) Right plot}, Model without curvature,  GBCD (blue curve) at $T_{\infty}$ averaged over 3 runs of $2$D trials with $10000$ initial
  grains versus Boltzmann distribution with ``temperature''-
$D\approx 0.005$ (dashed magenta curve). Mobility of the triple
junctions is $\eta=100$,  the misorientation parameter $\gamma=1000$
(model with curvature) and $\gamma=1500$
(model without curvature).  Grain boundary
energy density $\sigma=1+0.25\sin^4(2\Delta \alpha).$}\label{fig18}
\end{figure}

\section*{Conclusion}\label{sec:Con}
\par In this work, we conducted extensive numerical studies of the two
models developed in \cite{Katya-Chun-Mzn1, Katya-Chun-Mzn2}: a model with curved grain boundaries and a model without curvature/''vertex
 model'' of planar grain boundaries network with the dynamic lattice misorientations
 and with the drag of triple junctions.  The goal of our study was to further
 understand the effect of relaxation time scales, e.g. of the curvature of grain boundaries, mobility of triple junctions, and dynamics of misorientations on how the grain boundary system decays energy and coarsens with time. We also presented and discussed relevant experimental results of grain growth in thin films.
\section*{Acknowledgments}\label{sec:Ack}

The authors are grateful to David Kinderlehrer for the fruitful
discussions, inspiration and motivation of the work. Matthew Patrick and Amirali Zangiabadi are thanked for assistance with the experimental work.  Katayun Barmak acknowledges
partial support of NSF DMS-1905492, Yekaterina Epshteyn acknowledges
partial support of NSF DMS-1905463,  Chun Liu acknowledges partial
support of NSF DMS-1759535
and of NSF DMS-1950868, and Masashi Mizuno
acknowledges partial support of JSPS KAKENHI Grant No. 18K13446.

\bibliographystyle{plain}
\bibliography{references}
\end{document}